\newcommand{\comment}[2]{}
\newcommand{\boxit}[1]{\begin{center}\framebox[0.85\textwidth]{\parbox{0.80\textwidth}{#1}}\end{center}}
\newtheorem{theorem}{Theorem}[section]
\newtheorem{lemma}[theorem]{Lemma}
\newtheorem{claim}[theorem]{Claim}
\newtheorem{observation}[theorem]{Observation}
\theoremstyle{remark}
\newtheorem{remark}[theorem]{Remark}
\theoremstyle{definition}
\newtheorem{reduction}[theorem]{Reduction}
\newcommand{\Real}{\mathbb R}
\DeclareMathOperator*{\E}{\mathbb{E}}
\newcommand{\eps}{\varepsilon}
\newcommand{\SD}{d_{\text{TV}}}
\newcommand{\G}{\mathcal{G}}
\title{Inapproximability of\\ NP-Complete Variants of Nash Equilibrium}
\author{Per Austrin \and Mark Braverman \and Eden Chlamt\'a\v{c}}
\date{University of Toronto, Toronto, Canada\\\texttt{\{austrin,mbraverm,eden\}@cs.toronto.edu}}
\begin{document}

\maketitle

\begin{abstract} In recent work of Hazan and Krauthgamer (SICOMP 2011), it was shown
that finding an $\eps$-approximate Nash equilibrium with near-optimal
value in a two-player game is as hard as finding a hidden clique of
size $O(\log n)$ in the random graph $G(n,\frac12)$.  This raises the
question of whether a similar intractability holds for approximate
Nash equilibrium without such constraints.  We give evidence that the
constraint of near-optimal value makes the problem distinctly harder:
a simple algorithm finds an optimal $\frac{1}{2}$-approximate equilibrium, while finding strictly better than $\frac12$-approximate equilibria
is as hard as the Hidden Clique problem.  This is in
contrast to the unconstrained problem where more sophisticated
algorithms, achieving better approximations, are known.



Unlike general Nash equilibrium, which is in PPAD, optimal (maximum
value) Nash equilibrium is NP-hard. We proceed to show that optimal
Nash equilibrium is just one of several known NP-hard problems related
to Nash equilibrium, all of which have approximate variants which are
as hard as finding a planted clique. In particular, we show this for
approximate variants of the following problems: finding a Nash
equilibrium with value greater than $\eta$ (for any $\eta>0$, even
when the best Nash equilibrium has value $1-\eta$), finding a second
Nash equilibrium, and finding a Nash equilibrium with small support.

Finally, we consider the complexity of approximate pure Bayes
Nash equilibria in two-player games.  Here we show that for general
Bayesian games the problem is NP-hard.  For the special case where the
distribution over types is uniform, we give a quasi-polynomial time
algorithm matched by a hardness result based on the Hidden Clique
problem.


\end{abstract}



\section{Introduction}
The classical notion of Nash equilibrium is the most fundamental
concept in the theory of non-cooperative games. In recent years, there
has been much work on the complexity of finding a Nash equilibrium in
a given game. In particular, a series of hardness results culminated
in the work of Chen et.\ al~\cite{CDT09}, who showed that even for
two-player (bimatrix) games, the problem of computing a Nash
equilibrium is PPAD-complete, and thus unlikely to be solvable in
polynomial time.

Therefore, it makes sense to consider the complexity of approximate
equilibria. In particular, a notion which has emerged as the focus of
several works is that of an \emph{$\eps$-approximate Nash
equilibrium}, or $\eps$-equilibrium for short, where neither player
can gain more than $\eps$ (additively) by defecting to a different
strategy (without loss of generality, all payoffs are scaled to lie in
the interval $[0,1]$).  A straightforward sampling argument of Lipton
et al.~\cite{LMM03} shows that in every game, there exist
$\eps$-equilibria with support $O(\log n/\eps^2)$, and so they can be
found in quasi-polynomial time $n^{O(\log n / \eps^2)}$ by exhaustive
search.

On the other hand, finding good polynomial time approximations has
proved more challenging. While finding a $\frac12$-equilibrium turns
out to be quite simple~\cite{DMP09}, more complicated algorithms have
given a series of improvements~\cite{DMP07,BBM10,TS08}, where the current best known is the
$0.3393$-equilibrium shown by Tsaknakis and Spirakis~\cite{TS08}.  A
major open question in this area is whether or not there exists a PTAS
for Nash Equilibrium (note that the algorithm of Lipton et al.\ gives
a quasi-polynomial time approximation scheme for the problem).

Recently, Hazan and Krauthgamer~\cite{HK11} have attempted to provide
evidence for the optimality of the QPTAS of Lipton et
al.~\cite{LMM03}, by showing a reduction from a well-studied and
seemingly intractable problem (which can also be solved in
quasi-polynomial time) to the related problem of finding an
$\eps$-equilibrium with near maximum value (the value of an
equilibrium is the average of the payoffs of the two players).


The problem they reduce from is the \emph{Hidden Clique Problem}:
Given a graph sampled from $G(n,\frac12)$ with a planted (but hidden)
clique of size $k$, find the planted clique.  Since with high
probability the maximum clique in $G(n,\frac12)$ is of size
$(2-o(1))\log n$, it is easy to see that for constant $\delta>0$, one
can distinguish between $G(n,\frac12)$ and $G(n,\frac12)$ with a
planted clique of size $k>(2+\delta)\log n$ in quasi-polynomial time
by exhaustive search over all subsets of $(2+\delta)\log n$
vertices. It is also not hard to extend this to an algorithm which
finds the hidden clique in quasi-polynomial time. 

On the other hand, the best known polynomial time algorithm, due to
Alon et al.~\cite{AKS98} only finds cliques of size $\Omega(\sqrt
n)$. In fact, Feige and Krauthgamer~\cite{FK03} show that even
extending this approach by using the Lov\'{a}sz-Schrijver SDP
hierarchy, one still requires $\Omega(\log n)$ levels of the hierarchy
(corresponding to $n^{\Omega(\log n)}$ running time to solve the SDP)
just to find a hidden clique of size $n^{1/2-\eps}$. The only possible
approach we are aware of for breaking the $\Omega(\sqrt{n})$ barrier
would still (assuming certain conjectures) only discover cliques of
size $\Omega(n^c)$ for some constant $c>0$~\cite{FK08,BV09}.

Hazan and Krauthgamer show that finding a near-optimal
$\eps$-equilibrium is as hard as finding hidden cliques of size
$C \log n$, for some universal constant $C$.  Here, by near-optimal we
mean having value close to maximum possible value obtained in an
actual Nash equilibrium.  Subsequently, Minder and
Vilenchik~\cite{MV09} then improved this hardness to planted cliques
of size $(2+\delta) \log n$ for arbitrarily small $\delta >
0$.\footnote{There is a small caveat: the reduction of~\cite{MV09}
only certifies the presence of a hidden clique (i.e.\ distinguishes
the graph from the random graph $G(n,\frac12)$ w.h.p.)  but does not
identify the vertices of the clique.} Here, we will rely on the
hardness assumption for hidden cliques of size $C\log n$ for any
constant $C$, and will not concern ourselves with optimizing the value
of $C$.

\subsection{A Sharp Result For Near-Optimal Approximate Nash}

It is important to note that the problem considered in~\cite{HK11} is
not equivalent to finding an unconstrained $\eps$-equilibrium.  In
light of the results of \cite{HK11,MV09} it is natural to ask to what
extent the hardness for near-optimal approximate equlibrium gives an
indication of hardness for unconstrained approximate equilibrium.
Indeed, \cite{HK11}, in their concluding remarks, ask whether their
methods can be used to rule out a PTAS for unconstrained Nash
equilibrium.  One of the messages of this paper is that these two
problems are quite different in terms of approximability and that one
should not yet be overly pessimistic about the possibility for a PTAS
for unconstrained Nash equilibrium.  Indeed, while there is a a
polynomial time algorithm to find a $0.3393$-equilibrium, we show that
finding a near-optimal $(\frac{1}{2}-\eta)$-equilibrium is hard.

\begin{theorem}[Informal]
  \label{thm:eps-hardness} 
  For every $\eta>0$, finding a near-optimal
  $(\frac{1}{2}-\eta)$-approximate equilibrium is as hard as finding a hidden
  clique of size $C \log n$ in $G(n, \frac{1}{2})$.
\end{theorem}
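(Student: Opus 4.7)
The plan is a polynomial-time reduction from Hidden Clique with planted clique size $C\log n$ for $C>2$. Given a graph $G$ on $n$ vertices (possibly with such a planted clique), I would construct a two-player bimatrix game $\Gamma(G)$ in the spirit of the Hazan--Krauthgamer construction, with strategies corresponding to vertices of $G$ and payoffs encoding the edge structure, engineered so that (a) uniform play on a clique $K$ of size $k$ is an exact Nash equilibrium of value approximately $1-1/k$, and (b) the maximum equilibrium value tracks $1-1/\omega(G)$, where $\omega(G)$ is the clique number. The naive ``edge-indicator'' payoff game does not satisfy (b)---any single-edge pure strategy profile already attains value $1$---so a mild asymmetry or augmentation of the strategy space (for instance, adding ``verification'' strategies whose best responses force high-support distributions) will be needed to ensure that only clique-supported mixed strategies attain near-maximum value.

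For completeness, a planted clique $K$ of size $k=C\log n$ yields an exact equilibrium of value $1-1/k = 1-o(1)$. Since w.h.p.\ $G(n,\tfrac12)$ contains no unplanted clique of size greater than $(2+o(1))\log n$, for $C>2$ the planted clique realizes the optimum, and hence the clique-equilibrium is near-optimal.

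For soundness, the core claim to prove is: any $(\tfrac12-\eta)$-approximate equilibrium $(x,y)$ of $\Gamma(G)$ whose value lies within (say) $\eta/2$ of the maximum equilibrium value must have its support covering a clique of size $\Omega(\log n)$, from which a clique of size $C\log n$ can be extracted (via sampling, pruning, or exhaustive search on the sparsified support). The near-maximum value forces $\sum_{\{i,j\}\in E} x_i y_j = 1-o(1)$, placing almost all product mass on edges; the $(\tfrac12-\eta)$ approximation parameter plays the sharp role, since $\tfrac12$-equilibria of value $\tfrac12$ can be built trivially with tiny support (in the style of the DMP algorithm), and $\tfrac12$ is precisely the threshold beyond which genuine combinatorial clique structure must appear in any near-optimal approximate equilibrium. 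Since in $G(n,\tfrac12)$ without planting no clique larger than $(2+o(1))\log n$ exists w.h.p., recovering a clique of size $C\log n$ for $C>2$ suffices to certify or identify the planted clique.

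The main obstacle will be this soundness argument, namely quantitatively linking the approximation parameter $\tfrac12-\eta$ to forced clique structure in the support. The difficulty is that an approximate equilibrium may place substantial mass on strategies with best-response gap up to $\tfrac12-\eta$, giving considerable slack in how weight can be distributed without violating the equilibrium condition. A natural approach is to first sparsify to a logarithmic-support $(\tfrac12-\eta/2)$-approximate equilibrium via \cite{LMM03}, then use a Tur\'{a}n-type combinatorial argument---exploiting the near-maximality of the value to bound the number of non-edges inside the support---to extract a clique of size $\Omega(\log n)$. The sharpness of the threshold $\tfrac12$ matches the trivial $\tfrac12$-equilibrium algorithm and strongly suggests that the structural argument should close up exactly at this approximation level.
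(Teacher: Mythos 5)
Your high-level framing is sound: reduce from Hidden Clique by encoding the graph into a bimatrix game where clique-play is a good equilibrium, and add auxiliary ``verification'' strategies to penalize low-support play. The paper's reduction does roughly this (the off-diagonal random $B, B^{\top}$ blocks and the $\gamma J$ block in Reduction~\ref{red:reduction} serve the role of your verification gadget). But there is a genuine gap in your soundness argument that would make the proof fail as written.

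Your plan is to sparsify the approximate equilibrium to logarithmic support via~\cite{LMM03} and then apply a ``Tur\'{a}n-type combinatorial argument'' to extract a clique of size $\Omega(\log n)$ from the sparsified support. Tur\'{a}n's theorem cannot do this: in an $m$-vertex graph of edge density $d$, it only guarantees a clique of size about $1/(1-d)$, which is a constant independent of $m$. So even if you establish density, say, $5/9$ or $9/10$ on a support of size $\Theta(\log n)$, Tur\'{a}n gives you a constant-size clique, nowhere near the $C\log n$ you need. The missing ingredient is that one must not extract a clique purely combinatorially, but rather exploit the probabilistic structure of the hidden-clique instance: w.h.p.\ over $G(n,\frac12)$ with a planted clique, \emph{any} pair of logarithmic-size vertex sets $(S_1,S_2)$ with bipartite density bounded away from $\frac12$ must overlap heavily with the planted clique, and from that overlap the full clique can be reconstructed (this is Lemma~\ref{lemma:reconstruct}, imported from~\cite{HK11}). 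The paper's soundness argument accordingly does not try to find a clique in the support at all: it uses the game structure to force most mass into the $\alpha A$ block with high conditional value (Lemma~\ref{lem:weight-in-A} and Observation~\ref{obs:value-after-condition}), then applies Markov's inequality twice to produce two logarithmic-size sets of bipartite density $\geq 5/9$ (Lemma~\ref{lem:dense-subgraph}), and hands those to the HK recovery lemma. Without this last step your reduction would need some other way to go from ``dense on the support'' to ``find the planted clique,'' and that is exactly the non-trivial part.

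A secondary, more minor mismatch: you design your game so that the optimal value is $1-1/k=1-o(1)$, whereas the paper scales the graph block by $\alpha=\frac12+O(\eta)$ so the optimal value sits just above $\frac12$. This choice is not cosmetic: the $\gamma J$ sink block (with $\gamma\le\frac12$) and the value-at-most-$\frac12$ property of everything outside $\alpha A$ are what make Lemma~\ref{lem:weight-in-A} work with a one-line computation, and what let the approximation threshold land exactly at $\frac12-\eta$, matching the algorithm of Theorem~\ref{thm:1/2-algo}. With a value-$1-o(1)$ construction you would need a different argument to force mass into the clique block, and it is not clear you would hit the sharp $\frac12$ threshold.
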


As mentioned above, there is a simple polynomial time algorithm to
find a $\frac12$-equilibrium, and we show that this algorithm can be
extended to find a $\frac{1}{2}$-equilibrium with value at least that
of the best exact equilibrium:

\begin{theorem}[Informal]
  \label{thm:1/2-algo}
  There exists a polynomial time algorithm to find a
  $\frac{1}{2}$-approximate equilibrium with value at least that of
  the optimal true equilibrium.
\end{theorem}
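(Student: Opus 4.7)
The plan is to extend the simple $\frac{1}{2}$-equilibrium algorithm of Daskalakis, Mehta, and Papadimitriou~\cite{DMP09}, which given any pure strategy $i$ for Player~1 computes $j$ as a best response for Player~2 to $i$ and $k$ as a best response for Player~1 to $j$, outputting the $\frac{1}{2}$-equilibrium $(\frac{1}{2}(i+k), j)$. The extension is to enumerate over pure strategies and for each solve a linear program (LP) for the maximum-value $\frac{1}{2}$-equilibrium with that pure anchor. Specifically, for each pure $i \in [n]$ of Player~1, solve
\begin{align*}
\max_\tau \quad & \tfrac{1}{2}\bigl(u_1(i,\tau)+u_2(i,\tau)\bigr)\\
\text{s.t.}\quad & u_1(i',\tau) \le u_1(i,\tau)+\tfrac{1}{2} \quad \forall i',\\
& u_2(i,j) \le u_2(i,\tau)+\tfrac{1}{2} \quad \forall j,\\
& \tau \text{ is a distribution over } [n].
\end{align*}
These constraints exactly express that $(i,\tau)$ is a $\frac{1}{2}$-approximate equilibrium. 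Solve the analogous LP for each pure strategy of Player~2, and output the maximum-value candidate across the $2n$ LPs.

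To show that some LP attains value $\geq v^*$, let $(\sigma^*,\tau^*)$ be the optimal Nash equilibrium, with $v_p = u_p(\sigma^*,\tau^*)$ and $v^* = (v_1+v_2)/2$. By averaging over $\mathrm{supp}(\sigma^*)$, pick $i^* \in \mathrm{supp}(\sigma^*)$ with $u_2(i^*, \tau^*) \ge v_2$; the Nash best-response condition gives $u_1(i^*, \tau^*) = v_1$. In the LP for $i^*$, the candidate $\tau = \tau^*$ satisfies Player~1's constraints (since $u_1(i', \tau^*) \le v_1$ by the best-response property) and satisfies Player~2's constraints whenever $\max_j u_2(i^*, j) - u_2(i^*, \tau^*) \le \frac{1}{2}$, which holds in particular whenever $u_2(i^*, \tau^*) \ge \frac{1}{2}$. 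Thus when $v_2 \ge \frac{1}{2}$, the LP value is at least $(v_1 + u_2(i^*, \tau^*))/2 \ge v^*$; by symmetry the same conclusion holds whenever $v_1 \ge \frac{1}{2}$.

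The main obstacle is the low-value regime $v_1, v_2 < \frac{1}{2}$ (so $v^* < \frac{1}{2}$), where neither specialization directly handles $\tau = \tau^*$ via the above argument. Here I would exploit the fact that the pure profile $(i_0, j_0)$ maximizing $u_1 + u_2$ has social welfare at least $2v^*$, and argue by case analysis on whether $u_1(i_0, j_0), u_2(i_0, j_0) \ge \frac{1}{2}$: if both, then $(i_0, j_0)$ itself is already a $\frac{1}{2}$-equilibrium with value $\ge v^*$; otherwise, an appropriate one- or two-sided DMP-style mixture on the coordinate where the regret exceeds $\frac{1}{2}$ yields a $\frac{1}{2}$-equilibrium whose value stays above $v^*$, using the slack in the $[0,1]$ payoff range guaranteed by $2v^* < 1$. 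Verifying that the regret bounds hold tightly in each sub-case is the main remaining technical step.
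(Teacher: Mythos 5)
Your algorithm is the paper's: enumerate pure anchors and, for each, solve an LP maximizing value subject to the $\frac{1}{2}$-equilibrium constraints. Your feasibility argument for the high-value regime is also correct: for $i^*\in\mathrm{supp}(\sigma^*)$ with $u_2(i^*,\tau^*)\ge v_2$, the candidate $\tau=\tau^*$ has zero row-regret (since $u_1(i^*,\tau^*)=v_1=\max_{i'}u_1(i',\tau^*)$) and column-regret at most $1-u_2(i^*,\tau^*)\le\frac12$ when $v_2\ge\frac12$, with value at least $v^*$; the symmetric argument covers $v_1\ge\frac12$.

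The gap is the low-value case $v_1,v_2<\frac12$, and the route you sketch there does not close. The pure social-welfare maximizer $(i_0,j_0)$ has no best-response structure, so a DMP-style mixture built on it controls neither both regrets nor the value. Concretely, if you first move the column to its best response $j_1$ to $i_0$ in order to get the DMP structure going, $u_1(i_0,j_1)$ can drop to $0$ and the welfare is lost; if instead you keep $j_0$ and only mix the row to $\frac12(i_0+k)$, the column's regret at the result is unbounded because $j_0$ is not a best response to anything being played.

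The fix, which is essentially what the paper does (in the dual orientation, column pure and row mixed), is to stay anchored at the Nash profile rather than the pure welfare maximizer. With $i^*$ as above, the only player who can have regret $>\frac12$ at $(i^*,\tau^*)$ is the column, since the row's regret is zero. If $(i^*,\tau^*)$ is already a $\frac12$-equilibrium, you are done. Otherwise, let $e_\ell$ be the column's best response to $i^*$ and take $\tau=\frac12(\tau^*+e_\ell)$; both regrets are now at most $\frac12$. For the value: the column's regret being $>\frac12$ means $u_2(i^*,e_\ell)>u_2(i^*,\tau^*)+\frac12$, so the column's payoff rises to at least $u_2(i^*,\tau^*)+\frac14\ge v_2+\frac14$, while the row's payoff is at least $\frac12 v_1$. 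Since $v_1<\frac12$, the quarter gained exceeds the $\frac12 v_1<\frac14$ lost, and the value stays at least $v^*=\frac12(v_1+v_2)$. The paper phrases the dichotomy as ``either $(x^*,e_j)$ is already a $\frac12$-equilibrium, or the row-regret $>\frac12$ forces $v_r<\frac12$, in which case the DMP mixture works,'' and deduces $v_r,v_c\le\frac12$ from WLOG $v_r\ge v_c$; replacing your social-welfare detour with this Nash-anchored argument makes your proof complete.
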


Thus, Theorem~\ref{thm:eps-hardness} is tight and unlike unconstrained
Nash equilibrium, where stronger techniques yield approximations better than $\frac{1}{2}$, near-optimal Nash equilibrium does not admit
efficient ``non-trivial'' approximations (assuming the Hidden Clique
problem is hard).

\subsection{The Bigger Picture: Hardness for NP-hard Variants of Nash}

Just like with unconstrained $\eps$-equilibrium, finding a
near-optimal $\eps$-equilibrium can be done in quasi-polynomial time
using the algorithm of~\cite{LMM03}.  However, the exact version --
finding a maximum value Nash equilibrium -- is known to be
NP-hard~\cite{GZ89} and therefore harder than its unconstrained
counterpart which is in PPAD~\cite{Papadimitriou94}.  In fact, maximum value Nash is one of
several optimization variants of Nash equilibrium which are
NP-complete.  Other variants include: determining whether a bimatrix
game has more than one Nash equilibrium~\cite{GZ89}, finding a Nash
Equilibrium with minimum support~\cite{GZ89}, and determining whether
there exists an equilibrium with value at least $1-\frac1n$ or all
equilibria have value at most $\eps/n$ (even for arbitrarily small
$\eps=\eps(n)>0$)~\cite{CS08}.  We show that approximate-equilibrium
variants of these problems are also as hard as Hidden Clique.  

For the problem of obtaining any non-trivial approximation to the
optimal value of a Nash equilibrium, we prove the following theorem.

\begin{theorem}[Informal]
  \label{thm:delta-hardness} For every $\eta>0$, finding an
  $\eps$-equilibrium with value at least $\eta$ is as hard as finding
  a hidden clique of size $C \log n$ in $G(n, \frac{1}{2})$, even in a game
  having an equilibrium of value $1-\eta$.
\end{theorem}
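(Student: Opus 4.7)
The plan is to reduce Hidden Clique to the search problem in Theorem~\ref{thm:delta-hardness}. Given a graph $H$ on $n$ vertices (either containing a planted $C\log n$ clique, or drawn from $G(n,\frac12)$), I will construct a bimatrix game $G(H)$ with two key properties: in the YES case, $G(H)$ has an exact Nash equilibrium of value at least $1-\eta$ derived from the planted clique, so the promise is satisfied; in the NO case, no $\eps$-approximate equilibrium of $G(H)$ has value $\geq \eta$. Any algorithm that finds an $\eps$-equilibrium of value $\geq \eta$ under the promise can then be run on $G(H)$ and its output verified, distinguishing the two regimes of Hidden Clique.

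For the game construction I would extend the Hazan--Krauthgamer-style ``clique game'' (strategies indexed by $V(H)$, with payoff $1$ for playing an edge and $0$ otherwise) by adding an auxiliary ``opt-out'' strategy $\ast$, whose payoffs are tuned so that (i) the uniform distribution on the planted clique remains a Nash equilibrium (no profitable deviation to $\ast$), (ii) any mixed strategy supported on a ``small'' subset of vertices is destabilized by a profitable deviation to $\ast$, and (iii) the pure profile $(\ast,\ast)$ is itself an equilibrium of value strictly below $\eta$.  Since the planted clique has size $C\log n$ while the largest clique in $G(n,\frac12)$ has size $(2+o(1))\log n$, the payoff of $\ast$ can be set to a threshold lying strictly between these two regimes.

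The YES analysis is then a direct computation: uniform on the planted clique gives a Nash equilibrium of value $1-1/(C\log n)$, which is at least $1-\eta$ for $C$ taken large as a function of $\eta$.  The NO analysis splits on the support of an alleged $\eps$-approximate equilibrium $(x,y)$: if $x$ or $y$ places significant mass on $\ast$, the value of the profile is close to that of $(\ast,\ast)$ and is thus below $\eta$; if both are supported on vertices, the $\eps$-best-response condition against $\ast$ forces the support to be an ``approximate clique'' of large size, a structure which does not exist in a typical sample from $G(n,\frac12)$ (where arbitrary vertex subsets of size $\omega(\log n)$ have edge density close to $\frac12$).

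The main obstacle is handling $\eps$-approximate equilibria in the NO case, since the $\eps$-slack can allow small cliques (triangles, $4$-cliques, etc., which do exist in $G(n,\frac12)$) to support an $\eps$-equilibrium whose value $1-1/s$ is well above $\eta$ for fixed small $s$.  I would address this by choosing the payoff of $\ast$ close to $1$, specifically above $1-1/s$ for all $s$ up to a threshold $\Omega(1/\eps)$, so that an $\eps$-equilibrium supported on vertices can only survive on subsets of size $\Omega(1/\eps)$, whose density in a random graph is bounded away from $1$.  If a single-game calibration cannot simultaneously achieve the YES value $1-\eta$ and suppress the NO value below $\eta$, one can compose a constant number of copies of the game in product to amplify the value gap, with only polynomial blow-up in $|V|$ for constant $\eta$.
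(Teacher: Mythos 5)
Your plan has a genuine gap that sinks the NO-case analysis, and the gap is precisely what the paper's more elaborate construction is designed to overcome. A single opt-out strategy $\ast$ with a fixed payoff $\lambda$ cannot destabilize strategies concentrated on a small, dense set of vertices: since the payoff of $\ast$ does not depend on \emph{which} vertex the opponent is playing, it offers a constant deviation incentive of $\lambda$ against any vertex strategy. In particular, take any edge $\{u,v\}$ in a sample of $G(n,\tfrac12)$ and the pure profile $(e_u,e_v)$: both players receive payoff $1$, and $\ast$ offers only $\lambda<1$ (you must keep $\lambda<1$, or else the planted-clique profile is destroyed too). Hence $(e_u,e_v)$ is an \emph{exact} Nash equilibrium of your game with value $1\ge\eta$, contradicting your claimed NO-case conclusion. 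The same applies to any small clique or near-clique. Your proposed fix, ``set $\lambda$ above $1-1/s$ for all $s$ up to $\Omega(1/\eps)$,'' cannot help: it is the \emph{value} achieved inside the vertex block, not the support size, that governs the deviation incentive, and a pure edge achieves value $1>\lambda$. Composing copies in product does not fix this either, since the tensor of edges is still a pure equilibrium of value $1$.

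The paper avoids this by replacing the single opt-out strategy with a whole \emph{block} of $N=n^{\Theta(\log 1/\beta)}$ random off-diagonal strategies $B$ (Reduction~\ref{red:reduction}). The crucial feature, Claim 3.5, is that w.h.p.\ \emph{for every} subset $S$ of size up to $c_2\log n$ there is a row of $B$ that is all-ones on $S$; so a player facing any opponent strategy concentrated on a small vertex set has a \emph{tailored} deviation paying $1>\alpha+\eps$. This opponent-dependent family of deviations is what destabilizes small cliques while leaving the $C\log n$-sized planted clique intact (where, by Chernoff, no row of $B$ has average $\ge\alpha$). You should also note that the paper proves a \emph{search} reduction rather than a distinguisher: Lemma~\ref{lem:goodvalue-equis} shows any $\eps$-equilibrium of value $>5\sqrt\eps$ has most mass in the $\alpha A$ block with conditional value close to $\alpha$, Lemma~\ref{lem:dense-subgraph} converts this into a dense bipartite pair $(S_1,S_2)$ of size $c_2\log n$, and Lemma~\ref{lemma:reconstruct} recovers the planted clique from it. Your sketch omits this extraction and reconstruction machinery entirely, which is the content needed to make ``approximate clique structure does not exist in a random graph'' into an actual algorithm. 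Finally, the value gap is not achieved by tensoring but by setting $\gamma=4\sqrt\eps$ small so that the ``both opt out'' block has value $<\eta$, while the incentive calculation in Lemma~\ref{lem:goodvalue-equis} forces a good-value equilibrium away from that block.
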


For the case of determining whether a game has more than one
equilibrium, note that by continuity considerations, every two-player
game has an infinite number of $\eps$-equilibria.  Thus, the
appropriate approximate analog is to consider the problem of finding
two $\eps$-equilibria with (at least) a certain total variation
distance between them.  We show that this problem is also as hard as
Hidden Clique.

\begin{theorem}[Informal]
  \label{thm:second_equi}
  For all sufficiently small $\epsilon > 0$, determining whether a
  game has two different $\eps$-approximate equilibria (say, having
  statistical distance at least $3\epsilon$) is as hard as finding a
  hidden clique of size $C \log n$ in $G(n, \frac{1}{2})$.
\end{theorem}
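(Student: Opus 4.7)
The plan is to reduce from the hardness of finding a near-optimal $\eps$-equilibrium (established in \cite{HK11} and rederived as Theorem~\ref{thm:eps-hardness}) to the problem of finding two $\eps$-equilibria at statistical distance at least $3\eps$. Given a Hazan--Krauthgamer game $G$ on strategy set $V$ with payoffs in $[0,1]$, I would construct a game $G'$ on strategy set $V \cup \{s\}$ by adjoining an ``outside option'' strategy $s$ with payoffs
\[
u_k(s, j) = 1 \text{ for all } j, \quad u_k(i, s) = 0 \text{ for all original } i \in V, \quad k \in \{1,2\},
\]
while keeping the original $G$-payoffs on $V \times V$. The pure profile $(s,s)$ is then an exact equilibrium of $G'$, since deviating to any original strategy while the opponent plays $s$ drops the payoff from $1$ to $0$.

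The crux is to show that every $\eps$-equilibrium $\sigma = (\sigma_1, \sigma_2) \neq (s,s)$ of $G'$ (i) has statistical distance at least $1 - \eps^2 \gg 3\eps$ from $(s,s)$, and (ii) corresponds, via conditioning on $V$, to an $\eps$-equilibrium of $G$ of value at least $1 - \eps$. To see this, first rule out $\sigma_1 = \delta_s$: against $\sigma_1 = \delta_s$, the $s$-strategy for player $2$ gives payoff $1$ while every original strategy gives $0$, so the $\eps$-best-response condition forces $\sigma_2 = \delta_s$ for $\eps < 1$, contradicting $\sigma \neq (s,s)$. Hence both $\sigma_1, \sigma_2$ put positive mass on $V$. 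For any $i \in V$ in the support of $\sigma_1$, the payoff $u_1(i, \sigma_2) = (1 - \sigma_2(s)) v_i$, where $v_i := \E_{j \sim \sigma_2 \mid V}[u_1(i, j)] \leq 1$, must be within $\eps$ of the $s$-payoff $u_1(s, \sigma_2) = 1$, forcing both $\sigma_2(s) \leq \eps$ and $v_i \geq 1 - \eps$. By symmetry $\sigma_1(s) \leq \eps$, establishing (i); and the conditionals $\mu_k = \sigma_k \mid V$ on original strategies then form an $\eps$-equilibrium of $G$ of value at least $1 - \eps$, establishing (ii).

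Combining these pieces yields the reduction. If the underlying graph has a planted clique of size $C \log n$, then $G$ has a near-optimal $\eps$-equilibrium $(\mu_1, \mu_2)$, which (viewing each $\mu_k$ as a distribution on $V \subseteq V \cup \{s\}$) is also an $\eps$-equilibrium of $G'$, at statistical distance $1$ from $(s,s)$. Conversely, if no clique is planted, then $G$ has no $\eps$-equilibrium of value $\geq 1 - \eps$, so by the structural claim above $(s,s)$ is the only $\eps$-equilibrium of $G'$, and no two $\eps$-equilibria at statistical distance $3\eps$ can exist.

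The main obstacle is the tightness of the structural analysis in step (ii): one has to argue that the $\eps$-best-response condition, combined with the outside option of payoff $1$, leaves no room for low-value $G$-equilibria to survive as $\eps$-equilibria of $G'$. The constant benchmark set by the $s$-strategy is what closes this gap, since any original strategy in the support of a best response must have $G$-value comparable to $1$, not merely comparable to the low value achievable by non-clique equilibria in $G$.
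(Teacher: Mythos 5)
There is a genuine gap in the completeness direction, and it stems from making the outside option too attractive. You set the opt-out payoff to $1$ \emph{unconditionally}: $u_k(s,j)=1$ for all $j$, including $j\in V$. This makes $s$ a (weakly) dominant strategy for each player, so playing it is always an exact best response. Consequently, for \emph{any} mixed profile $\sigma$ with $\sigma_1(V)>0$, the row player's incentive to deviate to $s$ is $1-\text{(row payoff)}$, which is at least $1-\max(M_{\mathrm{row}})$ when the column player stays in $V$. In the Hazan--Krauthgamer game the row payoff is at most $\alpha\approx\frac12$, so this incentive is roughly $\frac12$, far larger than $\eps$. The ``lifted'' clique equilibrium $(\mu_1,\mu_2)$ you propose for completeness is therefore \emph{not} an $\eps$-equilibrium of $G'$; in fact $(s,s)$ is the \emph{unique} $\eps$-equilibrium of $G'$ for all $\eps<1$, regardless of whether a clique is planted, so the reduction cannot distinguish anything.

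The fix the paper uses is to make the opt-out reward against an insider equal to some $\lambda<1$, reserving payoff $1$ only for the case where \emph{both} players opt out. Concretely, the paper instantiates the base game via Theorem~\ref{thm:delta-hardness-formal} (not Theorem~\ref{thm:eps-hardness}) with $\eta=1/10$, so that the good equilibrium pays both players $9/10$, and sets $\lambda=8/10$. Then the $9/10$-equilibrium remains an \emph{exact} equilibrium of $G'$ (the outside option only offers $8/10$ against an insider), which gives completeness; and for soundness, the threat of $\lambda$ still forces any $\eps$-equilibrium of $G'$ far from $(e_{N+1},e_{N+1})$ to concentrate on $V$ with conditional value at least $1/10$, at which point Theorem~\ref{thm:delta-hardness-formal} takes over. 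Note also that Theorem~\ref{thm:eps-hardness} is the wrong source to reduce from: its hardness parameter $\eps$ is close to $\frac12$, while Theorem~\ref{thm:second_equi} requires arbitrarily small $\eps$, and its good equilibrium only attains value $\approx\frac12$, leaving no room to place $\lambda$ between the good value and a useful threat level. Your structural soundness argument (bounding $\sigma_k(s)$ and passing to the conditional strategies $\mu_k=\sigma_k\mid V$) is in the right spirit and morally parallels what the paper does in estimating $p,q$ and $\eps'=\eps/(pq)$; it just needs to be carried out against the $\lambda$-benchmark rather than against $1$.
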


We then move to the problem of finding an equilibrium with small
support.  Recall that by~\cite{LMM03}, there exist $\eps$-Nash
equilibria with support $O(\log n/\eps^2)$. It is also known that for
any $\eta>0$, in certain two-player games all
$(\frac12-\eta)$-equilibria must have support at least $\log
n/(1+\log(1/\eta))$~\cite{FNS07} (the threshold of $\frac12$ is tight,
since the simple $\frac12$-equilibrium of~\cite{DMP09} has support
3). As an approximate-equilibrium variant of the Minimum Support
Equilibrium problem, we consider the problem of finding an
$\eps$-equilibrium with support at most some threshold $t$, and prove
the following hardness result.


\begin{theorem}[Informal]
  \label{thm:small_support} 
  For every $\eta > 0$, finding a
  $(\frac{1}{2}-\eta)$-equilibrium with support size $C' \log n$ is as hard as
  finding a hidden clique of size $C \log n$ in $G(n, \frac{1}{2})$.
\end{theorem}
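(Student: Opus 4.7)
The plan is to reduce from the Hidden Clique problem using essentially the same construction as for Theorem~\ref{thm:eps-hardness}, exploiting the fact that a planted clique naturally yields an approximate equilibrium with small support, whereas in the absence of a planted clique small-support approximate equilibria would correspond to dense subgraphs of $G(n,\tfrac12)$ that do not exist with high probability.

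First I would reuse or lightly adapt the reduction from Theorem~\ref{thm:eps-hardness}: given a graph $G$ on $n$ vertices (drawn from $G(n,\tfrac12)$, possibly with a planted clique of size $k = C\log n$), build a two-player symmetric game $\Gamma_G$ whose strategies correspond to the vertices of $G$, with payoff $A_{ij}$ high when $(i,j)\in E(G)$ or $i=j$ and low otherwise, plus the auxiliary gadget strategies from the Hazan-Krauthgamer style reduction that ensure any $(\tfrac12-\eta)$-equilibrium has value bounded below. For the completeness direction, when $G$ contains a planted clique $K$ of size $k = C\log n$, the uniform distribution on $K$ for both players is an exact Nash equilibrium of $\Gamma_G$ with support size exactly $k$, so for $C' \ge C$ it is a $(\tfrac12-\eta)$-equilibrium with support at most $C'\log n$.

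For the soundness direction, I would assume $G \sim G(n,\tfrac12)$ without a planted clique and argue that no $(\tfrac12-\eta)$-equilibrium of support size $\le C'\log n$ exists. The approximate-equilibrium condition, combined with the payoff structure, would force the subgraph induced on the union of the two supports (of total size at most $2C'\log n$) to have edge density at least $\tfrac12 + \Omega(\eta)$. A Chernoff bound together with a union bound over all at most $\binom{n}{2C'\log n}$ subsets of this size then yields a contradiction, provided $C'$ is chosen small enough as a function of $\eta$. Since the Hidden Clique hardness assumption lets us pick $C$ to be any fixed constant, I can arrange $C \le C'$ with $C'$ small enough that both completeness and soundness go through.

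The main obstacle is translating the $(\tfrac12-\eta)$-equilibrium condition into a quantitative lower bound on the edge density supported by the equilibrium. Without a carefully chosen game, a random $s$-vertex support in $G(n,\tfrac12)$ with $s \gg \log n/\eta^2$ already gives a $(\tfrac12-\eta)$-equilibrium of the simple edge-indicator game with value $\approx \tfrac12$; so one must engineer $\Gamma_G$ (via the same gadgets used in Theorem~\ref{thm:eps-hardness}) so that any approximate equilibrium of value below $\tfrac12 + \Omega(\eta)$ admits a profitable deviation strictly exceeding $\tfrac12-\eta$. Verifying the quantitative trade-off between $\eta$, $C$, and $C'$, and ensuring that the required threshold on $C'$ is compatible with the clique-size guarantee, will be the technically delicate part of the argument.
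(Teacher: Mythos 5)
The high-level template (planted clique gives a small-support approximate equilibrium in the completeness direction; extract a dense subgraph in the soundness direction) is the same as the paper's, but your plan to ``reuse or lightly adapt'' Reduction~\ref{red:reduction} fails, and the modification required is substantial, not a light adaptation. In Reduction~\ref{red:reduction} the bottom-right block is $\gamma J$ for both players. If both players play the \emph{same} pure strategy in that block, each earns $\gamma$ and the best any unilateral deviation can achieve is again $\gamma$ (since the other player is fixed outside $[n]$, deviating to $\alpha A$ or $B$ yields $0$). So this is an \emph{exact} Nash equilibrium with support $1$ that carries no information about $G$ whatsoever. Your soundness step would then be asked to show that ``no small-support $(\tfrac12-\eta)$-equilibrium exists when $G$ is random,'' which is simply false for this game. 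No choice of $\gamma \in [0,1]$ removes these trivial equilibria, so the gadget from Theorem~\ref{thm:eps-hardness} is fundamentally the wrong one here.

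The missing idea, which is the heart of the paper's Reduction~\ref{red:smallsupport}, is to replace the constant block $\gamma J$ by a \emph{random zero-sum} block $R$ (with payoffs $R$ and $J - R$), and to replicate $B$ into $N_2$ vertical/horizontal copies. Zero-sumness makes the bottom-right block intrinsically unstable, and the replication plus randomness of $R$ ensures (via a union bound over all possible small supports --- this is exactly where the support bound $C'\log n$ is used) that for \emph{every} small column-support there is a single bottom row giving the row player payoff $\approx\beta$ against the column player's mass in $[n]$ and payoff $\approx 1$ against its mass outside $[n]$. That is what forces the equilibrium mass into the $\alpha A$ block and drives the conditional value high enough to invoke Lemma~\ref{lem:dense-subgraph}. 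Your proposal has no mechanism playing this role. Separately, your soundness direction argues a nonexistence statement and then union-bounds over subsets; the paper instead \emph{constructively} extracts a dense bipartite pair from the equilibrium and hands it to Lemma~\ref{lemma:reconstruct} to actually recover the clique, which is what the ``as hard as finding a hidden clique'' statement requires.
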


This can be seen as a complexity-theoretic analogue of the lower bound of~\cite{FNS07} mentioned above. Again, this contrasts with the situation for unconstrained Nash equilibrium, which is guaranteed to exist, and admits stronger approximations.

While these are all negative results, we again would like to stress
that there is a positive message to this story: these problems are
hard because they are approximate versions of NP-complete problems,
not because they are approximate variants of Nash equilibrium.
Therefore, these results should \emph{not} be viewed as indications
that Nash equilibrium does not have a PTAS.

\subsection{The Complexity of Approximate Pure Bayes Nash Equilibria}

Finally, we consider the problem of approximating pure \emph{Bayes
Nash Equilibria} (BNE) in two-player games.  \emph{Bayesian games} model the
situation where the players' knowledge of the world is incomplete.  
In a Bayesian game, both players may be in one of a number of
different states, known as \emph{types}, representing what each player
knows about the state of the world, and the payoff of each player
depends on the type of both players in addition to their strategies.
The types are distributed according to some joint distribution and are
not necessarily independent.  A pure strategy for a Bayesian game
assigns to each type a strategy that the player plays when she is in
that type.  In a pure BNE, conditioning on
any given type for a given player, the player cannot gain by changing
his strategy for that type.  See Section~\ref{sec:bayes} for precise
definitions.

Conitzer and Sandholm~\cite{CS08} have shown that determining whether
a given two-player game has a pure BNE is NP-complete.  We show that
this holds also for approximate pure BNE.

\begin{theorem}[Informal]
\label{thm:BNE-NP-hard}
Let $\eps=0.004$.  Then given a Bayesian game that admits a pure BNE,
it is NP-hard to find a pure $\eps$-BNE for the game.
\end{theorem}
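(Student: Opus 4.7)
The plan is to reduce from a suitable NP-hard problem, most naturally 3-SAT, to the promise search problem of producing a pure $\eps$-BNE in a Bayesian game that is guaranteed to possess a pure BNE. The starting point is the Conitzer--Sandholm reduction~\cite{CS08} showing NP-hardness of exact pure BNE existence; the work is to make it (i) robust to $\eps$-approximation and (ii) compatible with the promise that a pure BNE exists.

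First, I would design a Bayesian game $G_\varphi$ from a 3-CNF formula $\varphi$ with variables $x_1,\dots,x_n$ as follows. Each player has one ``variable'' type $\tau_i$ per variable $x_i$, together with a small number of constant-probability ``verifier'' types (one per clause, say). Conditional on a variable type, the two pure actions are labelled $0$ and $1$; the coordinated action-pair on $\tau_i$ is interpreted as a joint assignment to $x_i$. The joint type distribution places constant probability on each clause-verifier type, so that conditioning on such a type leaves a $\Theta(1)$ probability mass for each joint variable assignment supported in the relevant clause. Payoffs on variable types enforce that, in any pure BNE, the two players must agree on the assignment to $x_i$ (an equality-matching gadget). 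Payoffs on the clause-verifier types are scaled so that if the players' assignment \emph{violates} any clause, the verifier's conditional best-response gap is bounded below by some explicit constant $\eps_0 > 0$ that depends only on the payoff scale and the (constant) clause-verifier probability; taking $\eps_0 = 0.004$ by tuning these constants gives the claimed value.

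Second, I would incorporate an ``escape hatch'' type $\sigma$ whose payoffs are set so that there is always a pure BNE in which both players play a designated safe action on $\sigma$ and play fixed default values on all variable types. The escape-hatch type is given an extremely small probability, so small that its contribution to any other type's conditional expected payoff is negligible relative to $\eps$. This guarantees the promise (a pure BNE always exists), but simultaneously prevents the escape hatch from being useful as an $\eps$-BNE: on type $\sigma$ itself, deviating is strictly worse by a constant, while on variable and verifier types the escape hatch is essentially invisible. Thus any pure $\eps$-BNE must correspond to a satisfying assignment of $\varphi$, so extracting one from the output of the hypothetical algorithm decides 3-SAT.

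The main obstacle, I expect, is step two's gap analysis: showing a \emph{uniform} lower bound of $\eps = 0.004$ on the conditional deviation gain at a violated clause-verifier type. This requires that (a) the clause-verifier type has $\Omega(1)$ total probability, (b) conditioned on it, the marginal over variable actions induced by the joint distribution is spread in a controlled way, and (c) the payoff contrast between a ``satisfied'' and ``violated'' clause survives the marginalisation and the perturbation from the escape-hatch type. Calibrating the probabilities and payoffs simultaneously so that both the matching gadget on variable types and the clause-checking gadget on verifier types give bounded-below-by-$\eps$ deviation incentives under the same $\eps$ is where the numerical work lies, and is what fixes the constant $\eps = 0.004$.
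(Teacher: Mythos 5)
Your ``escape hatch'' step is fatally circular. You want a designated safe profile (safe action on $\sigma$, fixed defaults elsewhere) to be a pure BNE on \emph{every} instance, in order to satisfy the promise. But a pure BNE is automatically a pure $\eps$-BNE for every $\eps\ge 0$, so a hypothetical $\eps$-BNE finder could simply output that safe profile on every instance and learn nothing about $\varphi$; the reduction yields no hardness. The problem cannot be fixed by tuning probabilities: the issue is not that $\sigma$ has small weight, it is that you have deliberately planted a globally valid exact equilibrium. Worse, the escape hatch is internally inconsistent with your own gadget design: on your construction, the fixed defaults on the variable types encode some fixed assignment, and your clause-verifier payoffs are set up precisely so that an assignment violating a clause creates an $\Omega(1)$ deviation incentive at that clause type. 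So the safe profile is a pure BNE only if the default assignment already satisfies $\varphi$, contradicting the claim that it provides an unconditional BNE. (A secondary issue: you cannot put ``constant probability on each clause-verifier type'' when the number of clauses grows; what you actually need, and partially describe, is control of the \emph{conditional} distribution given a verifier type, not the marginal.)

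The paper takes a different route that avoids this trap. It does \emph{not} try to make a pure BNE exist in every instance; instead it gives a promise/gap reduction (from $3$-coloring of $4$-regular graphs) producing games that either admit a pure BNE (when the graph is $3$-colorable) or admit no pure $\eps$-BNE at all (when it is not). NP-hardness of the search problem then follows because an algorithm that finds a pure $\eps$-BNE whenever a pure BNE exists can be used to decide the promise problem: run it and verify the output. The $\eps=0.004$ constant comes from a zero-sum ``head/tails'' coordinate attached to every strategy together with payoff magnitudes chosen as $0.01\cdot 2^{c(e)}$ for a proper $5$-edge-coloring $c$ (from Vizing's theorem on the $4$-regular graph): this makes the per-type deviation gains sum to zero globally, while the dyadic spacing guarantees that any nonzero deviation gain is at least $0.02/5 = 0.004$. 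That quantitative argument, and the observation that if some constraint is violated the global sum being zero forces some player at some type to gain at least $0.004$ by flipping heads/tails, is the real content of the proof and is absent from your sketch.
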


However, this hardness result relies heavily on the joint distribution
of the players' types being non-uniform (in fact, not even product distribution).  We show that
when the distribution over type pairs is uniform, there is in fact a
quasi-polynomial time algorithm for $\eps$-approximate pure BNE (when
a pure BNE exists).

\begin{theorem}[Informal]
\label{thm:BNE-algo} 
For every $\eps > 0$ there is a quasipolynomial time algorithm to find
a pure $\eps$-BNE in two-player Bayesian games with uniformly
distributed types and in which a pure BNE exists.
\end{theorem}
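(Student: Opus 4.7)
The plan is a sampling-and-enumeration algorithm in the spirit of Lipton--Markakis--Mehta~\cite{LMM03}, adapted to the Bayesian setting. The key observation is that under a uniform joint distribution over type pairs, the view of any type $t_1$ of player 1 against a pure strategy $s_2$ of player 2 is simply a uniform average over the $n_2$ pairs $(t_2, s_2(t_2))$. A small random subset $T_2 \subseteq [n_2]$ therefore gives a good empirical estimate of this average, uniformly over $t_1$ and over all actions $a$ for player 1.

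Concretely, let $k = C \log(n_1 n_2 m_1 m_2)/\eps^2$ for a sufficiently large absolute constant $C$, where $n_i$ and $m_i$ are the number of types and actions of player $i$. The algorithm draws uniformly random sample sets $T_1 \subseteq [n_1]$ and $T_2 \subseteq [n_2]$ of size $k$ each, and iterates over all $(m_1 m_2)^k = n^{O(\log n/\eps^2)}$ pairs of partial assignments $(\tilde{s}_1, \tilde{s}_2)$ with $\tilde{s}_i : T_i \to [m_i]$. For each candidate, it extends to full strategies by letting $s_1(t_1)$ be an action maximizing the empirical estimate $\hat{U}_1(t_1, a) := (1/k) \sum_{t_2 \in T_2} u_1(t_1, t_2, a, \tilde{s}_2(t_2))$ (and symmetrically for $s_2$), and then checks in polynomial time whether the resulting profile $(s_1, s_2)$ is a pure $\eps$-BNE of the full game. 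If so, the algorithm outputs it. The total running time is $n^{O(\log n / \eps^2)}$.

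For correctness, fix any pure BNE $(s_1^*, s_2^*)$ (which exists by hypothesis) and consider the iteration in which $\tilde{s}_i = s_i^*|_{T_i}$. Because the type distribution factors as a product of uniforms, each $T_i$ is a uniform i.i.d.\ sample from $[n_i]$, so for every fixed $(t_1, a)$ the quantity $\hat{U}_1(t_1, a)$ is an average of $k$ bounded i.i.d.\ random variables with mean $U_1(t_1, a; s_2^*)$. A Hoeffding plus union bound argument over all $(t_1, a)$ and $(t_2, b)$ pairs shows that with high probability $|\hat{U}_1(t_1, a) - U_1(t_1, a; s_2^*)| \leq \eps/4$ and analogously for player 2, simultaneously. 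Consequently the computed $s_1$ is an $(\eps/2)$-best response to $s_2^*$ at every type, and $s_2$ is an $(\eps/2)$-best response to $s_1^*$.

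The main obstacle is closing the loop: the pure $\eps$-BNE property requires $s_1$ to be an $\eps$-best response to the \emph{computed} $s_2$, not to $s_2^*$. Because of uniform types, one has the direct bound $|U_1(t_1, a; s_2) - U_1(t_1, a; s_2^*)| \leq |\{t_2 : s_2(t_2) \neq s_2^*(t_2)\}|/n_2$, so it suffices to show that the fractional Hamming discrepancy between $s_2$ and $s_2^*$ is $O(\eps)$. This is not automatic, since $s_2$ can in principle disagree with $s_2^*$ at every type at which multiple near-best responses to $s_1^*$ exist. I expect that handling this step carefully --- for instance by further enumerating tie-breaking information on an extended sample, by arguing that discrepancies in $s_2$ on near-tied types do not harm player 1's constraints (since both candidate actions give near-identical payoffs from player 1's viewpoint), or by iterating the best-response map to a stable point --- will be the technical crux of the proof.
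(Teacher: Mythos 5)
Your first two paragraphs (sample types, enumerate partial assignments, estimate payoffs via Hoeffding) match the opening move of the paper's proof of Theorem~\ref{thm:BNEeasy}: the paper also samples $m=O(\log(nk)/\eps^2)$ column-types, guesses the equilibrium strategy on them, and thereby obtains estimates $q^{\mathrm{row}}_{ij}$ of the true per-type payoffs $p^{\mathrm{row}}_{ij}$ up to $\eps/8$.

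But the gap you flag at the end is real, and your proposed fixes do not close it. After the sampling step you greedily extend to a full pure strategy by per-type empirical best response and then hope $(s_1,s_2)$ is an $\eps$-BNE of the full game. The problem is that $s_1$ is only guaranteed to be a near-best response to $s_2^*$, and $s_2$ may disagree with $s_2^*$ on a constant fraction of types (wherever the column player has near-ties under the estimated payoffs), so $U_1(t_1,a;s_2)$ can differ from $U_1(t_1,a;s_2^*)$ by a constant. Your suggested rescue ``discrepancies on near-tied types do not harm player 1's constraints'' is false in general: the column player being nearly indifferent between actions $b$ and $b'$ at type $t_2$ says nothing about the row player's payoffs $u_{\mathrm{row}}(\cdot,\cdot,a,b)$ versus $u_{\mathrm{row}}(\cdot,\cdot,a,b')$ --- these are independent entries of $M_{\mathrm{row}}$. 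Enumerating extra tie-breaking bits on a logarithmic sample also cannot pin down which of exponentially many tie-breaking patterns on the unsampled types keeps the payoffs consistent; and iterating the best-response map has no convergence guarantee. So the greedy-extension step is where the proof breaks.

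The paper resolves this with an idea you are missing: instead of greedily extending, it sets up a linear program over \emph{mixed} per-type strategies $X^{\mathrm{row}}_{ij}, X^\mathrm{col}_{ij}$. The LP (i) forces the support of $X^{\mathrm{row}}_{i\cdot}$ to lie on actions $j$ with estimated payoff $q^{\mathrm{row}}_{ij}$ within $\eps/8$ of the best, and --- crucially --- (ii) constrains the \emph{induced} payoffs $\tfrac{1}{k}\sum_{y,z}X^\mathrm{col}_{yz} u_{\mathrm{row}}(j,z,i,y)$ to be within $\eps/4$ of $q^{\mathrm{row}}_{ij}$ for every $(i,j)$, and symmetrically. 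Constraint (ii) is exactly the ``closing the loop'' device: it certifies that the computed opponent strategy reproduces the same payoff landscape that the sample estimated, so the support restriction (i) genuinely yields a well-supported approximate equilibrium against the \emph{computed} opponent, not just against $s^*$. Feasibility of the LP is witnessed by the (unknown) true pure BNE, so a solution exists. Finally the fractional solution is rounded to a pure strategy by independent sampling per type, using Hoeffding again (and brute force when the number of types is too small for concentration). Your outline stops exactly short of this LP step, which is the technical heart of the proof.
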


We remark that this algorithm extends easily to arbitrary product
distributions over types but in order to keep the notation simple we
restrict our attention to the uniform case.

The algorithm is tight: it follows immediately from our hardness for
Small Support Equilibrium that this problem is also as hard as Hidden
Clique. 

\begin{theorem}[Informal]
  \label{thm:BNE-PC-hard} For every $\eta>0$, finding a $(\frac14-\eta)$-approximate pure BNE in a two-player Bayesian games with uniformly
distributed types and in which a pure BNE exists  
 is as hard as finding a hidden clique of size $C\log n$.
\end{theorem}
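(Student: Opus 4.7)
The plan is to give a direct polynomial-time reduction from the small-support equilibrium problem (Theorem~\ref{thm:small_support}) to approximate pure BNE in uniform-type Bayesian games. The key structural fact is that in a two-player Bayesian game whose joint type distribution is uniform on $[t]\times[t]$, a pure strategy profile $(\sigma_1,\sigma_2)$ induces mixed strategies $(x,y)$ of support at most $t$ in the underlying one-shot game, namely the uniform distributions over the multisets $\{\sigma_1(i)\}_{i\in[t]}$ and $\{\sigma_2(j)\}_{j\in[t]}$.

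Given a game $(A,B)$ produced by the hardness reduction behind Theorem~\ref{thm:small_support}, with support bound $t=C'\log n$, I would construct a Bayesian game $G'$ in which both players have type set $[t]$, the joint distribution on types is uniform on $[t]\times[t]$, and the payoff matrices are $(A,B)$ independently of the type profile. Since the marginal and conditional distributions of a uniform product distribution are themselves uniform, Player 1 of type $i$ who plays action $s$ has conditional expected payoff equal to $(Ay)_s$, and symmetrically for Player 2 and the matrix $B$.

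For the yes-side of the reduction, I would invoke the fact that the construction behind Theorem~\ref{thm:small_support} produces games admitting an \emph{exact} Nash equilibrium $(x,y)$ that is uniform on a support of size at most $t$ (a feature inherited from the underlying planted-clique construction, where the honest equilibrium is the uniform mixed strategy on the clique vertices). Enumerating these supports as $s_1,\dots,s_t$ and $r_1,\dots,r_t$ and setting $\sigma_1(i)=s_i$, $\sigma_2(j)=r_j$, the Nash equilibrium condition on $(x,y)$ translates directly into the exact pure BNE condition for $(\sigma_1,\sigma_2)$ in $G'$. For the no-side, suppose $(\sigma_1,\sigma_2)$ is a $(\frac{1}{4}-\eta)$-approximate pure BNE; then each $\sigma_1(i)$ is a $(\frac{1}{4}-\eta)$-best response to $y$, and since any convex combination of $\eps$-best responses is itself an $\eps$-best response, the induced $(x,y)$ is a $(\frac{1}{4}-\eta)$-equilibrium of support at most $C'\log n$. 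A fortiori this is a $(\frac{1}{2}-\eta)$-approximate equilibrium of small support, contradicting Theorem~\ref{thm:small_support}.

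The only delicate point is ensuring the yes-side promise: the hardness reduction behind Theorem~\ref{thm:small_support} must produce instances carrying an \emph{exact} Nash equilibrium that is uniform on a support of size $t$, rather than merely an approximate one, so that $G'$ inherits the promise of admitting an exact pure BNE. This should follow directly from the structure of the Hazan--Krauthgamer-style construction, in which the honest equilibrium mirrors the planted clique and is by design a uniform mixed strategy. Modulo that observation, the rest of the argument is a purely syntactic translation between pure strategies over a uniform type space and mixed strategies supported on their image.
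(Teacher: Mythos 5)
Your overall translation is the same as the paper's: define a Bayesian game with $C'\log n$ types on each side, uniform joint type distribution, and the payoff matrices of the small-support construction for \emph{every} type pair; then a pure BNE is exactly a pair of mixed strategies of support $\le C'\log n$ in the one-shot game, the per-type incentive to deviate equals the one-shot incentive to deviate, and soundness follows. That much is right and mirrors the paper's proof.

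The gap is the yes-side promise, and it is not a detail you can wave past by appealing to ``the structure of the Hazan--Krauthgamer-style construction.'' You cite Theorem~\ref{thm:small_support} (the $\frac12-\eta$ small-support result), but its completeness case does \emph{not} produce an exact Nash equilibrium with small support. With the parameters used there ($\alpha = \tfrac12+\eta/8$, $\beta = 1 - \tfrac78\eta - \tfrac{\eta^2}{8}$, so $\beta > \alpha$), a player playing uniformly over the clique earns $\alpha$, while almost every row of $B$ gives average payoff concentrated around $\beta > \alpha$; deviating to such a row gains roughly $\beta - \alpha \approx \tfrac12 - \eta$. So the clique strategy is only a $(\tfrac12-\eta)$-approximate equilibrium, not an exact one, and consequently the induced Bayesian game need not admit an exact pure BNE. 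The paper says this explicitly (``we do not claim that finding a $\tfrac12-\eta$-equilibrium with small support is hard even when an exact equilibrium with small support exists'') and this is exactly why the BNE hardness is stated at $\tfrac14-\eta$ rather than $\tfrac12-\eta$: you must retune $\alpha,\beta$ so that $\alpha > \beta$ (giving up half of the additive gap) before the uniform-on-clique strategy becomes an exact best response and hence a genuine pure BNE. This is the content of Theorem~\ref{thm:small-support-1/4}, which is what the paper reduces from. Your argument should invoke that theorem and verify its completeness produces an exact small-support equilibrium; reducing from Theorem~\ref{thm:small_support} as written leaves the promise case unsubstantiated, and the $\tfrac12 \to \tfrac14$ loss is not optional.
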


\subsection{Organization}

In Section~\ref{sec:goodvalue} we prove the results relating to
approximate equilibria with good value: Theorem~\ref{thm:eps-hardness}
(Section~\ref{sec:eps-hardness}), Theorem~\ref{thm:1/2-algo}
(Section~\ref{sec:1/2-algorithm}), and
Theorem~\ref{thm:delta-hardness} (Section~\ref{sec:delta-hardness}).

In Section~\ref{sec:uniqueness} we prove Theorem~\ref{thm:second_equi}
by a black-box application of Theorem~\ref{thm:delta-hardness}.  In
Section~\ref{sec:small-support} we prove
Theorem~\ref{thm:small_support} using similar techniques as for the
hardness results of Section~\ref{sec:goodvalue}.  In
Section~\ref{sec:bayes} we prove our results for Bayesian games,
Theorems~\ref{thm:BNE-NP-hard},~\ref{thm:BNE-algo},~and~\ref{thm:BNE-PC-hard}.

\section{Preliminaries}


A \emph{bimatrix game} $\G = (M_{\mathrm{row}}, M_{\mathrm{col}})$ is
a game defined by two finite matrices, $M_{\mathrm{row}}$
and $M_{\mathrm{col}}$, and two players: the \emph{row player} and the
\emph{column player}. We assume throughout that the game is normalized
so that both matrices have values in the interval $[0,1]$. The row and
column players choose \emph{strategies} $x$ and $y$ respectively,
where $x,y$ are nonnegative vectors satisfying
$\sum_ix_i=\sum_jy_j=1$.  A \emph{pure} strategy is one with support
$1$ (i.e.\ a vector with $1$ in one entry and $0$ in the rest). The
row (resp.\ column) player's \emph{payoff} is given by
$x^{\top}M_{\mathrm{row}}y$ (resp.\ $x^{\top}M_{\mathrm{col}}y$).

A \emph{Nash equilibrium} is a pair of strategies $(x,y)$ such that
neither player has any incentive to deviate to a different strategy,
assuming the other player does not deviate. Formally, in an
equilibrium, for all $i,j$ we have $e_i^{\top}M_{\mathrm{row}}y\leq
x^{\top}M_{\mathrm{row}}y$ and $x^{\top}M_{\mathrm{col}}e_j\leq
x^{\top}M_{\mathrm{col}}y$. An $\eps$-approximate Nash equilibrium, or
\emph{$\eps$-equilibrium} for short, is a pair of strategies $x,y$
where each player has incentive at most $\eps$ to deviate. That is,
for all $i,j$, $$e_i^{\top}M_{\mathrm{row}}y\leq
x^{\top}M_{\mathrm{row}}y+\eps\qquad\text{and}\qquad
x^{\top}M_{\mathrm{col}}e_j\leq x^{\top}M_{\mathrm{col}}y+\eps.$$

The \emph{value} of a pair of strategies, denoted $v_{\G}(x,y)$, is
the average payoff of the two players, i.e., 
$$v_\G(x,y) = \frac{1}{2}(x^{\top}M_{\mathrm{row}}y + x^{\top}M_{\mathrm{col}}y) = \sum_{i,j} x_i y_j \frac{M_{\mathrm{row}}(i,j) + M_{\mathrm{col}}(i,j)}{2}.$$

For a vector $x \in \Real^n$ and $S \subseteq [n]$, we write $x_S$ for
the projection of $x$ to the coordinates $S$.  We write $\|x\| =
\sum_{i = 1}^n |x_i|$ for the $\ell_1$ norm of $x$.  Thus, for a
strategy (in other words, a probability distribution) $x \in [0,1]^n$
we write $\|x_S\|$ for the probability that the player plays an
element of $S$.

Further, for a set $S \subseteq [n]$ of strategies, we use
$v_{\G|S}(x,y)$ to denote the value of $(x,y)$ \emph{conditioned} on
both players playing in $S$.  Formally,
$$ v_{\G|S}(x,y) = \E_{i \sim x, j \sim y}\left[
  \frac{M_{\mathrm{row}}(i,j)+M_{\mathrm{col}}(i,j)}{2}\,\Big|i,j \in S\right] = \frac{v_\G(x_S, y_S)}{\|x_S\| \cdot \|y_S\|}.$$
(If $\|x_S\| = 0$ or $\|y_S\| = 0$, $v_{\G|S}(x,y)$ is undefined.)

Given an undirected graph $G=(V,E)$, and (not necessarily disjoint) vertex sets $S_1,S_2\subseteq V$, we will denote by $E(S_1,S_2)$ the set of ordered pairs $\{(i,j)\in S_1\times S_2\mid \{i,j\}\in E\text{ or }i=j\}$. We will refer to $d(S_1,S_2)=|E(S_1,S_2)|/(|S_1||S_2|)$ as the \emph{density} of the pair $(S_1,S_2)$.

Finally, we shall make repeated use of the following standard Chernoff
bound.

\begin{lemma}[Chernoff bound]
  \label{lemma:chernoff}
  Let $X_1, \ldots, X_m$ be i.i.d.\ $\{0,1\}$ random variables and let
  $\mu = \frac{1}{m} \E[ \sum X_i]$.  Then
  $$
  \Pr\left[ \frac{1}{m} \sum X_i \ge \mu + \eps \right] \le e^{-2\eps^2 m}.
  $$
\end{lemma}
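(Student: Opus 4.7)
The plan is to prove this via the standard Chernoff/Hoeffding method, i.e., apply Markov's inequality to the moment generating function and optimize over the free parameter. Let $S = \sum_{i=1}^m X_i$. For any $t > 0$,
$$\Pr\left[\tfrac{1}{m} S \ge \mu + \eps\right] = \Pr\left[e^{tS} \ge e^{t(\mu+\eps)m}\right] \le e^{-t(\mu+\eps)m}\,\E\!\left[e^{tS}\right],$$
and by independence $\E[e^{tS}] = \prod_i \E[e^{tX_i}] = (1-\mu + \mu e^t)^m$. So the entire task reduces to bounding a single scalar function of $t$ and then optimizing.

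The key analytic step is Hoeffding's lemma applied to the centered variables $Y_i = X_i - \mu \in [-\mu, 1-\mu]$: for any $t \in \mathbb{R}$,
$$\E\!\left[e^{tY_i}\right] \le e^{t^2/8}.$$
I would derive this in the standard way: since $e^{tx}$ is convex, write $e^{ty} \le \tfrac{1-\mu-y}{1}e^{-t\mu} + \tfrac{y+\mu}{1}e^{t(1-\mu)}$ for $y \in [-\mu, 1-\mu]$, take expectations (using $\E[Y_i] = 0$) to get $\E[e^{tY_i}] \le e^{\phi(t)}$ where $\phi(t) = -t\mu + \log(1-\mu + \mu e^t)$, and then show $\phi(t) \le t^2/8$ by verifying $\phi(0) = \phi'(0) = 0$ and $\phi''(t) \le 1/4$ (since $\phi''(t) = p(1-p)$ with $p = \mu e^t/(1-\mu+\mu e^t) \in [0,1]$, and $p(1-p) \le 1/4$).

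Combining, the overall bound becomes
$$\Pr\left[\tfrac{1}{m} S \ge \mu + \eps\right] \le e^{-t\eps m + t^2 m/8}.$$
Choosing $t = 4\eps$ minimizes the exponent and yields $e^{-2\eps^2 m}$, as required.

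The only mildly delicate step is the constant in Hoeffding's lemma; the rest (Markov's inequality, independence, optimization) is routine. Since this lemma is entirely standard, in the actual write-up I would likely just cite a textbook reference rather than reproduce the derivation in full.
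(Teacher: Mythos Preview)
Your proof is correct: the exponential Markov/Hoeffding argument, the derivation of Hoeffding's lemma via $\phi''(t)\le 1/4$, and the optimization at $t=4\eps$ are all carried out properly. The paper itself gives no proof of this lemma at all---it is stated as a ``standard Chernoff bound'' and used as a black box---so your final remark that one would simply cite a textbook is exactly the route the paper takes.
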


\section{Approximate Equilibria With Good Value}
\label{sec:goodvalue}

\subsection{The Reduction}\label{sec:reduction}

In this section we describe the general reduction that we use to prove
Theorems~\ref{thm:eps-hardness} and \ref{thm:delta-hardness} and
describe its properties. This reduction also forms the basis for the reductions we use to prove Theorems~\ref{thm:second_equi}, \ref{thm:small_support}  and \ref{thm:BNE-PC-hard}. It is based on the reduction of \cite{HK11}.

As in \cite{HK11} our soundness analysis proceeds by using the approximate equilibrium to find a dense
bipartite subgraph of $G$. The following lemma shows 
that this is sufficient to recover the hidden clique.

\begin{lemma}[{\cite[Lemma 5.3]{HK11}}]
  \label{lemma:reconstruct}
  There exist universal constants $c_1$ and $c_2$ such that the
  following holds.  Let $G$ be a sample from $G(n, \frac{1}{2})$ with a
  hidden clique of size $C \log n$ for some $C \ge c_1$.
  Then, given a pair of vertex sets $S_1,S_2 \subseteq [n]$ of size $c_2
  \log n$ and density $d(S_1,S_2)\geq5/9$ we can in polynomial time reconstruct the
  hidden clique (with high probability over $G$).
\end{lemma}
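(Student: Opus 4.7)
The plan is to proceed in two main steps: first use the density hypothesis on $(S_1,S_2)$ to force a substantial overlap of each $S_i$ with the planted clique $K$, and then exploit that overlap to recover $K$ via a common-neighbor test on the whole vertex set.

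Write $T_i := S_i \cap K$ and $m := c_2 \log n$. In the planted model the $|T_1||T_2|$ pairs in $T_1 \times T_2$ are edges deterministically, while the remaining $m^2 - |T_1||T_2|$ pairs in $S_1 \times S_2$ are independent Bernoulli$(1/2)$ random variables; hence the expected density of the pair is $\tfrac12 + |T_1||T_2|/(2m^2)$, up to an $O(1/m)$ diagonal correction. Applying Lemma~\ref{lemma:chernoff} to the $\Theta(m^2)$ independent non-clique edges and union-bounding over the at most $\binom{n}{m}^2 \le \exp(O(c_2 \log^2 n))$ choices of $(S_1,S_2)$, we obtain that with probability $1-o(1)$ over $G$ every such pair has density within a small absolute constant $\eta < 1/50$ of its expectation, provided $c_2$ is chosen large enough that the Chernoff tail $\exp(-\Omega(\eta^2 c_2^2 \log^2 n))$ beats the union-bound cost. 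Plugging in $d(S_1,S_2) \ge 5/9$ then yields $|T_1||T_2|/m^2 \ge 2(5/9 - 1/2 - \eta) \ge \alpha$ for an absolute constant $\alpha > 0$, and since $|T_{3-i}| \le m$ this forces $|T_i| \ge \alpha m$ for both $i$.

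Next I would extract a seed clique inside $K$. In time $2^{|S_1|} = n^{O(c_2)}$ enumerate all cliques contained in $S_1$ and let $T$ be the largest one; since $T_1 \subseteq S_1$ is itself a clique, $|T| \ge |T_1| \ge \alpha m$. With high probability $T = T_1$, because any strictly larger clique would have to contain a vertex $u \in S_1 \setminus K$ adjacent to all of $T_1$ (an event of probability at most $2^{-|T_1|} \le n^{-\Omega(c_2)}$ per $u$), while the maximum clique inside the pure random subgraph on $S_1 \setminus K$ is only $O(\log \log n) \ll |T_1|$. Given such a seed, define
\[
\hat K := T \cup \{v \in [n] \setminus T : v \text{ is adjacent to every vertex of } T\}.
\]
Because $T \subseteq K$ and every two clique vertices are joined by an edge, $K \subseteq \hat K$. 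For any $v \notin K$, the edges from $v$ to $T \subseteq K$ are independent Bernoulli$(1/2)$ random variables (and independent of the internal edges of $S_1$ that determined $T$, since $T \subseteq S_1$ and these are two disjoint edge sets), so $\Pr[v \in \hat K] \le 2^{-|T|} \le 2^{-\alpha m} = n^{-\alpha c_2 \log 2}$. A union bound over $v \in [n] \setminus K$ gives $\hat K = K$ with probability $1 - n^{1 - \alpha c_2 \log 2}$, which is $1 - o(1)$ once $c_2 > 1/(\alpha \log 2)$; we then output $\hat K$ (verifying that it is a clique of the expected size $C \log n$ as a sanity check).

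The main obstacle is calibrating $c_2$ so that all three union bounds close simultaneously against the cost $\exp(O(c_2 \log^2 n))$ of enumerating $(S_1,S_2)$-type objects. The Stage 1 bound requires $c_2 = \Omega(1/\eta^2)$ so that the Chernoff tail $\exp(-\Omega(c_2^2 \log^2 n))$ dominates; the seed-extraction and common-neighbor filter then impose the milder constraint $c_2 = \Omega(1/\alpha)$. All of these are met by a single sufficiently large universal constant $c_2$, and the specific threshold $5/9$ in the hypothesis is calibrated precisely so that $2(5/9 - 1/2) = 1/9$ remains strictly positive after absorbing the Chernoff slack $\eta$, which is exactly what is needed to guarantee $\alpha > 0$ and hence a nontrivial lower bound on $|T_i|$.
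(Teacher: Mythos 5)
The paper proves this lemma only by citation to~\cite{HK11} (Lemma 5.3), noting that changing the density threshold from $3/5$ to $5/9$ merely changes the constant $c_2$, so there is no in-paper proof to compare against; I will evaluate your argument on its own terms. Your Step~1 is sound: by a Chernoff bound and a union bound over all $\binom{n}{m}^2$ pairs (where $m = c_2\log n$), a density of at least $5/9$ forces $|S_i\cap K| \ge \alpha m$ for an absolute constant $\alpha>0$, once $c_2$ is a large enough constant.

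The gap is in Step~2. There you treat $S_1$ as if it were fixed independently of $G$ when arguing that w.h.p.\ the maximum clique $T$ of $G[S_1]$ equals $T_1 = S_1\cap K$. But $(S_1,S_2)$ is \emph{given} to the algorithm --- in the intended application it is extracted from an approximate equilibrium and so may depend arbitrarily on $G$ --- and the high-probability statement is over $G$ alone, so the conclusion must hold uniformly over \emph{all} dense pairs. Your per-vertex bound $\Pr[u\sim T_1]\le 2^{-|T_1|} = n^{-\Omega(c_2)}$ cannot survive a union bound over the $\binom{n}{m}=\exp(\Theta(c_2\log^2 n))$ choices of $S_1$; Step~1 survived an analogous union bound only because its Chernoff tail was $\exp(-\Theta(c_2^2\log^2 n))$, quadratic rather than linear in $c_2$. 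And this is not just slack in the estimate: the claim $T=T_1$ is actually false for adversarially chosen $S_1$. As soon as $C>2c_2$ (which the lemma permits, since $C$ may be any constant $\ge c_1$ while $c_2$ is universal), w.h.p.\ there exists $v\notin K$ with at least $m$ neighbours in $K$; taking $U\subseteq K\cap N(v)$ with $|U|=m-1$ and setting $S_1=S_2=\{v\}\cup U$ gives a pair with density $1\ge 5/9$ whose unique maximum clique in $G[S_1]$ is all of $S_1$, which contains $v\notin K$. Your set $\hat K = T\cup N(T)$ then misses roughly half of $K$ (the clique vertices not adjacent to $v$). To make the seed-and-expand strategy work, one needs a recovery step that tolerates a seed only \emph{mostly} inside $K$ --- for instance a degree-based pruning on $S_1\cup S_2$ in the style of Alon--Krivelevich--Sudakov~\cite{AKS98} --- rather than relying on the maximum clique of $G[S_1]$ lying entirely inside $K$.
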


The lemma is slightly different from Lemma 5.3 of \cite{HK11}: there
we start with a bipartite subgraph of density $3/5$ instead of $5/9$
but this minor difference only changes the value of the constant $c_2$
-- the lemma holds for any constant density strictly larger than
$\frac{1}{2}$.

Let us now describe the reduction.  It is controlled by three
parameters $\alpha, \beta, \gamma \in (0,1)$.  Setting these parameters
  appropriately gives the various hardness results.

\boxit{
\begin{reduction}
  \label{red:reduction}
  Let $G = (V,E)$ be an $n$ vertex graph and $A$ its adjacency matrix
  (with $1$s on the diagonal).  Then, for parameters $\alpha, \beta, \gamma \in (0,1)$, we define a (random) bimatrix game $\G := \G(G,
    \alpha, \beta, \gamma)$ as follows.

  Let $N = n^c$ where $c = (c_2+1) \log 1/\beta$ for the universal
  constant $c_2$ of Lemma~\ref{lemma:reconstruct}.  Pick a random $N
  \times n$ matrix $B$ whose entries are i.i.d.\ $\{0,1\}$ variables
  with expectation $\beta$.  Then $\G = (M_{\mathrm{row}},
  M_{\mathrm{col}})$, where the payoff matrices are:
  \begin{align}\label{eq:simple-payoffs}
    M_{\mathrm{row}}=\left(
    \begin{array}{cc}
      \alpha A & 0\\
      B &\gamma J
    \end{array} 
    \right)
    & &
    M_{\mathrm{col}}=\left(
    \begin{array}{cc}
      \alpha A & B^{\top}\\
      0 &\gamma J
    \end{array} 
    \right),
  \end{align}
  where $J$ is the all-ones $N \times N$ matrix.
\end{reduction}
}


We conclude this section with an additional lemma 
which shows how to obtain a dense bipartite subgraph given an
approximate equilibrium of $\G$ with certain properties.  This lemma
(and its proof) is analogous to~\cite[Lemma 5.2]{HK11}, but as we need
it in larger generality we also give the proof.

\begin{lemma}
  \label{lem:dense-subgraph}
  Let $\G$ be as in Reduction~\ref{red:reduction}.  Fix any $s \in
  [0,1]$, $t \in [0,1]$ and $\eps \in [0,1]$ such that $1 - t - 3
  \sqrt{s}/2 \ge \alpha + \eps$, and let $(x, y)$ be an
  $\eps$-approximate equilibrium of $\G$ with the following two
  properties:
  \begin{itemize}
  \item Both $\|x_{[n]}\| \ge 1-t$ and $\|y_{[n]}\|
    \ge 1-t$.
  \item The conditional value $v_{\G|[n]}(x,y) \ge (1-s)\alpha$.
  \end{itemize}
  Then, given $(x,y)$ as above, we can efficiently find vertex sets
  $S_1, S_2 \subseteq [n]$ each of size $c_2 \log n$ and density $d(S_1,S_2)\ge 5/9$.
\end{lemma}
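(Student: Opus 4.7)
The plan is to extract $S_1, S_2$ by sampling from the equilibrium strategies' restrictions to the $[n]$-block. Let $\tilde{x} := x_{[n]}/\|x_{[n]}\|$ and $\tilde{y} := y_{[n]}/\|y_{[n]}\|$, which are probability distributions on $[n]$. Since the $[n]\times[n]$ blocks of both $M_{\mathrm{row}}$ and $M_{\mathrm{col}}$ equal $\alpha A$, the hypothesis $v_{\G|[n]}(x,y)\geq (1-s)\alpha$ unpacks directly to the expected edge-density bound $\E_{i\sim\tilde{x},\,j\sim\tilde{y}}[A(i,j)]\geq 1-s$. If one could sample $m := c_2\log n$ independent indices from each of $\tilde{x}, \tilde{y}$ and obtain $m$ distinct vertices to form $S_1, S_2$, then Lemma~\ref{lemma:chernoff} applied to the $m^2$ indicator variables $A(i,j)\in\{0,1\}$ would immediately give density $d(S_1,S_2)\geq 1-s-o(1)\geq 5/9$ in the relevant parameter range (namely, $s$ bounded away from $4/9$).

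The key step is guaranteeing that one in fact obtains $m$ distinct samples, i.e.\ that $\tilde{x}$ and $\tilde{y}$ are sufficiently non-concentrated; this is where the $\eps$-equilibrium hypothesis and the random matrix $B$ both come into play. Suppose toward contradiction that, say, $\tilde{y}$ has support of size less than $m$. With the choice $N = n^{(c_2+1)\log(1/\beta)}$ one has $N\beta^m \gg 1$, so with high probability over the draw of $B$ some row index $l\in[N]$ satisfies $B_{l,i}=1$ for every $i\in\mathrm{supp}(y^{(1)})$, giving $(By^{(1)})_l = \|y^{(1)}\|\geq 1-t$. The row player's pure deviation to $e_{n+l}$ then yields payoff at least $(1-t)+\gamma\|y^{(2)}\|$, which by the $\eps$-equilibrium condition must be within $\eps$ of her actual payoff $\alpha(x^{(1)})^{\top}Ay^{(1)}+(x^{(2)})^{\top}By^{(1)}+\gamma\|x^{(2)}\|\|y^{(2)}\|$. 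Upper-bounding the $\alpha A$-block contribution trivially by $\alpha$ and the cross term $(x^{(2)})^{\top}By^{(1)}$ using a Markov/Cauchy--Schwarz argument that exploits the density bound $\E[A(i,j)]\geq 1-s$ (this is what contributes the slack factor $3\sqrt{s}/2$) then contradicts the numerical hypothesis $1-t-3\sqrt{s}/2\geq\alpha+\eps$. The symmetric argument with column-player deviations into columns of $B^{\top}$ handles $\tilde{x}$.

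With non-concentration established, sampling $m$ indices independently from each of $\tilde{x}, \tilde{y}$ (oversampling by a constant factor if needed to discard duplicates) yields sets $S_1, S_2 \subseteq [n]$ of size $m = c_2\log n$ satisfying $d(S_1,S_2)\geq 5/9$ via Chernoff, as desired. The main obstacle in carrying out this plan is the non-concentration step, in particular (i) correctly quantifying ``too concentrated'' (whether via $\ell_\infty$ mass or support size) so that the $B$-row argument delivers an $l$ with $(By^{(1)})_l$ close to $\|y^{(1)}\|$, and (ii) tightly bounding the cross term $(x^{(2)})^{\top}By^{(1)}$ so as to extract precisely the numerical slack $3\sqrt{s}/2$ that appears in the hypothesis rather than a cruder $\sqrt{s}$ or $s$.
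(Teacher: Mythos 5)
Your sampling-based approach is genuinely different from the paper's, which extracts $S_1$ and $S_2$ deterministically by thresholding: $S'_1$ is the set of $i\in[n]$ with $e_i^\top A\tilde{y}\ge 1-2\sqrt{s}/3$, a Markov bound on the deficiency $1-e_i^\top A\tilde{y}$ (which has expectation $\le s$ over $i\sim\tilde{x}$) gives $\|\tilde{x}_{[n]\setminus S'_1}\|\le 3\sqrt{s}/2$, so $\|x_{S'_1}\|\ge 1-t-3\sqrt{s}/2\ge\alpha+\eps$, and the accompanying claim (in an $\eps$-equilibrium no set of $c_2\log n$ strategies in $[n]$ carries mass exceeding $\alpha+\eps$) forces $|S'_1|\ge c_2\log n$; then, with $\overline{x}$ uniform on a truncation $S_1$, the set $S'_2$ of $j$ with $\overline{x}^\top A e_j\ge 5/9$ is lower-bounded the same way. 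The $3\sqrt{s}/2$ in the hypothesis comes exactly from these two Markov steps, and the density $\ge 5/9$ holds by construction of $S'_2$, independently of $s$.

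Your plan has gaps, some of which you flag yourself. The cited Chernoff bound requires i.i.d.\ variables, but the $m^2$ indicators $A(i_a,j_b)$ share vertices and are not independent; you would need a bounded-differences inequality instead. Even granting that, sampling only yields density $\approx 1-s$, and the hypothesis $1-t-3\sqrt{s}/2\ge\alpha+\eps$ merely forces $s<4/9$, so $1-s$ may sit just above $5/9$ and the $o(1)$ loss drops you below the threshold. The most serious gap is the non-concentration step: having $|\mathrm{supp}(\tilde{y})|\ge m$ does not give $m$ distinct samples (the mass could be almost entirely on one vertex), so what is really needed is precisely the claim's statement that no small set carries mass $>\alpha+\eps$. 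Moreover the contradiction you sketch does not close. After the trivial block-by-block bounds, the deviation to row $n+l$ yields roughly $1-t-\eps\le\alpha+t$, i.e.\ $\alpha+\eps\ge 1-2t$, which is consistent with $\alpha+\eps\le 1-t-3\sqrt{s}/2$ unless $3\sqrt{s}/2>t$, a relation the hypothesis does not supply since $s$ and $t$ are independent parameters. Finally, trying to squeeze the $3\sqrt{s}/2$ slack out of the cross term $(x^{(2)})^\top B y^{(1)}$ via the density of $A$ is misdirected: $B$ is a fresh random matrix unrelated to $A$, and that cross term is simply bounded by $\|x^{(2)}\|\le t$. The slack belongs to Markov's inequality, not to the deviation estimate.
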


In the proof of Lemma~\ref{lem:dense-subgraph} we shall use the
following simple claim, which is analogous to~\cite[Claim 5.3]{HK11}:

\begin{claim}
  Let $S\subset[n]$ be a set of size $|S| \le c_2 \log n$. Then,
  w.h.p.\ over $\G$, in any $\eps$-equilibrium in the above game, the
  probability mass a column (or row) player may place on such a set
  $S$ is at most $\eps+\alpha$.
\end{claim}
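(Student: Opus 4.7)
I will prove the bound for the column player (i.e., $\|y_S\|\le\eps+\alpha$); the bound on $\|x_S\|$ follows by the obvious symmetry of the reduction. The starting observation is that every column indexed by $j\in S\subseteq[n]$ gives low payoff, regardless of the row strategy: reading off the top block of $M_{\mathrm{col}}$ and noting the zero block below it,
\[
x^\top M_{\mathrm{col}}e_j\;=\;\alpha\sum_{i\in[n]}x_iA_{ij}\;\le\;\alpha\,\|x_{[n]}\|\;\le\;\alpha.
\]
Hence whatever mass the column player assigns to $S$ is bought at payoff at most $\alpha$ per unit, while $[N']:=\{n+1,\dots,n+N\}$ columns should, intuitively, pay substantially more.

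Next I isolate the good event on $B$. Call a column $j^*\in[N']$ \emph{good for $S$} if the row $B_{j^*-n,\cdot}$ is identically $1$ on $S$. For a fixed $S$ with $|S|\le c_2\log n$, the probability that no column is good for $S$ is $(1-\beta^{|S|})^N\le\exp(-N\beta^{c_2\log n})=\exp(-n^{\log(1/\beta)})$, using the reduction's choice $N=n^c$ with $c=(c_2+1)\log(1/\beta)$. A union bound over the at most $n^{c_2\log n}$ relevant subsets $S$ leaves this failure probability super-polynomially small, so w.h.p.\ every such $S$ admits a good column $j^*_S\in[N']$.

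Condition on this event and fix any $\eps$-equilibrium $(x,y)$. Let $q=\|y_S\|$ and consider the pure deviation that sweeps all $S$-mass onto $j^*_S$, i.e.\ $y'=y-\sum_{j\in S}y_je_j+q\,e_{j^*_S}$. Using the payoff bound above,
\[
x^\top M_{\mathrm{col}}y'-x^\top M_{\mathrm{col}}y\;=\;q\cdot x^\top M_{\mathrm{col}}e_{j^*_S}-\sum_{j\in S}y_j\,x^\top M_{\mathrm{col}}e_j\;\ge\;q\bigl(x^\top M_{\mathrm{col}}e_{j^*_S}-\alpha\bigr),
\]
and since $(x,y)$ is an $\eps$-equilibrium this quantity is at most $\eps$, yielding $x^\top M_{\mathrm{col}}e_{j^*_S}\le\alpha+\eps/q$. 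On the other hand, because $B_{j^*_S-n,i}=1$ for $i\in S$, one also has the trivial lower bound $x^\top M_{\mathrm{col}}e_{j^*_S}\ge\|x_S\|+\gamma\,\|x_{[N']}\|$. I expect the final bound $q\le\eps+\alpha$ to emerge from matching these two estimates together with a uniform concentration over $B$ of the ``average'' $[N']$-payoff $\tfrac1N\sum_{j^*\in[N']}x^\top M_{\mathrm{col}}e_{j^*}$ (which concentrates around $\beta\|x_{[n]}\|+\gamma\|x_{[N']}\|$ for every $x$ at once, via a union bound over $n$ column averages of $B$).

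\textbf{Main obstacle.} Step~3 by itself only gives $x^\top M_{\mathrm{col}}e_{j^*_S}\le\alpha+\eps/q$, a bound that does not yet contradict $q>\eps+\alpha$, since the matching lower bound involves $\|x_S\|$ and $\|x_{[N']}\|$ rather than $q$ directly. The main technical issue is therefore to convert the ``per-$j^*_S$'' estimate into an additive bound of the form $\eps+\alpha$. The cleanest route is to replace the pure deviation $e_{j^*_S}$ by the uniform distribution on $[N']$: then the gain is at least $q\bigl(\overline{p}_{[N']}-\alpha\bigr)$, where w.h.p.\ over $B$ and uniformly over all $x$ we have $\overline{p}_{[N']}\ge\min(\beta,\gamma)-o(1)$; the $\eps$-equilibrium inequality then directly controls $q$. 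Managing the dependence of $x$ on $B$ in this uniform concentration step is the place I expect the real work to happen.
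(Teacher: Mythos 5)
Your plan breaks at step 3, and the gap you flag under ``Main obstacle'' is a symptom of choosing the wrong deviating player. You try to bound the column player's mass $\|y_S\|$ via a deviation by the \emph{column} player herself, which can only ever give a \emph{multiplicative} inequality in $q$ and hence can never produce the stated additive bound. The argument the paper has in mind (following HK11's Claim~5.3) bounds $\|y_S\|$ via a deviation by the \emph{row} player: the ``good row $i$ of $B$'' in the proof sketch is a pure strategy $e_{n+i}$ for the row player, and since $B_{ij}=1$ for all $j\in S$ one has $e_{n+i}^\top M_{\mathrm{row}}\,y=\sum_{j\in[n]}B_{ij}y_j+\gamma\|y_{[N']}\|\ge\|y_S\|$. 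The $\eps$-equilibrium condition for the \emph{row} player then gives $\|y_S\|\le x^\top M_{\mathrm{row}}\,y+\eps$, which combined with an upper bound of $\alpha$ on the row player's equilibrium payoff yields $\|y_S\|\le\alpha+\eps$ directly (and symmetrically, $\|x_S\|$ is bounded via the column player deviating to the column $n+i$).

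Your sweep deviation instead yields $q\bigl(x^\top M_{\mathrm{col}}e_{j^*_S}-\alpha\bigr)\le\eps$, i.e.\ $q\le\eps/(\|x_S\|-\alpha)$, which is uninformative about $q$ unless $\|x_S\|$ is already known to be bounded well above $\alpha$ — exactly the dependence you can't control. The uniform-on-$[N']$ fallback you propose is even weaker: the average $\overline{p}_{[N']}$ concentrates near $\beta\|x_{[n]}\|+\gamma\|x_{[N']}\|<\alpha$, so $q(\overline{p}_{[N']}-\alpha)\le\eps$ is vacuously satisfied and carries no information about $q$. Once you switch to the row player's deviation, the uniform concentration over $B$ that you anticipate needing (and the associated dependence of $x$ on $B$) simply disappears. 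Your probabilistic preliminaries — existence of a good row of $B$ for every small $S$ w.h.p., via the union bound over the at most $n^{c_2\log n}$ relevant subsets — and your step~1 observation that every column in $[n]$ pays the column player at most $\alpha$ are both correct and match the paper; they are just not deployed on the right player.
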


The proof of the claim follows by noting that for such $S$, with high
probability, there is a row $i$ of $B$ such that $B_{ij} = 1$ for all
$j \in S$.  Indeed, the probability that such a row does not exist is
exactly
$$
\left(1 - \beta^{-|S|}\right)^N \le \exp(-n^{-c_2 \log 1/\beta} N) =
\exp(-n^{\log 1/\beta}).
$$ 
We omit the remaining details as they are identical to~\cite[Claim
  5.3]{HK11}.


\begin{proof}[Proof of Lemma~\ref{lem:dense-subgraph}]
  Let $\tilde{x}, \tilde{y} \in [0,1]^n$ be the strategies $x$ and $y$
  conditioned on playing in $[n]$.  That is,
  $\tilde{x}=x_{[n]}/\|x_{[n]}\|$, and
  $\tilde{y}=y_{[n]}/\|y_{[n]}\|$.  The second given property of the
  strategy pair $(x,y)$ can be rephrased as $\tilde{x}^{\top} A
  \tilde{y} \ge 1-s$.
  
  Let
  $$S'_1 = \left\{ i \in [n]: e_i^{\top} A \tilde{y} \ge
  1-2\sqrt{s}/3\right\}.$$ To obtain a lower bound on the cardinality $|S'_1|$, we shall bound
  $\|x_{S'_1}\|$ by $\eps + \alpha$ and then appeal to the claim.  By the first given property of $(x,y)$, we have 
  $$\|x_{S'_1}\| \ge 1-t - \|x_{[n] \setminus S'_1}\| \ge 1 - t - \|\tilde{x}_{[n]\setminus  S'_1}\|.$$
  We can bound the last term, $\|\tilde{x}_{[n] \setminus S'_1}\|$, from above using Markov's inequality, viz.,
  $$
  \|\tilde{x}_{[n] \setminus S'_1}\| = \Pr_{i \sim \tilde{x}}\left[1 - e_i^{\top} A \tilde{y} > 2\sqrt{s} /3\right] \le \frac{s}{2 \sqrt{s} / 3} = 3\sqrt{s}/2.
  $$ Thus we have $\|x_{S'_1}\| \ge 1 - t - 3\sqrt{s}/2 \ge \eps +
  \alpha$ and so by the claim, $|S'_1| \ge c_2 \log n$.  Truncate $S'_1$
  by taking any arbitrary subset $S_1\subseteq S'_1$ of cardinality
  $|S_1|=c_2 \log n$.
  
  Now let $\overline{x} \in [0,1]^n$ be the uniform distribution over
  $S_1$.  Note that $\overline{x}^{\top} A \tilde{y} \ge
  1-2\sqrt{s}/3$.  Let
  $$
  S'_2 = \left\{j \in [n]: \overline{x}^{\top} A e_j \ge 5/9\right\}.
  $$ The argument to lower bound $|S'_2|$ is similar to the argument
  for $|S'_1|$.  We get
  $$
  \|\tilde{y}_{[n] \setminus S'_2}\| = \Pr_{j \sim \tilde{y}}\left[1 - \overline{x}^{\top} A e_j > 4/9\right] \le \frac{2 \sqrt{s}/3}{4/9} = 3\sqrt{s}/2.
  $$
  and therefore
  $$\|x_{S'_2}\| \ge 1 - t - 3 \sqrt{s}/2 \ge \eps + \alpha,$$
  as desired.  Again, we can truncate $S'_2$ by taking a subset $S_2\subseteq S'_2$ of cardinality $|S'_2|=c_2\log n$. By construction $d(S_1, S_2) \ge 5/9$, and we are done.
\end{proof}

\subsection{Hardness for $\eps$ close to $\frac{1}{2}$}\label{sec:eps-hardness}

To obtain Theorem~\ref{thm:eps-hardness} the main requirement is to
set $\alpha = \frac{1}{2} + O(\eta)$.  The values of $\beta$ and $\gamma$ are
essentially irrelevant in this case -- the only thing needed is
that $\beta$ is bounded away from both $0$ and $\alpha$ and that $\gamma
\le \frac{1}{2}$.

\begin{lemma}
  \label{lem:weight-in-A}
  Let $\alpha = \frac{1}{2} + t$, $\gamma \le \frac{1}{2}$ and $\G$ be the game of
  Reduction~\ref{red:reduction}.  Then for any pair of strategies
  $(x,y)$ with value at least $v_\G(x,y) \ge \alpha - t^2$ it holds
  that $\|x_{[n]}\|$ and $\|y_{[n]}\|$ are both at least $1 - t$.
\end{lemma}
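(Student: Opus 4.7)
The plan is to upper-bound the value $v_\G(x,y)$ as a simple function of the masses $p := \|x_{[n]}\|$ and $q := \|y_{[n]}\|$, and then combine this upper bound with the given lower bound on $v_\G(x,y)$ to force $pq \ge 1 - t$. Since $p, q \in [0,1]$, this will immediately yield $p \ge 1 - t$ and $q \ge 1 - t$, as desired.

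For the upper bound, I would split the symmetrized payoff matrix $(M_{\mathrm{row}} + M_{\mathrm{col}})/2$ into the four blocks corresponding to the partition $[n] \cup [N]$ of strategies. In the top-left $n \times n$ block both matrices equal $\alpha A$, so every entry of the symmetrized matrix is at most $\alpha$. Outside that block, every entry is at most $\tfrac{1}{2}$: in the top-right block $M_{\mathrm{row}} = 0$ and $M_{\mathrm{col}} = B^{\top}$ has $\{0,1\}$ entries; the bottom-left block is symmetric; and in the bottom-right block both matrices equal $\gamma J$ with $\gamma \le \tfrac{1}{2}$.

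Since the total probability mass that $(x,y)$ places in the top-left block is exactly $pq$, with the complementary mass $1-pq$ distributed among the other three blocks, these per-cell bounds immediately give
$$v_\G(x,y) \;\le\; \alpha \cdot pq + \tfrac{1}{2}(1 - pq) \;=\; \tfrac{1}{2} + t \cdot pq.$$
Combining with the hypothesis $v_\G(x,y) \ge \alpha - t^2 = \tfrac{1}{2} + t - t^2$ then yields $pq \ge 1 - t$, whence $p, q \ge 1 - t$.

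I do not foresee any real obstacle: the only step requiring thought is the block decomposition, and the hypothesis $\gamma \le \tfrac{1}{2}$ is precisely tailored so that the bottom-right block fits within the $\tfrac{1}{2}$ bound. Note also that the argument does not use the $\eps$-equilibrium property of $(x,y)$ at all — only the lower bound on $v_\G(x,y)$ matters.
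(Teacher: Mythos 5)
Your proof is correct and takes essentially the same route as the paper's: bound the value by $\alpha\, pq + \tfrac12(1-pq) = \tfrac12 + t\,pq$ using the per-block entry bounds, then rearrange to get $pq \ge 1-t$. The only difference is that you spell out the block-by-block verification of the $\tfrac12$ bound, which the paper leaves implicit.
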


\begin{proof}
  Let $p = \|x_{[n]}\|$ and $q = \|y_{[n]}\|$.  As the value of any
  outcome outside the $\alpha A$ block is at most $\frac{1}{2}$, we have that
  the value of $(x,y)$ is at most
  $$pq \alpha + (1-pq) \frac{1}{2} = t pq + \frac{1}{2},$$
  so that if the value is at least $\alpha - t^2$ we have
  \begin{align*}
  t pq + \frac{1}{2} &\ge \alpha - t^2 \\
  pq &\ge \frac{\alpha - t^2 - 1/2}{t} = 1-t
  \end{align*}
  Since $p, q \in [0,1]$, it follows that they are both at least
  $1-t$.
\end{proof}

\begin{observation}
  \label{obs:value-after-condition}
  Let $(x,y)$ be any pair of strategies with value $v_{\G}(x,y) \ge
  \frac{1}{2}$ and $\|x_{[n]}\| > 0$, $\|y_{[n]}\| > 0$.  Then
  $v_{\G|[n]}(x,y) \ge v_{\G}(x,y)$, provided that $\gamma \le \frac{1}{2}$.
\end{observation}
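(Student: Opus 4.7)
The plan is to decompose the value $v_\G(x,y)$ according to whether each player plays in $[n]$ or outside of $[n]$, and to observe that any contribution coming from outside the $\alpha A$ block carries average payoff at most $\tfrac12$, whereas the $[n]\times[n]$ block contributes exactly $\|x_{[n]}\|\cdot\|y_{[n]}\|\cdot v_{\G|[n]}(x,y)$. From this a single algebraic manipulation pins down the inequality.

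Concretely, I would set $p=\|x_{[n]}\|$ and $q=\|y_{[n]}\|$ and write
\[
v_\G(x,y) \;=\; pq\cdot v_{\G|[n]}(x,y) \;+\; (\text{contribution from entries outside the top-left block}).
\]
The remaining three blocks look, per entry, as follows: the top-right block averages the row-payoff $0$ and the column-payoff $B^\top_{ij}\in[0,1]$, hence contributes at most $\tfrac12$; by symmetry the bottom-left block contributes at most $\tfrac12$; and the bottom-right block contributes exactly $\gamma\leq\tfrac12$. Since the total probability mass outside the $[n]\times[n]$ block is $1-pq$, this yields the clean bound
\[
v_\G(x,y) \;\leq\; pq\cdot v_{\G|[n]}(x,y) \;+\; \tfrac12(1-pq).
\]

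Rearranging gives $v_{\G|[n]}(x,y)\geq \tfrac{1}{2}+\tfrac{2v_\G(x,y)-1}{2pq}$, and now the assumption $v_\G(x,y)\geq \tfrac12$ makes the numerator $2v_\G(x,y)-1$ nonnegative. Together with $pq\leq 1$ this implies
\[
\frac{2v_\G(x,y)-1}{2pq} \;\geq\; \frac{2v_\G(x,y)-1}{2} \;=\; v_\G(x,y) - \tfrac12,
\]
so that $v_{\G|[n]}(x,y)\geq v_\G(x,y)$, as desired. The only non-mechanical step is the per-entry bound on the three non-$\alpha A$ blocks, and even that is immediate once one notices that in each of those blocks at least one of the two players' payoffs is $0$ (or, in the $\gamma J$ block, that both are $\gamma\leq\tfrac12$); no genuine obstacle is anticipated.
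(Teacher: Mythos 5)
Your proof is correct, and it is the natural argument for this observation: decompose $v_\G(x,y)$ into the $[n]\times[n]$ contribution $pq\cdot v_{\G|[n]}(x,y)$ plus the rest, observe the rest averages at most $\tfrac12$ per entry (one player gets $0$ in each of the off-diagonal $B$-blocks, and $\gamma\le\tfrac12$ in the $\gamma J$-block), and rearrange using $v_\G(x,y)\ge\tfrac12$ and $pq\le 1$. The paper states this as an observation without a written proof, but the key ingredient --- that every outcome outside the $\alpha A$ block has average payoff at most $\tfrac12$ --- is exactly what is invoked in the proof of Lemma~\ref{lem:weight-in-A}, so you have reconstructed the intended reasoning. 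A minor stylistic remark: since $v_\G(x,y)$ is a convex combination $pq\,w + (1-pq)\,r$ with $w=v_{\G|[n]}(x,y)$ and $r\le\tfrac12\le v_\G(x,y)$, one can reach the conclusion in one line via $pq\,w = v_\G - (1-pq)\,r \ge v_\G - (1-pq)\,v_\G = pq\,v_\G$, but your more explicit rearrangement is equally valid.
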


Plugging this into Lemma~\ref{lem:dense-subgraph}, we can now easily
complete the proof of hardness for $\eps$ close to $\frac{1}{2}$.

\begin{theorem}[Detailed Statement of Theorem~\ref{thm:eps-hardness}]
  \label{thm:eps-hardness-formal}
  For every $\eta>0$ there exist $\delta = \Omega(\eta^2)$, $\alpha
  \ge \frac{1}{2}$ and universal constant $C$ not depending on $\eta$ such
  that the following holds.  Given a graph $G = (V,E)$ we can
  in randomized polynomial time construct a bimatrix game $\G$ with maximum value $\alpha$ (over all strategy pairs) such
  that, if $G=G(n,\frac{1}{2})$ with a hidden clique of size $C\log n$, the following holds (w.h.p.\ over $G$ and $\G$):
  \begin{description}
  \item[Completeness] 
There is a Nash equilibrium
    $(x,y)$ with value $\alpha$.
  \item[Soundness]  
 Given any $(\frac{1}{2}-\eta)$-equilibrium with value $\geq\alpha-\delta$, we can efficiently recover the hidden clique.
  \end{description}
\end{theorem}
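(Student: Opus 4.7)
The plan is to instantiate Reduction~\ref{red:reduction} with parameters $t=\eta/10$, $\alpha=\tfrac12+t$, $\beta=\tfrac14$, $\gamma=\tfrac12$ and $\delta=t^2=\Omega(\eta^2)$, taking $C$ to be a sufficiently large universal constant (fixed at the end of the completeness analysis). The claim that the maximum value of $\G$ equals $\alpha$ follows immediately by inspecting the four blocks of $\tfrac12(M_{\mathrm{row}}+M_{\mathrm{col}})$: each entry is at most $\alpha$ in the $\alpha A$ block, at most $\tfrac12$ in the $B$ and $B^{\top}$ off-diagonal blocks, or equal to $\gamma\le\tfrac12$ in the $\gamma J$ block, so the maximum $\alpha$ is attained at any pair of clique vertices.

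For completeness I would show that the pair $(x^\star,y^\star)$ in which both players play uniformly on the hidden clique $K$ is an \emph{exact} Nash equilibrium of value $\alpha$. At $(x^\star,y^\star)$ both payoffs equal $\alpha$ since $K$ is a clique and $A$ has ones on the diagonal. The task is to rule out profitable pure deviations. A deviation to $i\in[n]\setminus K$ yields payoff $\alpha|N(i)\cap K|/|K|\le\alpha$ trivially, while a deviation to a row $i\in[N]$ in the lower block yields $|K|^{-1}\sum_{j\in K}B_{ij}$, the mean of $|K|=C\log n$ i.i.d.\ Bernoulli$(\beta)$ variables. By Lemma~\ref{lemma:chernoff} and a union bound over the $N=n^c$ candidate rows, this maximum is at most $\beta+O(\sqrt{c/C})$ w.h.p.; since $\alpha-\beta\ge\tfrac14$ is $\Omega(1)$ independently of $\eta$, we can make this bound strictly less than $\alpha$ by choosing $C$ large enough (depending only on $c$). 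The column player is handled symmetrically via $B^{\top}$.

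For soundness, given a $(\tfrac12-\eta)$-equilibrium $(x,y)$ of value at least $\alpha-\delta=\alpha-t^2$, Lemma~\ref{lem:weight-in-A} yields $\|x_{[n]}\|,\|y_{[n]}\|\ge 1-t$, and Observation~\ref{obs:value-after-condition} then gives $v_{\G|[n]}(x,y)\ge\alpha-t^2\ge(1-s)\alpha$ for $s:=t^2/\alpha\le 2t^2$. To invoke Lemma~\ref{lem:dense-subgraph} with $\eps=\tfrac12-\eta$, the hypothesis $1-t-\tfrac32\sqrt{s}\ge\alpha+\eps$ reduces to $2t+\tfrac32\sqrt{s}\le\eta$; with $t=\eta/10$ and $\sqrt{s}\le\sqrt{2}\,t$ this becomes $(2+\tfrac32\sqrt{2})\eta/10\le\eta$, which holds comfortably. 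The lemma then produces sets $S_1,S_2\subseteq[n]$ of size $c_2\log n$ and density $\ge 5/9$, and (for $C\ge c_1$) Lemma~\ref{lemma:reconstruct} recovers the hidden clique.

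The main obstacle is the completeness step: one must check w.h.p.\ over $G$ and $B$ that no pure deviation from the uniform-on-$K$ profile exceeds $\alpha$, simultaneously across all $n+N=n^{O(1)}$ candidate rows and columns. This is the only place where $C$ is constrained, and it is absorbed into the universal constant precisely because the critical gap $\alpha-\beta$ is $\Omega(1)$ independent of $\eta$. The rest of the argument is careful bookkeeping so that the scalings $t=\Theta(\eta)$, $\delta,s=\Theta(\eta^2)$, and $\eps=\tfrac12-\eta$ mesh correctly with Lemmas~\ref{lem:weight-in-A} and~\ref{lem:dense-subgraph}.
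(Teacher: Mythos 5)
Your proposal is correct and follows essentially the same approach as the paper's proof: instantiate Reduction~\ref{red:reduction} with $\alpha=\tfrac12+t$ for small $t=\Theta(\eta)$, use a Chernoff/union bound over the $B$-block for completeness, and combine Lemma~\ref{lem:weight-in-A}, Observation~\ref{obs:value-after-condition} and Lemma~\ref{lem:dense-subgraph} for soundness. The paper chooses $\beta=\gamma=\tfrac13$ and $t=4\eta/7$ rather than your $\beta=\tfrac14$, $\gamma=\tfrac12$, $t=\eta/10$; your more conservative $t$ and explicit $s=t^2/\alpha$ (instead of the paper's shortcut $s=t^2$) make the numerical verification of the hypothesis of Lemma~\ref{lem:dense-subgraph} cleaner, but the argument is the same.
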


\begin{proof}
  Given a graph $G = (V,E)$, we apply Reduction~\ref{red:reduction} with parameters as follows.
  For some $t>0$ to be determined momentarily, let $\alpha = \frac{1}{2} + t$,
  $\beta = \gamma = 1/3$, $\delta = t^2$.

  For the completeness, we shall show that having both players play
  uniformly over the hidden clique is an equilibrium.  For this to
  hold, we have to make sure that there is no row in $B$ with average
  value at least $\alpha$ in the positions corresponding to the
  clique.  By Lemma~\ref{lemma:chernoff} and a union bound over all
  rows of $B$ we can bound the probability of this happening by
  $$N e^{-2 (\alpha-\beta)^2 C \log n} \le N \cdot n^{-C/18}.$$ If $C$
  is a sufficiently large universal constant (e.g., $C = 18 \cdot
  (c_2+1) \log 1/\beta + 1$) this probability is $o(1)$ and the
  completeness property follows.

  For the soundness, consider any $(\frac{1}{2}-\eta)$-approximate equilibrium $(x,
  y)$ with value at least $v_{\G}(x,y) \ge \alpha - \delta=\alpha-t^2$.  By
  Lemma~\ref{lem:weight-in-A} both $\|x_{[n]}\|$ and $\|y_{[n]}\|$ are
  at least $1-t$.  Furthermore, by
  Observation~\ref{obs:value-after-condition} we have $v_{\G|[n]}(x,y)
  \ge \alpha - \delta \ge \alpha(1 - t^2)$.

  Now if $\frac{1}{2}-\eta \le 1 - t - 3t/2 - \alpha = \frac{1}{2} - 7t/4$ we can apply
  Lemma~\ref{lem:dense-subgraph} and extract a dense bipartite
  subgraph of $G$ which can be plugged in to
  Lemma~\ref{lemma:reconstruct} to obtain the hidden clique.  Setting
  $t = 4\eta/7$ we get the result.
\end{proof}

\subsection{Distinguishing Between Low and High Value}\label{sec:delta-hardness}

For Theorem~\ref{thm:delta-hardness} the choices of all three
parameters $\alpha, \beta, \gamma$ of Reduction~\ref{red:reduction}
are important.  We are going to set $\gamma$ close to $0$, and $\alpha
> \beta$ both close to $1$.

On a high level, the proof has the same structure as that of
Theorem~\ref{thm:eps-hardness}.  However, in the current setting
Lemma~\ref{lem:weight-in-A} and
Observation~\ref{obs:value-after-condition} do not apply.  To arrive
at similar conclusions we use a different argument, exploiting the fact that
$(x,y)$ is a $\eps$-equilibrium.  Essentially, the argument is as
follows: the off-diagonal blocks ($B$ and $B^{\top}$) are not stable, since
there is too much incentive for at least one player to
deviate.  Therefore, most of the probability mass in an equilibrium is
concentrated either in the $\alpha A$ block, or in the $\gamma J$
block. However, in the $\gamma J$ block, the value is too small. So,
if the equilibrium has even slightly larger value, its mass must be
concentrated in the $\alpha A$ block.  There it has to actually have
very large value, since otherwise, there is incentive for both players
to deviate to $B$ and $B^{\top}$ to get reward $\beta$.  The rest of the
proof follows as before.

Formally, we have the following lemma, showing that (under certain
conditions) any $\eps$-equilibrium with non-negligible value must
satisfy the conditions of Lemma~\ref{lem:dense-subgraph}.

\begin{lemma}\label{lem:goodvalue-equis}
  Fix a parameter $\eps \in (0,1)$, let $\alpha - \beta \le \eps$, and
  $\gamma = 4 \sqrt{\eps}$ and consider the game $\G$ as in
  Reduction~\ref{red:reduction}.

  Then, w.h.p.\ over $\G$, any $\eps$-equilibrium $(x,y)$ with value
  more than $5 \sqrt{\eps}$ satisfies:
  \begin{itemize}
  \item Both $\|x_{[n]}\|$ and
    $\|y_{[n]}\|$ are at least $1 - \sqrt{\eps}$.
  \item
    $v_{\G|[n]}(x,y) \ge \alpha - 3\eps$.
  \end{itemize}
\end{lemma}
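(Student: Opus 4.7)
The plan is to apply the $\eps$-equilibrium condition in two stages: first to constrain the probability mass outside the $\alpha A$ block (yielding $p,q \ge 1-\sqrt{\eps}$, where $p := \|x_{[n]}\|$ and $q := \|y_{[n]}\|$), and then to constrain the density inside that block (yielding $v_{\G|[n]} \ge \alpha-3\eps$). Throughout, write $I_n := x_{[n]}^\top A y_{[n]}$ so that $v_{\G|[n]}(x,y) = \alpha I_n/(pq)$, and let $R := x^\top M_{\mathrm{row}} y$. A preliminary step shows that, w.h.p.\ over $B$, $\max_{i\in[N]}(By)_i \ge \beta\|y\| - \delta$ for every $y \in [0,1]^n$ (and symmetrically for $B^\top$), with $\delta = o(\eps)$; this follows by Chernoff on each column average $(1/N)\sum_i B_{ij}$ plus a union bound over $j\in[n]$, then using $\max \ge$ average.

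The central technical step is a \emph{support-based upper bound} on $R$. Row's best pure deviation in $[N]$ has payoff $\ge \beta q + \gamma(1-q) - \delta$, so the equilibrium condition gives $R \ge \beta q + \gamma(1-q) - \eps - \delta$. For an upper bound, for each $i \in [N]$ the $\eps$-equilibrium gives $(By_{[n]})_i + \gamma(1-q) \le R + \eps$; averaging with weights $x_i$ yields $x_{[N]}^\top B y_{[n]} \le (1-p)\bigl(R+\eps-\gamma(1-q)\bigr)$. Plugging into the identity $R = \alpha I_n + x_{[N]}^\top B y_{[n]} + \gamma(1-p)(1-q)$ and cancelling the $\gamma$ terms yields the clean bound
\[
  pR \le \alpha I_n + (1-p)\eps.
\]
Combining this with the lower bound on $R$, using $I_n \le pq$ and $\alpha - \beta \le \eps$, yields $p\gamma(1-q) \le 2\eps + \delta$ and symmetrically $q\gamma(1-p) \le 2\eps + \delta$. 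Since $\gamma = 4\sqrt{\eps}$, we conclude $f := p(1-q) + q(1-p) \le \sqrt{\eps} + o(\sqrt{\eps})$.

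From $f = p+q-2pq$ and $pq \le (p+q)^2/4$, the sum $p+q$ must lie either in $[0,\,\sqrt{\eps}+o(\sqrt{\eps})]$ or in $[2-\sqrt{\eps}-o(\sqrt{\eps}),\,2]$. The first case contradicts the value hypothesis, since then $V \le \alpha pq + f/2 + \gamma(1-p)(1-q) \le o(\sqrt{\eps}) + \sqrt{\eps}/2 + 4\sqrt{\eps} < 5\sqrt{\eps}$; hence $p+q \ge 2-\sqrt{\eps}-o(\sqrt{\eps})$, giving $p,q \ge 1-\sqrt{\eps}$. For the conditional value, the same two bounds on $R$ rearrange to $p\beta q - \alpha I_n \le \eps + \delta$, so $\alpha(pq - I_n) \le pq(\alpha-\beta) + \eps + \delta \le 2\eps + \delta$. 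Dividing by $pq \ge (1-\sqrt{\eps})^2 \ge 1 - 2\sqrt{\eps}$ yields $v_{\G|[n]} \ge \alpha - (2\eps+\delta)/(1-2\sqrt{\eps}) \ge \alpha - 3\eps$ for $\eps$ small.

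The hard part is the sharp bound $pR \le \alpha I_n + (1-p)\eps$: the naive estimate $x_{[N]}^\top B y_{[n]} \le (1-p)q$, or even a Bernstein-type concentration of this bilinear form around its mean $\beta(1-p)q$, only yields an $O(\sqrt{\eps})$ slack in the final conditional value (the natural fluctuation of the form is $\Omega(\eps^{1/4})$, which dominates when the mean is $O(\sqrt{\eps})$). Exploiting the $\eps$-equilibrium condition on each individual pure strategy $i \in [N]$---not just through the max deviation---is what yields the sharper $O(\eps)$ slack needed for $v_{\G|[n]} \ge \alpha - 3\eps$.
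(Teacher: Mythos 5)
Your proof is essentially correct and gives a valid alternative derivation, but it is organized quite differently from the paper's, and two things are worth flagging.

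The paper's argument considers a \emph{single} deviation: shift all the row player's mass from $[n]$ to the uniform distribution over the $B$-rows. This yields the inequality $p(1-q)\gamma - pq(\alpha-\beta) \le \eps$ directly, and (by replacing $\alpha$ with the actual conditional payoff $w$) the bound $w \ge \beta - \eps/(pq)$. Your argument instead derives the identity $R = \alpha I_n + x_{[N]}^\top B y_{[n]} + \gamma(1-p)(1-q)$, bounds the bilinear form by applying the $\eps$-equilibrium inequality at each pure strategy $e_i$ ($i\in[N]$) and averaging with weights $x_i$, and combines this with a lower bound on $R$ coming from the best pure deviation. Both routes work, and your $p+q$ case analysis (either $p+q \lesssim \sqrt\eps$, contradicting the value, or $p+q \gtrsim 2-\sqrt\eps$) is a somewhat cleaner way to conclude $p,q \ge 1-\sqrt\eps$ than the paper's pointwise factoring.

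Your closing paragraph, however, is mistaken about what is ``the hard part.'' You argue that bounding $x_{[N]}^\top B y_{[n]}$ by concentration around its mean is infeasible because the fluctuation is $\Omega(\eps^{1/4})$, and that therefore the per-strategy equilibrium constraints are essential. But the paper never needs to control the bilinear form $x_{[N]}^\top B y_{[n]}$: deviating to the \emph{uniform} distribution over $B$-rows replaces it by $\tfrac1N\sum_i (B\tilde y)_i = \sum_j \tilde y_j \cdot \tfrac1N\sum_i B_{ij}$, which only requires the $n$ column averages of $B$ to concentrate around $\beta$. Those are averages of $N=n^c$ i.i.d.\ Bernoullis, so after a union bound the error is $O(\sqrt{\log n / N})$, polynomially small and hence $o(\eps)$, not $\Omega(\eps^{1/4})$. (This is the same concentration you invoke for your preliminary max claim.) So the per-strategy averaging is a valid alternative but not a necessity; the paper's single-deviation argument already achieves the $O(\eps)$ slack.

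There is also a minor algebraic slip in your last step. From $p\beta q - \alpha I_n \le \eps + \delta$ you pass to $\alpha(pq-I_n) \le pq(\alpha-\beta)+\eps+\delta \le 2\eps+\delta$ and then divide by $pq$. Since $pq \ge (1-\sqrt\eps)^2$ can be as small as $16/25$ (when $\eps$ is near $1/25$), this yields only $v_{\G|[n]} \ge \alpha - \tfrac{25}{16}(2\eps+\delta) \approx \alpha - 3.13\eps$, which does not quite meet the stated $\alpha - 3\eps$. The fix is to divide by $pq$ \emph{before} invoking $\alpha - \beta \le \eps$: from $\beta pq - \alpha I_n \le \eps+\delta$ you get $v_{\G|[n]} \ge \beta - (\eps+\delta)/pq \ge \beta - \tfrac{25}{16}(\eps+\delta)$, and then $\beta \ge \alpha - \eps$ gives $v_{\G|[n]} \ge \alpha - \eps - \tfrac{25}{16}(\eps+\delta) \ge \alpha - 3\eps$ for $\delta$ sufficiently small. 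This is exactly the order of operations the paper uses.
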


\begin{proof}
  Consider any $\eps$-equilibrium $(x,y)$ with value more than
  $5\sqrt{\eps}$.  Note that this trivially implies that $\eps \le
  1/25$.

  Let $p = \|x_{[n]}\|$ and $q = \|y_{[n]}\|$ be the probability mass
  that the row (resp.\ column) player assigns to the first $n$
  strategies in this equilibrium.  We begin with the first item, i.e.,
  the lower bound on $p$ and $q$.

  Consider the row player's incentive to deviate by shifting the
  probability mass in the first $n$ rows to the uniform distribution
  over the remaining rows.  When the column player is playing outside
  the first $n$ columns, this deviation changes the row player's
  payoff from $0$ to $\gamma$, and when the column player is playing
  in one of the first $n$ columns the row player's payoff decreases by
  at most $\alpha - \beta + o(1)$.  Let us ignore this $o(1)$
  term. Since this is an $\eps$-equilibrium, we have
  \begin{equation}
    \label{eqn:deviation-estimate}
    p(1-q)\gamma-pq(\alpha-\beta)\leq\eps, 
  \end{equation} and so
  \begin{align*}
    p(1-q)&\leq \frac{1}{\gamma}\left(\eps+pq(\alpha-\beta)\right)\\
    &\leq \frac{1}{\gamma}\left(\eps+\alpha-\beta\right)\\
    &\le \sqrt{\eps}/2
  \end{align*}
  Considering also the symmetric argument for the column player, this gives
  \begin{equation}\label{eq:p1-q-bound}
    \max\{p(1-p),q(1-q)\}\leq\max\{p(1-q),q(1-p)\}\leq \sqrt{\eps}/2,
  \end{equation} 
  where the first inequality holds trivially for all $p,q\in[0,1]$.
  The 
   inequality $p(1-p) \le \sqrt{\eps}/2$ together with the
  constraint $\eps \le 1/25$ implies that either $p \le
  \sqrt{\eps}$ or $p \ge 1 - \sqrt{\eps}$.  The inequality for
  $q$ gives an analogous bound.  Moreover, it cannot be the case that
  $p<\frac12<q$ or $q<\frac12<p$, since then we would have
  $\max\{p(1-q),q(1-p)\}\geq\frac14$,
  contradicting~\eqref{eq:p1-q-bound}. Thus, it follows that either
  $p,q\leq \sqrt\eps$ or $p,q\geq 1-\sqrt\eps$.
  
  Let us now exclude the first option.  Suppose for contradiction that
  $p,q\leq \sqrt{\eps}$.  Then the value of the equilibrium is at most
  $$
  pq \alpha + (p(1-q)+q(1-p))\cdot \frac{1}{2} + (1-p)(1-q) \gamma \le
  \eps \alpha + \sqrt{\eps}/2 + \gamma \le \eps \alpha + 4.5 \sqrt{\eps} < 5 \sqrt{\eps},
  $$
  contradicting the assumption that $(x,y)$ has value more than $5
  \sqrt{\eps}$.  Hence both $p$ and $q$ are at least $1 -
  \sqrt{\eps}$.

  It remains to give a lower bound on the conditional value $w :=
  v_{\G|[n]}(x,y)$ obtained when playing inside $\alpha A$.  As above,
  consider the incentive for the row player to deviate to the uniform
  distribution over.  To be more precise, we can use $w$
  instead of $\alpha$ in the bound \eqref{eqn:deviation-estimate}.
  Solving for $w$ gives
  $$
  w \ge \beta + \frac{p(1-q)\gamma - \eps}{pq} \ge \beta - \frac{\eps}{pq} \ge \beta - 2\eps \ge \alpha - 3\eps,
  $$ where we used that $\frac{1}{pq} \le \frac{1}{(1-\sqrt{\eps})^2} \le \frac{25}{16} < 2$ for $\eps \le 1/25 $.
\end{proof}

Equipped with Lemma~\ref{lem:goodvalue-equis}, it is easy to finish the
proof of Theorem~\ref{thm:delta-hardness}.

\begin{theorem}[Detailed statement of Theorem~\ref{thm:delta-hardness}]
  \label{thm:delta-hardness-formal}
  For every constant $\eta>0$ there exist $\eps = \Omega(\eta^2)$ and
  $C = O(1/\eta^3)$ such that the following holds.  Given a
 graph $G$, we can in randomized polynomial time construct a
  bimatrix game $\G$ such that, if $G=G(n,\frac{1}{2})$ with a hidden clique of size $C\log n$, the following holds (w.h.p.\ over $G$
  and $\G$):
  \begin{description}
  \item[Completeness] There is a Nash equilibrium
    $(x,y)$ with both players earning payoff $1-\eta$.
  \item[Soundness] 
    Given any $\eps$-equilibrium with value $\geq\eta$, we can efficiently recover the hidden clique.
  \end{description}
\end{theorem}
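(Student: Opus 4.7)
My plan is to apply Reduction~\ref{red:reduction} with the parameters $\alpha = 1-\eta$ (so that the ``good'' block pays close to $1$), $\beta = \alpha - \eps$ (so that $\alpha - \beta \le \eps$, matching the hypothesis of Lemma~\ref{lem:goodvalue-equis}), and $\gamma = 4\sqrt{\eps}$ (again matching that lemma). I will take $\eps = c\eta^2$ for a sufficiently small absolute constant $c$, chosen so that in particular $5\sqrt{\eps} \le \eta$, which is what activates Lemma~\ref{lem:goodvalue-equis} in the soundness analysis.

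For completeness, the candidate equilibrium is the pair of strategies in which both players play uniformly on the hidden clique $K$ of size $C\log n$. The value is exactly $\alpha = 1-\eta$ for each player, so I only need to verify that this is an exact Nash equilibrium w.h.p.\ over the random construction. Deviations within $[n]$ give payoff at most $\alpha$ since $A$ is $0/1$, and deviations into the $\gamma J$ block give zero payoff since the opponent places no mass there. The nontrivial deviations are to rows (symmetrically, columns) of the $B$ block; here the deviation payoff is an average of $|K| = C\log n$ i.i.d.\ Bernoulli$(\beta)$ random variables. By Lemma~\ref{lemma:chernoff} and a union bound over the $N = n^c$ rows of $B$, no such row has average exceeding $\beta + \eps = \alpha$ except with probability at most $N e^{-2\eps^2 C \log n} = n^{c - 2\eps^2 C}$. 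Since $\beta = 1 - \Theta(\eta)$, the Reduction's constant $c = (c_2+1)\log(1/\beta)$ scales as $\Theta(\eta)$, so choosing $C = \Theta(c/\eps^2) = \Theta(1/\eta^3)$ makes this probability $o(1)$.

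For soundness, the assumption value $\ge \eta \ge 5\sqrt{\eps}$ is precisely what Lemma~\ref{lem:goodvalue-equis} requires, and it immediately delivers both $\|x_{[n]}\|, \|y_{[n]}\| \ge 1 - \sqrt{\eps}$ and $v_{\G|[n]}(x,y) \ge \alpha - 3\eps$. This places me in position to apply Lemma~\ref{lem:dense-subgraph} with $t = \sqrt{\eps}$ and $s = 3\eps/\alpha$; the only hypothesis I need to check is $1 - t - 3\sqrt{s}/2 \ge \alpha + \eps$, which rearranges to $\eta \gtrsim \sqrt{\eps}$ and holds for $\eps = c\eta^2$ with $c$ sufficiently small. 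The lemma then produces a pair of vertex sets of size $c_2\log n$ with density $\ge 5/9$, from which Lemma~\ref{lemma:reconstruct} recovers the planted clique.

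The main obstacle is a subtle piece of parameter accounting needed to achieve the bound $C = O(1/\eta^3)$ stated in the theorem, rather than the $O(1/\eta^4)$ that a naive analysis would yield. The key is that because $\beta$ is close to $1$, the constant $c$ controlling $N = n^c$ in Reduction~\ref{red:reduction} is \emph{not} truly constant in $\eta$ but itself shrinks as $\Theta(\eta)$, and one has to exploit this when balancing the exponent $c - 2\eps^2 C$ in the Chernoff union bound. All the other inequalities are routine algebraic checks on the powers of $\sqrt{\eps}$ appearing in Lemmas~\ref{lem:goodvalue-equis} and~\ref{lem:dense-subgraph}.
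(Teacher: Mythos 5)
Your proposal is correct and follows essentially the same route as the paper's own proof: same choice of parameters ($\alpha = 1-\eta$, $\beta = \alpha - \eps$, $\gamma = 4\sqrt{\eps}$, $\eps = \Theta(\eta^2)$), same invocation of Lemma~\ref{lem:goodvalue-equis} to get the two hypotheses of Lemma~\ref{lem:dense-subgraph}, and the same Chernoff-plus-union-bound completeness argument. In particular, you correctly identify the parameter accounting the paper itself exploits to get $C = O(1/\eta^3)$: since $\beta = 1 - \Theta(\eta)$, the exponent governing $N = n^c$ satisfies $c = (c_2+1)\log(1/\beta) = \Theta(\eta) = \Theta(\sqrt{\eps})$, which is exactly how the paper passes from $n^{O(\sqrt{\eps})}\cdot n^{-\Omega(C\eps^2)} = o(1)$ to $C = O(1/\eps^{1.5}) = O(1/\eta^3)$.
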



\begin{proof}
  Given a graph $G = (V,E)$, we apply Reduction~\ref{red:reduction}
  with parameters as follows.

  Let $\eps=(\eta/5)^2$, $\alpha=1-\eta=1-5\sqrt{\eps}$, $\beta=\alpha-\eps$,
  and $\gamma=4\sqrt{\eps}$.  Assume without loss of generality that
  $\eta$ is small enough so that $\alpha > 3/4$.

  For the completeness, we proceed as in the proof of
  Theorem~\ref{thm:eps-hardness-formal}.  We can upper bound the
  probability that the uniform distribution over the hidden clique is
  not an equilibrium by
  $$ N e^{-2(\alpha-\beta)^2 C \log n} \le N n^{-\Omega(C \eps^2)}.
  $$
  We have $N = n^{O(\log 1/\beta)} = n^{O(\eta)} = n^{O(\sqrt{\eps})}$.
  Letting $C$ be a sufficiently large multiple of $1/\eps^{1.5}$ the
  completeness property follows.

  For the soundness analysis, take any $\eps$-approximate
  equilibrium $(x, y)$ with value at least $5 \sqrt{\eps}$.  By
  Lemma~\ref{lem:goodvalue-equis}, w.h.p.\ $(x,y)$ satisfies the conditions
  of Lemma~\ref{lem:dense-subgraph} with $t = \sqrt{\eps}$ and
  $s = \frac{3\eps}{\alpha} \le 4\eps$.  The only remaining
  thing to check is the condition $\alpha + \eps \le 1 - t -
  3\sqrt{s}/2$, which is easily verified ($1-4\sqrt{\eps}$ is an
  upper bound for the LHS and a lower bound for the RHS).


\end{proof}

\subsection{An Algorithm For Good $\frac{1}{2}$-Approximate Equilibria}\label{sec:1/2-algorithm}

In this section we prove Theorem~\ref{thm:1/2-algo} by describing a
simple algorithm to find a $\frac{1}{2}$-approximate Nash equilibrium with at
least as good value as the best exact Nash equilibrium.  This shows
that the bound on $\eps$ in Theorem~\ref{thm:eps-hardness} is tight.

For general $\frac12$-approximate equilibria (without any constraint
on the value), the following simple algorithm was suggested by
Daskalakis, Mehta and Papadimitiou~\cite{DMP09}. Start by choosing an
arbitrary pure strategy $e_i$ for the row player, let $e_j$ be the column
player's best response to $e_i$, and let $e_k$ be the row player's best
response to $e_j$.  Then the following is a $\frac12$-equilibrium: let
the column player play $e_j$, and let the row player play $e_i$ with
probability $\frac12$ and $e_k$ with probability $\frac12$ (neither
player can gain more than $\frac12$ by deviating, since each player is
playing a best response strategy with probability $\frac12$). Thus,
every bimatrix game has a $\frac12$-approximate equilibrium in which
one of the players plays a pure strategy. We show that this is also
the case for optimal value $\frac12$-equilibria.

\begin{lemma}\label{lem:1/2-structure} For every bimatrix game which has a Nash equilibrium of value $v$, there exists a $\frac12$-approximate equilibrium with value at least $v$ in which one of the players plays a pure strategy.
\end{lemma}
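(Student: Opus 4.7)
The plan is to start from a Nash equilibrium $(x^*, y^*)$ of value $v$ and construct a $\tfrac{1}{2}$-approximate equilibrium of value at least $v$ in which one player plays a pure strategy. Write $R^* = x^{*\top} M_{\mathrm{row}} y^*$ and $C^* = x^{*\top} M_{\mathrm{col}} y^*$, so $v = (R^* + C^*)/2$. The natural candidate with column pure is $(x^*, e_j)$, where $j \in \mathrm{supp}(y^*)$ is chosen to maximize $x^{*\top} M_{\mathrm{row}} e_j$. Averaging over $y^*$ forces this quantity to be at least $R^*$, and because $y^*$ is a best response to $x^*$, every $j \in \mathrm{supp}(y^*)$ satisfies $x^{*\top} M_{\mathrm{col}} e_j = C^*$. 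Hence $v_\G(x^*, e_j) \ge v$ automatically.

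In $(x^*, e_j)$ the column's incentive to deviate is exactly $0$, so the only issue is the row's incentive $\max_k M_{\mathrm{row}}(k, j) - x^{*\top} M_{\mathrm{row}} e_j$. If this is at most $\tfrac{1}{2}$---which is automatic when $R^* \ge \tfrac{1}{2}$, since then $1 - R^* \le \tfrac{1}{2}$---then $(x^*, e_j)$ is itself a $\tfrac{1}{2}$-equilibrium of value $\ge v$ and we are done. A symmetric argument with the row-pure pair $(e_i, y^*)$ (choosing $i \in \mathrm{supp}(x^*)$ to maximize $e_i^{\top} M_{\mathrm{col}} y^*$) handles the case $C^* \ge \tfrac{1}{2}$. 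So we may assume both $R^*, C^* < \tfrac{1}{2}$ and that row's incentive in $(x^*, e_j)$ strictly exceeds $\tfrac{1}{2}$.

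In this final regime, let $e_k$ be row's best response to $e_j$; the incentive bound forces $M_{\mathrm{row}}(k, j) > \tfrac{1}{2} + x^{*\top} M_{\mathrm{row}} e_j \ge \tfrac{1}{2} + R^*$. The plan is to use the pair $(\tfrac{1}{2} x^* + \tfrac{1}{2} e_k,\, e_j)$ and verify, via the standard Daskalakis--Mehta--Papadimitriou argument (row mixes with its own best response, so any deviation gain is halved; column's deviation is controlled using $\max_l x^{*\top} M_{\mathrm{col}} e_l = C^* = x^{*\top} M_{\mathrm{col}} e_j$), that it is a $\tfrac{1}{2}$-equilibrium. Its value is $\tfrac{1}{2} v_\G(x^*, e_j) + \tfrac{1}{2} v_\G(e_k, e_j) \ge v/2 + M_{\mathrm{row}}(k, j)/4 > v/2 + 1/8 + R^*/4$, which compares to $v = (R^* + C^*)/2$ by rearranging to $C^* \le \tfrac{1}{2}$, and this holds by assumption. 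The main obstacle is precisely this value bound: since $e_k$ is chosen solely to maximize row's payoff, the column payoff $M_{\mathrm{col}}(k, j)$ may be $0$ and half of the expected value is potentially lost to that mix; recovery relies entirely on the forced gain $M_{\mathrm{row}}(k, j) > \tfrac{1}{2} + R^*$, which only suffices when $C^* \le \tfrac{1}{2}$---and the case split at $R^*, C^* \ge \tfrac{1}{2}$ is designed exactly to keep us in that regime.
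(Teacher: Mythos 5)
Your proposal is correct and follows essentially the same approach as the paper: pick $e_j\in\mathrm{supp}(y^*)$ good for the row player, and if $(x^*,e_j)$ is not already a $\tfrac12$-equilibrium, mix the row player's strategy with a best response $e_k$ to $e_j$. The only cosmetic difference is in the bookkeeping that establishes $C^*\le\tfrac12$: the paper takes WLOG $v_r\ge v_c$ and then deduces $v_c\le v_r\le \tfrac12$ from the fact that the row's incentive to deviate (assumed $\ge\tfrac12$) is at most $1-v_r$, whereas you run the pure-row and pure-column candidates symmetrically and observe that if $C^*\ge\tfrac12$ the pure-row candidate already works; both routes leave you with $C^*\le\tfrac12$ before invoking the mixed pair $(\tfrac12 x^*+\tfrac12 e_k,e_j)$, and your value computation is an equivalent repackaging of the paper's separate bounds on $v_r$ and $v_c$.
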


\begin{proof}
Let $M_{\mathrm{row}}$ and $M_{\mathrm{col}}$ be the payoff matrices
for the row and column players, respectively. Let $(x^*,y^*)$ be a
Nash equilibrium in this game of value $v := v_{\G}(x^*, y^*)$, and
let $v_r$ and $v_c$ be the payoff in this equilibrium to the row and
column players, respectively (hence $v=\frac12(v_r+v_c)$).  Without
loss of generality, assume $v_r \ge v_c$ (otherwise, a symmetric
argument applies).

Let the pure strategy $e_j$ be some strategy in the support of $y^*$
for which the \emph{row} player's payoff is at least $v_r$ (when the
row player is playing $x^*$ and the column player $e_j$).  Such a
strategy exists since $v_r$ is the expected payoff for the row player
when the column player plays according to $y^*$.  Furthermore, any
such $e_j$ is a best response to $x^*$ since $(x^*, y^*)$ is an
equilibrium.

Clearly, if the pair of strategies $(x^*, e_j)$ is a $\frac{1}{2}$-equilibrium
we are done, since both the row and colum player are getting at least
the same payoff as for the pair $(x^*, y^*)$.

If $(x^*, e_j)$ is \emph{not} a $\frac{1}{2}$-equilibrium, this must be
because the row player has incentive $\ge \frac{1}{2}$ to deviate (as the
column player is by definition playing a best response).  Note that
this implies that $v_c \le v_r \le \frac{1}{2}$ since the row player's
incentive to deviate can never be more than $1-v_r$.  Let $e_k$ be
some best response for the row player, and consider the pair of
strategies $\left(\frac{1}{2}(x^*+e_k), e_j\right)$.  As above, this
is a $\frac{1}{2}$-equilibrium, since both players are playing a best response
with probability $\frac{1}{2}$.  Furthermore, the payoff for the row player is
at least $v_r/2 + (v_r+1/2)/2 \ge v_r + 1/4 \ge v_r + v_c/2$, and the
payoff for the column player is at least $v_c/2$.  Thus the value is
at least $\frac{1}{2}(v_r+v_c/2+v_c/2) = v$, and we are done.

\end{proof}

Now our algorithm follows immediately.

\begin{proof}[Proof of Theorem~\ref{thm:1/2-algo}]
 Let $M_{\mathrm{row}}$ and $M_{\mathrm{col}}$ be the payoff matrices for the row and column
 matrices, respectively. By Lemma~\ref{lem:1/2-structure} there exists
 such an equilibrium in which one player plays a pure
 strategy. Suppose this is the column player (otherwise, a symmetric
 algorithm applies). Try all possible pure strategies $e_j$ for the
 column player. For each such strategy, solve the following linear
 program (if it is feasible):
\begin{align*}
\mbox{maximize }\quad &\textstyle\frac12x^{\top}(M_c+M_{\mathrm{row}})e_j\\
\mbox{subject to }\quad& x^{\top}M_{\mathrm{col}}e_{j'}\leq x^{\top}M_{\mathrm{col}}e_{j}+\textstyle\frac12&\forall j'\\
& (e_i)^{\top}M_{\mathrm{row}}e_j\leq x^{\top}M_{\mathrm{row}}e_j+\textstyle\frac12&\forall i\\
& x_i\geq0&\forall i\\
& \sum_i x_i=1
\end{align*}

For at least one strategy $e_j$, this LP is feasible and computes a $\frac12$-equilibrium with at least the desired value.
\end{proof}

\comment{Per}{Eden wants to put the proof in the Appendix.}

\comment{Eden}{I do!}

\comment{Per}{But I disagree.}

\section{Finding A Second Equilibrium}\label{sec:uniqueness}

In the following Theorem, $\SD$ refers to the total variation distance
between two vectors, i.e., $\SD(x,y) = \frac{1}{2} \sum |x_i - y_i|$.

\begin{theorem}[Detailed Statement of Theorem~\ref{thm:second_equi}]
  \label{thm:second_equi_formal}
  There is a $C> 0$ such that the following holds for all sufficiently
  small $\eps > 0$.  Given a graph $G$ we can in
  randomized polynomial time construct a bimatrix game $\G'$ which admits a pure Nash equilibrium $(e_i,e_j)$ such that, if $G=G(n,\frac{1}{2})$ with a hidden clique of size $C\log n$,
  the following holds (w.h.p.\ over $G$ and $\G'$):
  \begin{description}

  \item[Completeness] 
There is a Nash equilibrium $(x,y)$ such that
    $\SD(e_i,x) = \SD(e_j,y) = 1$.

  \item[Soundness] 
  Given any
    $\eps$-approximate equilibrium $(x,y)$ of $\G'$ with $\SD(e_i,x)
    \ge \eps + O(\eps^2)$ or $\SD(e_j, y) \ge \eps +
    O(\eps^2)$, we can efficiently recover the hidden clique.
  \end{description}
\end{theorem}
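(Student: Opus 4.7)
My plan is to reduce to Theorem~\ref{thm:delta-hardness-formal}, using it as a black box. Given a graph $G$, I would first construct the game $\G$ from the reduction of Theorem~\ref{thm:delta-hardness-formal}, and then form $\G'$ by augmenting $\G$ with a single new pure strategy $*$ for each player. I would set $M'_{\mathrm{row}}(*,*) = M'_{\mathrm{col}}(*,*) = V$ for some $V > 0$ (for concreteness $V = 1-\eta$), $M'_{\mathrm{row}}(i, *) = M'_{\mathrm{col}}(*, j) = 0$ for all original $i, j$ (making $(e_*, e_*)$ a strict pure Nash with deviation margin $V$), and $M'_{\mathrm{row}}(*, j) = M'_{\mathrm{col}}(i, *) = 0$ for all original $i, j$ (so that neither player in the hidden-clique equilibrium of $\G$ can profit by unilaterally deviating to $e_*$, since that deviation yields payoff $0$, well below the equilibrium payoff $\approx 1-\eta$). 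The completeness claim is then immediate: the hidden-clique equilibrium of $\G$ lifts trivially to $\G'$ with zero mass on $e_*$, and since its support lies in the $\alpha A$ block, disjoint from $\{*\}$, the corresponding distribution $(x,y)$ satisfies $\SD(e_*, x) = \SD(e_*, y) = 1$.

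For soundness, given an $\eps$-equilibrium $(x,y)$ of $\G'$ with $\SD(e_*, x) \ge \eps + O(\eps^2)$ (the case for $y$ is symmetric), I would decompose $x = x_* e_* + (1-x_*)\tilde x$ and $y = y_* e_* + (1-y_*) \tilde y$ with $\tilde x, \tilde y$ supported on the original strategies of $\G$. The $\eps$-equilibrium condition for the row player's deviation to $e_*$ yields $(1-x_*)\bigl[y_* V - (1-y_*) u_{\mathrm{row}}(\tilde x, \tilde y)\bigr] \le \eps$, and combining this with $1-x_* \ge \eps + O(\eps^2)$ both rules out $1-y_*$ being very small (otherwise $y_* V$ is close to $V$ and the inequality fails) and provides a lower bound on $u_{\mathrm{row}}(\tilde x, \tilde y)$ close to $V$. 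A symmetric argument for the column player's deviation to $e_*$ gives the analogous statement for $u_{\mathrm{col}}$, so the value of $(\tilde x, \tilde y)$ in $\G$ is at least $\eta$. Manipulating the $\eps$-equilibrium conditions for deviations to original strategies (using the deviation-to-$e_*$ bound to cancel out the $x_* y_* V$ term in the current payoff) shows that $(\tilde x, \tilde y)$ is an $\eps'$-equilibrium of $\G$ with $\eps' = O(\eps / [(1-x_*)(1-y_*)])$. Theorem~\ref{thm:delta-hardness-formal} applied to $(\tilde x, \tilde y)$ as an $\eps'$-equilibrium of $\G$ with value $\ge \eta$ then recovers the planted clique.

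The hard part will be handling the boundary regime where $1 - x_*$ is very close to $\eps$: here the blowup $\eps'$ in the conditional equilibrium's approximation factor can exceed the tolerance required by Theorem~\ref{thm:delta-hardness-formal}. Resolving this requires either a careful choice of $V$ together with an instantiation of Theorem~\ref{thm:delta-hardness-formal} with slightly tighter parameter dependence (so that $\eps$ there is replaced by $O(\eps')$ and the hardness strength is absorbed into the hidden-clique size $C \log n$), or leveraging the $O(\eps^2)$ slack in the hypothesis on $\SD(e_*, x)$ to force $(1-x_*)(1-y_*) = \Omega(1)$, keeping the blowup controlled. Once this quantitative bookkeeping is in place, the rest is a routine combination of the incentive-deviation inequalities outlined above with the parameter choices of Theorem~\ref{thm:delta-hardness-formal}.
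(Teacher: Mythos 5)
Your plan has the right skeleton (augment $\G$ from Theorem~\ref{thm:delta-hardness-formal} with one new strategy $*$ that forms a pure Nash $(e_*, e_*)$, and argue that any $\eps$-equilibrium of $\G'$ placing nonnegligible mass on $[N]$ must condition to a usable approximate equilibrium of $\G$), but the specific payoff choices leave a genuine hole that is not mere ``quantitative bookkeeping''. The paper's construction sets the off-diagonal blocks to a \emph{positive} constant $\lambda$: the player deviating to the new strategy receives $\lambda$ (with $\lambda = 8/10$) against \emph{any} column of $\G$, not $0$. Your choice of $M'_{\mathrm{row}}(*,j) = M'_{\mathrm{col}}(i,*) = 0$ drops this, and the consequence is fatal to soundness: the game $\G$ of Reduction~\ref{red:reduction} has an exact Nash equilibrium $(\tilde x, \tilde y)$ with both players supported entirely in the $\gamma J$ block, of value $\gamma = 4\sqrt{\eps}$. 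In your $\G'$, this lifts verbatim (with $x_* = y_* = 0$) to an \emph{exact} Nash equilibrium of $\G'$, since deviating to $e_*$ against a $\G$-column yields $0 < \gamma$. It satisfies $\SD(e_*, x) = \SD(e_*, y) = 1$, so your soundness hypothesis holds, but $v_\G(\tilde x, \tilde y) = \gamma \ll \eta$, so Theorem~\ref{thm:delta-hardness-formal} cannot be invoked. The positive $\lambda$ is exactly what ``rejects'' such low-value equilibria: with $\lambda = 8/10$, a player sitting in $\gamma J$ can gain $\lambda - \gamma \gg \eps$ by switching to $e_{N+1}$, and more generally the deviation-to-$e_{N+1}$ inequality forces $u_{\mathrm{row}}, u_{\mathrm{col}} \ge \lambda - O(\eps)$, which is where the required value bound $v_\G(\tilde x,\tilde y) \ge \eta$ actually comes from. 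Your claim that the inequality $(1-x_*)[y_* V - (1-y_*) u_{\mathrm{row}}] \le \eps$ ``provides a lower bound on $u_{\mathrm{row}}$ close to $V$'' is not correct in general: when $1-y_*$ is close to $1$ the term $y_* V$ vanishes and the inequality is vacuous for $u_{\mathrm{row}}$.

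There is also a secondary, fixable issue with your choice $V = 1-\eta$ at the $(*,*)$ entry. The regime where both $x_*$ and $y_*$ are close to $1$ must yield the sharp bound $1-x_*, 1-y_* \le \eps + O(\eps^2)$ to match the theorem's (nearly tight) threshold. With your payoffs, the deviation-to-$e_*$ inequality for small $1-y_*$ only gives $(1-x_*) \cdot V \lesssim \eps$, i.e.\ $1-x_* \lesssim \eps/V = \eps/(1-\eta)$, which exceeds $\eps + O(\eps^2)$ by a multiplicative $(1+\Theta(\eta))$ factor. The paper sets the $(*,*)$ payoff to $1$ (together with $\lambda = 8/10$ on the off-diagonal), under which the bracket in the deviation inequality is $1 - (1-y_*)(2-\lambda)$, approximately $1$ for small $1-y_*$, giving the clean $1-x_* \le \eps + O(\eps^2)$. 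Both repairs are necessary: with $V=1$, $\lambda = 8/10$, and otherwise your structure, the argument goes through essentially as in the paper.
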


\begin{remark}
  Note that the bound $\eps + O(\eps^2)$ on the statistical
  distance is almost tight: given any true equilibrium $(x,y)$ there are
  $\eps$-approximate equilibria $(x',y')$ with $\SD(x, x') \ge
  \eps$ and $\SD(y, y') \ge \eps$.
\end{remark}

\begin{proof}
  Given $G$, first construct the game $\G$ of
  Theorem~\ref{thm:delta-hardness-formal} with parameter $\eta =
  1/10$, so that given a $(4\eps)$-approximate equilibrium of $\G$
  with value at least $1/10$ we can reconstruct the hidden clique.
  Let $(M_{\mathrm{row}}, M_{\mathrm{col}})$ be the payoff matrices of
  $\G$ and let $N$ denote their dimension.  Consider the new
  $(N+1)$-strategy game $\G' = (M'_{\mathrm{row}},
  M'_{\mathrm{col}})$, with the following payoff matrices.
  \begin{align}
    M'_{\mathrm{row}}=\left(
    \begin{array}{cc}
      M_{\mathrm{row}} & 0\\
      {\bf \lambda} & 1
    \end{array} 
    \right)
    & &
    M'_{\mathrm{col}}=\left(
    \begin{array}{cc}
      M_{\mathrm{col}} & {\bf \lambda}^{\top} \\
      0 & 1
    \end{array} 
    \right),
  \end{align}
  where ${\bf \lambda}$ is the $1 \times N$ vector with each
  coordinate equal to $\lambda$.  We set $\lambda = 8/10$.  $\G'$ is a
  bimatrix game of size $(N+1) \times (N+1)$.

  Clearly, $(e_{N+1}, e_{N+1})$ is a pure Nash equilibrium of $\G'$.
  Note that for any mixed strategy $x$, we can write $\SD(e_{N+1}, x)
  = 1 - x_{N+1}$ so it suffices to obtain good bounds on $x_{N+1}$ and
  $y_{N+1}$.

  Furthermore the completeness case also follows immediately since in
  that case $\G$ has a Nash equilibrium with both players earning payoff $9/10$. 
 As
  $\lambda \le 8/10$ this is an equilibrium in $\G'$ as well and since it
  does not use the $(N+1)$'st strategy we obtain the completeness
  property.


  For the soundness, consider any $\eps$-approximate equilibrium
  $(x,y)$ of $\G'$, let $p = \|x_{[N]}\|$, $q = \|y_{[N]}\|$ be the
  probability that the row (resp.\ column) player plays in the
  original game $\G$ where $x_{N+1} < 1 - \eps - O(\eps^2)$ and
  $y_{N+1} < 1 - \eps - O(\eps^2)$.  We need to show that $(x,y)$ can
  be used to recover the planted clique.

  Let $(\tilde{x}, \tilde{y}) = (x_{[N]}/p, y_{[N]}/q)$ denote the
  strategies conditioned on playing on the first $N$ strategies.  We
  claim that $(\tilde{x}, \tilde{y})$ must be an $\eps'$-approximate
  equilibrium for the original game $\G$, with $\eps' =
  \frac{\eps}{pq}$.  To see this, suppose for contradiction that one
  of the players, say the row player, gains $\eps'$ in $\G$ by
  deviating to some strategy $e_{i'}$.  Consider ``lifting'' this
  to a new strategy $x'$ for $\G'$ (i.e., in the strategy $x'$ the row
  player plays $e_{N+1}$ with probability $1-p$, and 
  $e_{i'}$ with probability $p$).  The change in payoff the row
  player obtains in $\G'$ by switching from $x$ to $x'$ can be written
  as
  \begin{equation}
    \label{eqn:incentive}
    p (q \eps' + (1-q) \cdot 0),
  \end{equation}
  where the $q\eps'$ term is what the row player gains from when the
  column player plays on the first $N$ strategies, and the other term
  is $0$ since, when the column player plays on $N+1$ the row player
  gets the same payoff on all the first $N$ strategies.  As $(x,y)$ is
  an $\eps$-approximate equilibrium in $\G'$, \eqref{eqn:incentive}
  must be bounded by $\eps$ and hence 
  \begin{equation}\label{eq:eps'eps}
  \eps' \le \frac{\eps}{pq}.
  \end{equation}

  \comment{Per}{This last paragraph was a very lengthy way of saying
    something very easy.  Can probably be written better...}

  Now the same argument as in the first part of the proof of
  Lemma~\ref{lem:goodvalue-equis} gives that
  \begin{align*}
    \eps &\ge pq (\lambda - 1) + p(1-q)(1-0) & \text{(row player's incentive to deviate)} \\
    & = p (1 - 11 q / 10) & \text{(by choice of $\lambda$)} \\
    & \ge p (1 - 11 p /10).
  \end{align*} 
  This quadratic inequality in $p$ implies that either $p \le \eps +
  O(\eps^2)$ or $11 p / 10 \ge 1 - \eps - O(\eps^2)$. The first possibility is ruled out by our assumption, therefore $p> \frac{10}{11}(1-\eps-O(\eps^2))$, and similarly $q> \frac{10}{11}(1-\eps-O(\eps^2))$.

  It is easy to see that the value of $(\tilde{x}, \tilde{y})$ must be
  at least $1/10$ because $(x,y)$ is an $\eps$-equilibrium for $\G'$.
  By the choice of parameters, if $(\tilde{x},\tilde{y})$ is also a
  $(4\eps)$-approximate equilibrium for $\G$ then we can reconstruct
  the hidden clique and we are done. But this follows easily from \eqref{eq:eps'eps}.
%
%

\end{proof}

\section{Small support equilibria}\label{sec:small-support}

In this section, we show hardness of finding an $\eps$-approximate Nash equilibrium with small (logarithmic) support when one exists, even for $\eps$ close to $\frac12$. Note that an $\eps$-approximate Nash equilibrium for two-player $n'$-strategy games with support at most $O(\log n'/\eps)$ is guaranteed to exist by the algorithm of Lipton et al.~\cite{LMM03}. Here we consider approximate equilibria with smaller (but still logarithmic) support. Also, note that this is tight, since for $\eps=\frac12$, we have the simple algorithm of~\cite{DMP09}, which gives a $\frac12$ equilibrium of support 3. 

Our reduction for small support equilibria involves the following
construction, which is very similar to the earlier one. 
\boxit{\begin{reduction}
    \label{red:smallsupport}
    Given a graph $G = (V,E)$ with adjacency matrix $A$, construct a game $\G = (M_{\mathrm{row}}, M_{\mathrm{col}})$ as follows.
    \begin{align}\label{eq:small-support-payoffs}
      M_{\mathrm{row}}=\left(
      \begin{array}{cccc}
        \alpha A & 0 & \ldots & 0\\
        B \\
        \vdots & \multicolumn{3}{c}{R}  \\
        B & &
      \end{array} 
      \right)
      & &
      M_{\mathrm{col}}=\left(
      \begin{array}{cccc}
        \alpha A & B^{\top} & \ldots & B^{\top}\\
        0 \\
        \vdots & \multicolumn{3}{c}{J-R} \\
        0
      \end{array} 
      \right),
    \end{align}
    where $B$ is an $N_1 \times n$ matrix whose entries are
    i.i.d.\ $\{0,1\}$ variables with expectation $\beta$.  As in
    Reduction~\ref{red:reduction}, $N_1 = n^c$ where $c = (c_2+1) \log
    1/\beta$ and $c_2$ is the constant from
    Lemma~\ref{lemma:reconstruct}.

    Each payoff matrix of \eqref{eq:small-support-payoffs} contains
    $N_2$ identical copies of $B$, and $R$ is an $N_1N_2 \times
    N_1N_2$ uniformly random $\{0,1\}$ matrix.
  \end{reduction}
}
    
\begin{theorem}   
  For every $\eta>0$ there exists $C>0$ such that finding a
  $(\frac12-\eta)$-equilibrium with support at most $(\log n)/2$ is as
  hard as finding a hidden clique of size $C\log n$ in $G(n,\frac{1}{2})$.
\end{theorem}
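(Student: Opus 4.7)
I aim to prove this by adapting the proof of Theorem~\ref{thm:eps-hardness-formal} to Reduction~\ref{red:smallsupport}. The key new ingredient is the random zero-sum $(R, J-R)$ block, which replaces the constant block $\gamma J$ of Reduction~\ref{red:reduction}. This block has a dual role in the small-support regime: small-support strategies placed in it admit very profitable pure deviations (into rows of $R$ and $B$ with favorable patterns), forcing any small-support equilibrium to place most of its mass on $[n]$; and, since each entry of $M_{\mathrm{row}}+M_{\mathrm{col}}$ in the lower-right block is exactly $1$, we obtain a tight upper bound $v_r+v_c\le 1+pq(2\alpha-1)$, where $p=\|x_{[n]}\|$ and $q=\|y_{[n]}\|$. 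I choose $\alpha=\frac{1}{2}+\eta'$ for some $\eta' = \Theta(\eta)$ slightly larger than $\eta$, $\beta<\alpha$ bounded away from $0$, and $C\le \frac{1}{2}$ so that the planted clique of size $C\log n$ fits within the $(\log n)/2$ support bound. Since Lemma~\ref{lemma:reconstruct} allows any constant density above $\frac{1}{2}$, I target density close to $1$ so that its constant $c_2\le \frac{1}{2}$.

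\textbf{Completeness.} The uniform distribution on the planted clique forms a $(\frac{1}{2}-\eta)$-equilibrium of $\G$ with support $C\log n\le (\log n)/2$: the clique's completeness makes the payoff $\alpha$, pure deviations within $[n]$ give at most $\alpha$, and pure deviations to higher blocks have payoff at most $1$, so the incentive is at most $1-\alpha=\frac{1}{2}-\eta'<\frac{1}{2}-\eta$.

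\textbf{Soundness.} Given a $(\frac{1}{2}-\eta)$-equilibrium $(x,y)$ with support at most $(\log n)/2$, let $S_r=\mathrm{supp}(x)\cap [n]$, $T_r=\mathrm{supp}(x)\setminus [n]$ and analogously $S_c,T_c$. With high probability over $B$ and $R$, there exists a higher row index $i'$ with $R_{i',j}=1$ for all $j\in T_c$ and $B_{i'\bmod N_1, j}=1$ for all $j\in S_c$: the success probability per candidate is at least $\beta^{|S_c|}\cdot 2^{-|T_c|}\ge (\beta/2)^{(\log n)/2}=n^{-O(1)}$, and the $N_1N_2=n^{\Omega(1)}$ candidates make the expected count large. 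Deviation to $e_{i'}$ yields row payoff $q\cdot 1+(1-q)\cdot 1=1$, forcing $v_r\ge \frac{1}{2}+\eta$; symmetrically $v_c\ge \frac{1}{2}+\eta$. Combined with the upper bound $v_r+v_c\le 1+pq(2\alpha-1)$ this gives $pq\ge \eta/(\alpha-\frac{1}{2})$, and by choosing $\alpha$ just above $\frac{1}{2}+\eta$ I force $p,q$ arbitrarily close to $1$. A second application of this argument, applied to $v_r$ alone against the bound $v_r\le pq\alpha\,\tilde{x}^{\top}A\tilde{y}+(1-p)$, forces the conditional $A$-value $\tilde{x}^{\top}A\tilde{y}$ close to $1$. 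Hence $S_r,S_c$ induce a near-complete bipartite subgraph of $G$ with sides of size at most $(\log n)/2$; Lemma~\ref{lemma:reconstruct} with our small $c_2$ then recovers the planted clique.

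\textbf{Main obstacle.} The primary technical difficulty lies in the joint parameter tuning. On the one hand, $\alpha$ must be at most $\frac{1}{2}+\eta$ plus an infinitesimal slack so that the lower bounds $v_r,v_c\ge \frac{1}{2}+\eta$ combined with $v_r+v_c\le 1+pq(2\alpha-1)$ actually force $p,q$ close to $1$ and not merely bounded below by a constant. On the other hand, $\alpha$ must also be at least $\frac{1}{2}+\eta$ so the clique equilibrium survives as a $(\frac{1}{2}-\eta)$-equilibrium. The same delicacy must be reconciled with the constraints on $C$, $\beta$, and the $c_2$ of Lemma~\ref{lemma:reconstruct}, together with the ``good higher deviation'' probability estimate and the need to rule out spurious pure-strategy equilibria supported on individual edges of $G$. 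The detailed accounting that makes all these constraints compatible is the main technical burden of the proof.
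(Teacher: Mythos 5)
The proposal has the right skeleton (use Reduction~\ref{red:smallsupport}, exploit the zero-sum $R$ block to force mass into $[n]$, conclude with a density argument), but three intertwined choices break it.

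\textbf{Misreading the scale of ``$n$''.} In the theorem the support bound $(\log n)/2$ is measured against the \emph{number of strategies in the game}, not the number of vertices in the graph. The whole point of the $N_2$ copies of $B$ in Reduction~\ref{red:smallsupport} is to blow the game up to $n' = n + n^{c'+c}$ strategies with $c' = \Theta(1/\eta^3)$, so that $\tfrac12\log n' = C'\log n$ with $C' = \Theta(1/\eta^3)$ \emph{large}. You instead force $C \le 1/2$ and $c_2 \le 1/2$, which is information-theoretically impossible: a planted clique of size $\le\frac12\log n$ is statistically undetectable in $G(n,\frac12)$, and a density-$>\frac12$ bipartite pair on $\frac12\log n$ vertices occurs by chance (so Lemma~\ref{lemma:reconstruct} cannot hold with $c_2\le\frac12$ regardless of the target density). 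Your suggestion that pushing the density close to $1$ makes $c_2$ arbitrarily small is false; even for density $1$, union-bounding over $\binom{n}{k}^2$ pairs forces $k\gtrsim 2\log n$.

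\textbf{The parameter tension is real and you are on the wrong side of it.} You correctly observe that the pure value bound $v_r+v_c\le 1+pq(2\alpha-1)$ together with $v_r,v_c\ge\frac12+\eta$ forces $pq\ge\eta/(\alpha-\tfrac12)$, and that this needs $\alpha-\tfrac12\ge\eta$. But Lemma~\ref{lem:dense-subgraph} requires $1-t-\tfrac32\sqrt s\ge\alpha+\eps$, which with $\eps=\frac12-\eta$ reads $\eta-(\alpha-\tfrac12)\ge t+\tfrac32\sqrt s\ge 0$, i.e.\ $\alpha-\tfrac12\le\eta$. Equality is the only point of contact, leaving no slack for $t,s>0$. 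Also, contrary to your obstacle paragraph, completeness does \emph{not} force $\alpha\ge\frac12+\eta$: with $\beta<\alpha$ the deviation gain in the completeness case is already $\le\beta-\alpha+o(1)<0$, so there was never any constraint there. The paper escapes the tension by \emph{not} using the value identity at all. It sets $\alpha=\frac12+\eta/8$ (so the lemma has slack) and $\beta>\alpha$ (close to $1$), and uses the \emph{average incentive} to deviate, $pq(\beta-\alpha)+\tfrac12(1-pq)\le\frac12-\eta$, to get $pq\ge 1-\eta/8$; the positive gap $\beta-\alpha$ is what makes the denominator small, and the same substitution bounds the conditional value.

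\textbf{The all-ones deviation is too demanding.} You ask for a row $i'$ with $B_{i'\bmod N_1,j}=1$ for every $j\in S_c$ \emph{and} $R_{i',j}=1$ for every $j\in T_c$; the per-row success probability $\beta^{|S_c|}2^{-|T_c|}$ over $N_1N_2\le n'$ candidates does not survive the union bound over supports (and in the extreme $|S_c|\approx k$, the expected count is in fact $o(1)$). The paper only requires the $B$ part to achieve \emph{average} payoff $\ge\beta-o(1)$ on $\tilde y$ (which is free, by an averaging argument over rows), and relies on randomness only for the $R$ part; this is what makes the union-bound over column supports go through for a suitable $N_2=n^{\Theta(1/\eta^3)}$. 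Fixing all of this essentially forces you back to the paper's parameter regime and its incentive-based analysis.
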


\begin{proof}
Given a graph $G$ from $G(n, \frac12)$ (possibly with a hidden
clique), construct the game $\G$ as in
Reduction~\ref{red:smallsupport} with the following parameters: let
$\alpha=\frac{1}{2}+\eta/8$, let $\beta= \alpha + (\frac{1}{2} - \eta) - \eta^2/8 = 1
- \frac{7}{8}\eta - \eta^2/8$.  We choose the dimension $N_2$ as $N_2
= n^{c'}$, where $c'$ is chosen to satisfy $(\eta^2/8)^2 c' = 4c$.
Since $c = (c_2+1) \log 1/\beta = \Theta(\eta)$ we have that $c' =
\Theta(1/\eta^3)$.  We choose the density $C$ of the hidden clique to
be $C = c'/2+1$.

Note that the number of strategies is $n+N_1N_2=n+n^{c' + c}$, so that
we are looking for an equilibrium with support at most
$\frac12\log(n+n^{c' + c})=C'\log n$ for some $c'/2 \le C' \le c'/2 +
1$ (assuming $\eta$ is sufficiently small).

The completeness follows easily. Suppose $G$ contains a hidden clique
of size $C\log n>C'\log n$. Then if both players play uniformly over
the same subset of $C'\log n$ clique vertices, they both
achieve reward $\alpha$. The probability that, say, the row player can
gain $\frac{1}{2}-\eta$ (i.e.\ get payoff
$\alpha+\frac{1}{2}-\eta=\beta+\eta^2/8$) by deviating to some
row in $B$ (note that he can only deviate to rows in copies of $B$) is by the Chernoff bound Lemma~\ref{lemma:chernoff} and a union bound at most
$$N_1e^{-(\eta^2/8)^2 C '\log
  n} = n^{c - (\eta^2/8)^2 C'} \le n^{c - (\eta^2/8)^2 c'/2} = n^{-c}.$$

Now, for the soundness, consider any $(\frac{1}{2}-\eta)$-equilibrium
$(x,y)$. Let us first show that both players must have most of their
probability concentrated in the $\alpha A$ block. Let $p =
\|x_{[n]}\|$ and $q = \|y_{[n]}\|$ (the probabilities that each player
plays in the first $n$ rows/columns). Let us consider the two player's
incentive to deviate. In the $\alpha A$ block, the row player achieves
at most payoff $\alpha$, and can achieve payoff $\beta-o(1)$ by
playing uniformly over all rows in $B$ (w.h.p.\ this is true for all
distributions over columns in $[n]$). In particular, there exists at
least one row in $B$ in which the row player can achieve this
value. Now consider the right hand side of the payoff matrix. Let
$\lambda\in[0,1]$ be the payoff that the row player receives in $R$
(thus, the column player receives $1-\lambda$ here). For any row in
$B$, there are $N_2$ corresponding rows in $R$ (one corresponding to
each copy of $B$).  Since the column player's support is at most
$C'\log n$, the probability that regardless of the column player's
choice of support, there will be at least one row among these that has
all 1's in the corresponding positions is at least
\begin{align*}
  1-n^{C'\log n}(1-2^{-C'\log n})^{N_2} &\geq 1- n^{C'\log n}\exp(-2^{-C' \log n} N_2) \\
  & = 1- n^{C'\log n}e^{-n^{-C'+c'}}\\
  &\geq 1 - n^{C'\log n}e^{-n^{c'/2-1}}.
\end{align*}
Thus, w.h.p.\ for every row in $B$ and every possible strategy for the column player (up to the restriction on support size), there is a row in $R$ corresponding to the correct row in $B$ s.t.\ the row player would achieve payoff 1 in $R$ (by deviating to this row). In particular, this is true for the row in $B$ where the row player can achieve value $\beta - o(1)$ (as before, we will ignore this $o(1)$). To summarize, by deviating, the row player can gain at least $$qp(\beta-\alpha)+(1-q)(1-(1-p)\lambda).$$ Similarly, the column player's incentive to deviate is at least $$pq(\beta-\alpha)+(1-p)(1-(1-q)(1-\lambda)).$$ On average, the two players' incentive to deviate is at least
$$
pq(\beta-\alpha)+\textstyle\frac12((1-p)+(1-q)) - \textstyle\frac12(1-p)(1-q) = pq(\beta-\alpha) + \textstyle\frac12(1-pq),
$$
and since this incentive is at most $\frac12-\eta$, we have
\begin{equation}\label{eq:small-support-prob}
pq(\beta-\alpha-\textstyle\frac12)+\textstyle\frac12 \leq \textstyle\frac12 - \eta,
\end{equation}
or
\begin{align*}
pq\geq \eta/(\textstyle\frac12-(\beta-\alpha)) = \eta/(\eta+\eta^2/8)> 1 - \eta/8.
\end{align*}

Now it remains to bound the conditional value $w$ that is achieved in the $\alpha A$ block. This we can do using the same analysis as above, but substituting $w$ for $\alpha$. Making this substitution in \eqref{eq:small-support-prob} and solving for $w$ we have 
\begin{align*}
w\geq \eta/pq + \beta-\textstyle\frac12&=\eta/pq + \textstyle\frac12-7\eta/8-\eta^2/8\\
&\geq \textstyle\frac12 +\eta/8 - \eta^2/8\\
&= \alpha(1- \eta^2/(4+\eta)) \geq \alpha(1-\eta^2/4).
\end{align*}

We are now in a position to apply Lemma~\ref{lem:dense-subgraph}.  As
stated Lemma~\ref{lem:dense-subgraph} only applies to
Reduction~\ref{red:reduction} and not the present reduction but it can
be verified that it works also in this case (what is needed is that
the size of $B$ is the same, and the presence of an approximate
equilibrium with most mass in $A$ and good value in $A$).  We have $t
= \eta/8$ and $s = \eta^2/4$ which is easily checked to satisfy the
condition $1 - t - 3\sqrt{s}/2 \ge \alpha + (\frac{1}{2}-\eta)$.  Thus we
conclude the existence of a dense bipartite subgraph which, as before,
allows us to reconstruct the hidden clique.

\end{proof}

Note that we have a much smaller gap between the completeness and hardness above than in the other problems we have considered. In particular, we do not claim that finding a $\frac12-\eta$-equilibrium with small support is hard even when an exact equilibrium with small support exists. However, modifying parameters in the above proof, such hardness can be shown for a smaller additive approximation:

\begin{theorem}\label{thm:small-support-1/4}   
  For every $\eta>0$ there exists $C>0$ such that finding a
  $(\frac14-\eta)$-equilibrium with support at most $O(\log n)$ 
  in a two-player game which admits a pure Nash equilibrium is as hard as finding a hidden clique of size $C\log n$ in $G(n,\frac{1}{2})$.
\end{theorem}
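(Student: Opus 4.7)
The approach is to take Reduction~\ref{red:smallsupport} and graft onto it a pure equilibrium gadget analogous to the construction used in the proof of Theorem~\ref{thm:second_equi_formal}. Concretely, add one new strategy $e_*$ to each player so that the augmented payoff matrices look like Reduction~\ref{red:smallsupport} in the ``main'' block and, in the last row/column, have a $(1,1)$ entry at $(e_*,e_*)$ together with cross-payoffs ``row plays $e_*$, column plays in the original game'': row earns some $\lambda\in(0,1)$ and column earns $0$, and symmetrically when the roles are swapped. By construction $(e_*,e_*)$ is a pure Nash equilibrium of the new game $\G'$, so the promise is automatically satisfied.

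The parameter choices mimic those of the preceding small-support hardness theorem (Theorem on p.~\pageref{thm:small-support-1/4}'s predecessor), but rescaled. Roughly I would take $\alpha=\tfrac{1}{4}+\Theta(\eta)$, $\beta=\alpha+(\tfrac14-\eta)-O(\eta^2)$, and $\lambda$ close to (but strictly less than) $1$, with $N_1,N_2$ chosen as before so that Lemmas~\ref{lemma:reconstruct} and~\ref{lem:dense-subgraph} still apply. The reason the threshold drops from $\tfrac12-\eta$ to $\tfrac14-\eta$ is that in the previous soundness analysis both players independently had a deviation of value roughly $1$ (to $R$ or $J-R$), contributing an average of $\tfrac12(1-pq)$ to the incentive bound. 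Once the pure equilibrium gadget is present, it ``absorbs'' half of this contribution (only one of the two deviation directions still generically yields value $1$ while the other collides with the gadget's value $\lambda$), so the corresponding lower bound becomes roughly $\tfrac14(1-pq)$, which is exactly enough to handle approximations strictly better than $\tfrac14$.

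For completeness, in the planted case the uniform distribution on $C\log n$ clique vertices is, as in the previous theorem, a small-support approximate equilibrium with value close to $\alpha$; combined with the pure NE $(e_*,e_*)$, the game satisfies both the promise and has a ``clique-revealing'' small-support equilibrium. For soundness, given any $(\tfrac14-\eta)$-equilibrium $(x,y)$ with support $O(\log n)$, I would first apply the Theorem~\ref{thm:second_equi_formal}-style argument to the gadget coordinate: either both $x_*$ and $y_*$ are close to $1$, or both are bounded away from $1$. In the latter case, conditioning on not playing $e_*$ gives a genuine $O(\eps)$-approximate small-support equilibrium of the Reduction~\ref{red:smallsupport} game with most mass on the $\alpha A$ block and good conditional value there, and the previous theorem's analysis (together with Lemma~\ref{lem:dense-subgraph}) produces a dense bipartite subgraph and hence recovers the hidden clique.

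The main obstacle is precisely the ``close to $(e_*,e_*)$'' case: a pure NE is trivially a small-support approximate equilibrium, so an algorithm could simply output $(e_*,e_*)$ and yield no information about the clique. The plan is to rule this out by choosing $\lambda$ so that any $(\tfrac14-\eta)$-equilibrium $(x,y)$ concentrated on $e_*$ must in fact be within a much smaller distance of $(e_*,e_*)$ than $\eps$ (a sharp strict-equilibrium argument, exploiting that deviating from $e_*$ against $e_*$ strictly decreases the payoff by $1-\lambda$, which we make larger than $\tfrac14-\eta$); we can then detect when the algorithm has returned essentially the pure gadget, strip off the gadget coordinate, and recurse on the remaining game (which, in the planted case, still possesses the clique-based small-support equilibrium, while in the unplanted case possesses none). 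Carrying out this sharp parameter balancing, together with the usual Chernoff-plus-union-bound calculations that make the reduction's random matrices behave, is the bulk of the technical work, but structurally the proof is a straightforward merge of the proofs of Theorem~\ref{thm:second_equi_formal} and the preceding small-support theorem.
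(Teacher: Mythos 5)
The paper does not actually spell out a proof of Theorem~\ref{thm:small-support-1/4}; the sentence immediately preceding it says that the result follows ``modifying parameters in the above proof,'' i.e.\ by re-running the analysis of the $(\frac12-\eta)$ small-support theorem on Reduction~\ref{red:smallsupport} with $\beta$ pushed \emph{below} $\alpha$, so that the uniform distribution over a $C'\log n$-vertex subset of the clique becomes a genuine (well-supported) Nash equilibrium rather than merely a $(\frac12-\eta)$-approximate one. This exact small-support equilibrium is what the theorem's promise ``admits a pure Nash equilibrium'' is referring to (the phrasing is borrowed from the Bayesian application in Section~\ref{sec:bayes}, where that same mixed strategy is realized as a \emph{pure} BNE once the uniform over the clique subset is spread out over $C'\log n$ types). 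There is no gadget.

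Your approach is structurally different, and unfortunately it breaks. If you literally add a strategy $e_*$ making $(e_*,e_*)$ a (strict) pure Nash equilibrium of $\G'$, then the problem the theorem asks you to solve becomes trivial: $(e_*,e_*)$ has support one and is an exact equilibrium, hence also a $(\frac14-\eta)$-equilibrium with support $O(\log n)$, and an algorithm can simply scan the $O(N^2)$ pure profiles, find it, and output it. That answer carries no information about the hidden clique, so there is no reduction. You acknowledge this (``the main obstacle''), but the proposed escape --- detect that the algorithm returned the gadget, strip off $e_*$, and ``recurse on the remaining game'' --- is not a valid reduction: once the gadget is removed, the remaining game is just Reduction~\ref{red:smallsupport}, which does \emph{not} satisfy the theorem's promise (it admits no pure NE and, with your choice $\beta>\alpha$, no small-support exact NE either), so the hypothetical oracle for the promise problem gives you no guarantee on it. The reason the analogous gadget does work in Theorem~\ref{thm:second_equi_formal} is that there the task explicitly asks for a \emph{second} equilibrium at nontrivial statistical distance from the pure one, so outputting $(e_{N+1},e_{N+1})$ is by definition not a solution; here, the task has no such built-in exclusion, and the gadget defeats itself. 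The intended fix is instead to make the clique strategy itself exact by taking $\beta<\alpha$, and to re-balance the remaining soundness inequalities for $\eps=\frac14-\eta$.
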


\comment{Eden}{Is it support $(\log n)/2$ here as well? What is the significance of this factor?}

\newcommand{\Ta}{\Theta}
\newcommand{\De}{\Delta}
\newcommand{\ta}{\theta}
\newcommand{\ra}{\rightarrow}
\newcommand{\cP}{{\mathcal P}}
\newcommand{\cC}{{\mathcal C}}

\section{Computing approximate pure Bayes-Nash equilibrium}
\label{sec:bayes}

Bayesian games model the situation where the players' knowledge of the
world is incomplete.  In this paper we focus on Bayesian games
with two players, but the results generalize to an arbitrary number of
players. More details on Bayesian games can be found in most Game
Theory textbooks, for example in \cite{FudenbergTirole}.

In a \emph{Bayesian game} the payoff of the players depends on the
state of the world in addition to the players' strategies. In a
situation with two players, the row player and the column player, each player is presented with
a signal, called type, about the state of the world $\ta_{\mathrm{row}}\in \Ta_{\mathrm{row}}$
and $\ta_\mathrm{col}\in\Ta_\mathrm{col}$, respectively.  The types are distributed
according to some joint distribution $\cP$ and are not necessarily
independent. The types determine the payoff matrices $M_{\mathrm{row}}(\ta_{\mathrm{row}},\ta_\mathrm{col})$ and $M_\mathrm{col}(\ta_{\mathrm{row}},\ta_\mathrm{col})$. Denote 
the set of rows and columns in this matrix by $S_{\mathrm{row}}$ and $S_\mathrm{col}$, respectively.
 Each player chooses an action $s_{\mathrm{row}}\in S_{\mathrm{row}}$ and $s_\mathrm{col}\in
S_\mathrm{col}$ from their respective set of actions.
The payoff function of the first player is thus $u_{\mathrm{row}}(s_{\mathrm{row}},s_\mathrm{col},\ta_{\mathrm{row}},\ta_\mathrm{col})=M_{\mathrm{row}}(\ta_{\mathrm{row}},\ta_\mathrm{col})_{s_{\mathrm{row}},s_\mathrm{col}}\in
[0,1]$. The payoff function $u_\mathrm{col}$ is defined similarly.
The payoff matrices, that depend on the players' types, as well as the distribution on types is known to the players
ahead of the game. 

A \emph{pure strategy} for the row player  in a Bayesian game is a function (that by a slight abuse of notation) we denote 
by $s_{\mathrm{row}}:\Ta_{\mathrm{row}}\ra S_{\mathrm{row}}$ that for each type $\ta_{\mathrm{row}}$ as observed by row player associates a strategy $s_{\mathrm{row}}(\ta_{\mathrm{row}})$
that the player chooses to execute. A pure strategy $s_\mathrm{col}:\Ta_\mathrm{col}\ra S_{\mathrm{row}}$ is defined similarly. 

Denote by $\cP_{\ta_{\mathrm{row}}}$ the distribution on player column player's types $\ta_\mathrm{col}$ conditioned on the type $\ta_{\mathrm{row}}$ being observed. 
For a pair of pure strategies $(s_{\mathrm{row}},s_\mathrm{col})$ the payoff function of the row player  is given by 
$$
p_{\mathrm{row}}(\ta_{\mathrm{row}}) = \E_{\ta_\mathrm{col}\sim \cP_{\ta_{\mathrm{row}}}} [ u_{\mathrm{row}} ( s_{\mathrm{row}}(\ta_{\mathrm{row}}), s_\mathrm{col}(\ta_\mathrm{col}), \ta_{\mathrm{row}}, \ta_\mathrm{col})]. 
$$

A \emph{pure strategy Nash equilibrium} in a Bayesian game, is a pair of functions $s_{\mathrm{row}}$, $s_\mathrm{col}$ such that for all types
observed, neither player has an incentive to deviate from his current strategy. In other words, for each $\ta_{\mathrm{row}}$, and 
for each $s_{\mathrm{row}}'\in S_{\mathrm{row}}$, 
$$
p_{\mathrm{row}}(\ta_{\mathrm{row}}) \ge \E_{\ta_\mathrm{col}\sim \cP_{\ta_{\mathrm{row}}}} [u_{\mathrm{row}} ( s_{\mathrm{row}}', s_\mathrm{col}(\ta_\mathrm{col}), \ta_{\mathrm{row}}, \ta_\mathrm{col})],
$$
and a similar condition holds for $p_\mathrm{col}$. 

Since a pure Nash equilibrium need not exist in non-Bayesian games, it
need not exist in Bayesian games either. Moreover, while verifying
whether a non-Bayesian two player game has a pure Nash equilibrium is
trivial, verifying whether a pure Bayesian Nash equilibrium exists is
NP-hard \cite{CS08}. Furthermore, as the example in \cite{CS08}
demonstrates, this problem remains hard even when the distribution on
types is uniform and the payoff does not depend on the players' types.

A \emph{pure  $\eps$-Bayesian Nash equilibrium ($\eps$-BNE)} is defined similarly to an $\eps$-Nash equilibrium. 
For each observed type $\ta_{\mathrm{row}}$, the incentive to deviate should be bounded by $\eps$:
$$
p_{\mathrm{row}}(\ta_{\mathrm{row}}) > \E_{\ta_\mathrm{col}\sim \cP_{\ta_{\mathrm{row}}}} [u_{\mathrm{row}} ( s_{\mathrm{row}}', s_\mathrm{col}(\ta_\mathrm{col}), \ta_{\mathrm{row}}, \ta_\mathrm{col})]-\eps.
$$
A similar requirement should hold for the column player. 

We show that for general distributions on types and for some small constant $\eps$, finding a pure $\eps$-BNE 
in games where a pure BNE exists is still NP-hard. On the other hand, we also show that if the distribution on the
players' types is uniform, whenever a pure BNE exists,  a pure $\eps$-BNE can be found in quasi-polynomial time. 

\subsection{General distributions on types}

We show that for some constant $\eps$, determining whether a pure strategy $\eps$-Bayes Nash equilibrium exists is NP-hard. 
Specifically, $\eps = 0.004$ suffices. We prove:

\begin{theorem}
\label{thm:BNEhard}
Let $\eps=0.004$. Then given a Bayesian game that admits a pure BNE,
it is NP-hard to find a pure $\eps$-BNE for the game.  Moreover, it is
NP-hard to solve the promise problem of distinguishing games that
admit a pure BNE from games that do not admit a pure $\eps$-BNE.
\end{theorem}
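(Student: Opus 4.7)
The plan is to prove Theorem~\ref{thm:BNEhard} by a polynomial-time reduction from 3-SAT, building on the Conitzer--Sandholm NP-hardness construction for exact pure BNE~\cite{CS08}. The key refinement is that the reduction must produce a Bayesian game whose YES-case admits a pure BNE while the NO-case admits no pure $\eps$-BNE for $\eps = 0.004$; that automatically gives both the search and the promise forms of the theorem.

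Concretely, given a 3-SAT formula $\phi$ on variables $x_1,\dots,x_n$ and clauses $C_1,\dots,C_m$, I would let the row player be the ``variable'' player, with one type per variable $x_i$, so that a pure strategy is an assignment $a\in\{0,1\}^n$. The column player would be the ``clause'' player, with one type per clause $C_j$, so that a pure strategy picks one literal per clause that the column player claims is satisfied. The joint distribution $\cP$ would put positive mass on the type pair $(x_i,C_j)$ exactly when $x_i$ appears in $C_j$. On each such type pair the payoff gadget is small: if the value of $x_i$ played by the row player agrees with the literal the column player chose for $C_j$, both players receive a reward close to $1$, otherwise the player ``responsible'' for the mismatch (the row player if her assignment falsifies the chosen literal, the column player if he picked a literal his type does not satisfy) receives $0$ while a deviating action would have yielded payoff close to $1$.

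Completeness is then easy: if $\phi$ is satisfiable by $a^\star$, the profile where the row player plays $a^\star$ and the column player maps each clause to some literal $a^\star$ makes true is consistent on every type pair and hence an exact pure BNE. For soundness, suppose $\phi$ is unsatisfiable. Then for any pure profile there exists a clause $C_j$ in which the column player's chosen literal is falsified by the row player's assignment. The construction of $\cP$ will guarantee that the conditional probability of the inconsistent pair $(x_i,C_j)$, given the column player sees $C_j$, is a fixed constant $\pi$ bounded below by a value slightly exceeding the target $\eps = 0.004$. Hence the column player's type $C_j$ has expected gain at least roughly $\pi$ from switching to a literal consistent with the row player's assignment, ruling out any pure $\eps$-BNE. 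A symmetric argument handles cases in which it is more natural to blame the row player.

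The main obstacle is the numerical calibration. The non-uniform distribution $\cP$ must be a valid joint distribution whose marginals match on both sides, it must give $\pi > \eps$ on \emph{every} relevant conditional (so preprocessing $\phi$ to bounded-occurrence 3-SAT is likely needed to keep conditional masses uniform), and the payoff gadget must fit inside $[0,1]$ while simultaneously making consistent play a strict best response in the YES case without eroding the NO-case deviation incentive below $\eps$. The specific constant $\eps = 0.004$ should fall out of fixing explicit numerical values for the ``consistency'' and ``mismatch'' payoffs and for the conditional probability $\pi$ and then verifying the $\eps$-BNE inequalities for both players by case analysis; that direct verification, rather than any conceptual difficulty, is the part of the proof that will require the most bookkeeping.
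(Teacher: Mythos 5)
Your proposal has a genuine gap in the soundness argument, and it is exactly the gap that the paper's construction is designed to close. You claim that if $\phi$ is unsatisfiable, then in any pure profile there is a clause $C_j$ whose chosen literal is falsified by the row player's assignment, and the column player at type $C_j$ ``has expected gain at least roughly $\pi$ from switching to a literal consistent with the row player's assignment.'' But when $\phi$ is unsatisfiable, the falsified clause may have \emph{all} of its literals falsified by the row player's assignment; in that case the column player at type $C_j$ has no consistent literal to deviate to, so his incentive to deviate is zero, not $\pi$. Likewise the row player's incentive to flip a variable is murky: flipping $x_i$ may help $C_j$ while hurting other clauses in which $x_i$ appears, and the net expected gain can easily be negative. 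Your ``symmetric argument handles cases in which it is more natural to blame the row player'' does not resolve this, because in the bad case neither player has a profitable unilateral deviation in the consistency game alone; a ``stuck'' bad profile can be an exact equilibrium of the game you describe. This is why the NP-hardness of pure BNE is not trivial and cannot be proved by a naive consistency gadget.

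The paper's proof supplies precisely the missing mechanism. It reduces from 3-coloring of 4-regular graphs, uses both players' types as vertices (with the joint distribution a mix of random edges and random diagonal pairs), and, crucially, augments each strategy with a second ``heads/tails'' coordinate that triggers a small \emph{zero-sum} subgame whenever the coloring constraint at a type pair is violated. Because the subgame is zero-sum and the rewards are chosen as distinct powers of two ($0.01\cdot 2^{c(\theta_{\mathrm{row}},\theta_\mathrm{col})}$ via a Vizing edge-coloring with $5$ colors, and $0.64$ for diagonal pairs), the ``heads/tails'' deviation incentives $\Delta_{\mathrm{row}}(v)$ and $\Delta_{\mathrm{col}}(v)$ sum to zero over all vertices, and no $\{-1,0,+1\}$-combination of the chosen payoff values can be simultaneously nonzero and smaller than $0.02$ in magnitude. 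Hence if any constraint is violated, at least one player at some type must have a heads/tails deviation gaining at least $0.004$. That antisymmetry argument is the entire point; without some analogue of it (a punishment subgame guaranteeing that a violated constraint always gives \emph{someone} a deviation of at least $\eps$), your reduction does not rule out pure $\eps$-BNE in the NO case.
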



\begin{proof}
We give a reduction from the problem of $3$-coloring $4$-regular
graphs, which is known to be NP-complete \cite{dailey1980uniqueness}.
Let $G=(V,E)$ be a $4$-regular graph with $|V|=n$. The edges of $G$
can be properly colored with $5$ colors using Vizing's algorithm. In
other words, we can compute a coloring $c:E\ra\{1,\ldots,5\}$ such
that for every two edges $e_1,e_2$ incident to the same vertex,
$c(e_1)\neq c(e_2)$.

We now design a Bayesian game such that:
\begin{itemize}
\item 
if $G$ is $3$-colorable, then the game admits a pure BNE;
\item 
if $G$ is not $3$-colorable, then the game admits no pure $\eps$-BNE.
\end{itemize}
The type of each of the players corresponds to a vertex in the graph, thus $\Ta_{\mathrm{row}}=\Ta_\mathrm{col}=V$. 
The distribution on types is such that with probability $4/5$, $(\ta_1,\ta_2)$ is a random edge in $E$;
with probability $1/5$, $(\ta_1,\ta_2)=(\ta,\ta)$ is the same random vertex in $V$. We note that 
in this example there is a high degree of correlation between the types of the two players.

Each player has $6$ strategies available to him: $S_{\mathrm{row}}=S_\mathrm{col}=\{1,2,3\}\times\{0,1\}$. The first 
coordinate of $s_{\mathrm{row}}(v)$ should be thought of as the ``color'' assigned by player $A$ to vertex $v$, 
while the second coordinate is either ``head'' or ``tails'', the use of which will be explained later. 
We denote the first coordinate by $s_{\mathrm{row}}(v)_1 \in \{1,2,3\}$ and the second by $s_{\mathrm{row}}(v)_2\in \{0,1\}$. 

The payoff functions are defined as follows ($s_{\mathrm{row}}$ and $s_\mathrm{col}$ stand for $s_{\mathrm{row}}(\ta_{\mathrm{row}})$ and $s_\mathrm{col}(\ta_\mathrm{col})$, respectively):

\medskip
\begin{tabular}{|c|c|c|c|c|}
\hline
$\ta_{\mathrm{row}} \stackrel{?}{=} \ta_\mathrm{col}$ & $(s_{\mathrm{row}})_1\stackrel{?}{=} (s_\mathrm{col})_1$ & $(s_{\mathrm{row}})_2 \stackrel{?}{=} (s_\mathrm{col})_2$ & $u_{\mathrm{row}}$ & $u_\mathrm{col}$ \\ \hline
yes & yes & & $1$ & $1$\\ \hline
yes & no & yes & $0.64$ & $0$ \\ \hline
yes & no & no & $0$ & $0.64$ \\ \hline
no & no & & $1$ & $1$\\ \hline
no & yes & yes & $0.01\times 2^{c(\ta_{\mathrm{row}},\ta_\mathrm{col})}$ & $0$ \\ \hline
no & yes & no & $0$ & $0.01\times 2^{c(\ta_{\mathrm{row}},\ta_\mathrm{col})}$ \\ \hline
\end{tabular}
\medskip

In other words, the first coordinate of the strategy represents vertex color. The players
are rewarded for producing a consistent $3$-coloring: for the same vertex $\ta$ the colors
$(s_{\mathrm{row}}(\ta))_1$ and $(s_\mathrm{col}(\ta))_1$ should match; for different vertexes connected by an edge
the colors $(s_{\mathrm{row}}(\ta_{\mathrm{row}}))_1$ and $(s_\mathrm{col}(\ta_\mathrm{col}))_1$ should differ. If these conditions are satisfied, 
both players are rewarded with a payoff of $1$. If a certain edge or vertex fails to satisfy the conditions,
the games becomes a zero-sum games that depends on the second coordinate of the players' strategies. 
The payoff can take values of $0.02$, $0.04$, $0.08$, $0.16$, $0.32$ for edge pairs $(\ta_{\mathrm{row}},\ta_\mathrm{col})$ and the 
value of $0.64$ for vertex pairs $(\ta,\ta)$. What makes these values interesting is that no combination 
of these values with coefficients of $-1,0,+1$ adds up to less than $0.02$. 

If the graph $G$ admits a $3$-coloring $\cC:V\ra \{1,2,3\}$, then the pure strategy $s_{\mathrm{row}}(v)=s_\mathrm{col}(v)=(\cC(v),0)$ yields 
the optimal possible payoff of $1$ for both players, and thus is a pure BNE. 

On the other hand, suppose that $G$ is not $3$-colorable. Let $s_{\mathrm{row}}(v)$, $s_\mathrm{col}(v)$ be a set of pure strategies that is an $\eps$-BNE.
Denote the expected payoff functions  under these strategies by $u_{\mathrm{row}}(v)$ and $u_\mathrm{col}(v)$. If 
the row player deviates in her strategy from $(s_{\mathrm{row}}(v)_1,s_{\mathrm{row}}(v)_2)$ to $(s_{\mathrm{row}}(v)_1,1-s_{\mathrm{row}}(v)_2)$ her payoff only changes
for type pairs where the conditions are not satisfied. Let us denote this change by $\De_{\mathrm{row}}(v)$. The contribution
of each edge $(v_{\mathrm{row}},v_\mathrm{col})$ to $\De_{\mathrm{row}}(v_{\mathrm{row}})$ and $\De_\mathrm{col}(v_\mathrm{col})$ cancels out, and hence, summing over all edges,
\begin{equation}
\label{eq:BNE}
\sum_{v\in V} \left(\De_{\mathrm{row}}(v)+\De_\mathrm{col}(v)\right) = 0.
\end{equation}
By the way we designed our payoffs, the values of $\De_{\mathrm{row}}(v)$ and $\De_\mathrm{col}(v)$ are integer multiples of $0.004$. 
Moreover, if at least one edge adjacent to $v$ is not satisfied, then $\De_{\mathrm{row}}(v)$ cannot be equal to $0$. Since
$G$ is not $3$-colorable, it means that $\De_{\mathrm{row}}(v)\neq 0$ for at least one of the vertexes. By \eqref{eq:BNE}
this means that $\De_{\mathrm{row}}(v)>0$ or $\De_\mathrm{col}(v)>0$ and thus $\De_{\mathrm{row}}(v)\ge 0.004$ or $\De_\mathrm{col}(v)\ge 0.004$. This contradicts 
the assumption that $(s_{\mathrm{row}},s_\mathrm{col})$ is an $\eps$-BNE.
\end{proof}

\subsection{Uniform distribution on types}

In this section we show that in the case where the distribution on types is uniform, a pure $\eps$-BNE can be computed 
in quasi-polynomial time. This contrasts with the previously noted fact from \cite{CS08} that computing 
a pure BNE is NP-hard even in this special case. As for other quasi-polynimial time computable approximate equilibria we've considered, whose exact variants are NP-hard, this problem is also as hard as Hidden Clique:

\begin{theorem}  For every $\eta>0$, finding a $(\frac14-\eta)$-approximate pure BNE in a two-player Bayesian games with uniformly
distributed types and in which a pure BNE exists  
 is as hard as finding a hidden clique of size $C\log n$.
\end{theorem}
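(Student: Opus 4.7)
The plan is a black-box reduction from the Small Support Equilibrium hardness of Theorem~\ref{thm:small-support-1/4}. Given an $n$-vertex graph $G$ from $G(n,\tfrac12)$ (possibly planted with a clique of size $C\log n$), first apply that theorem to obtain a two-player game $\G=(M_{\mathrm{row}},M_{\mathrm{col}})$ that admits a pure Nash equilibrium in the completeness case, and from which any $(\tfrac14-\eta)$-equilibrium of support at most $T := O(\log n)$ allows us to recover the planted clique in the soundness case.

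From $\G$ I construct a Bayesian game $\mathcal{B}$ as follows. Let $\Ta_{\mathrm{row}}=\Ta_{\mathrm{col}}=[T]$, equipped with the uniform joint distribution on $\Ta_{\mathrm{row}}\times\Ta_{\mathrm{col}}$ (so the two types are independent and uniform), and declare the payoff matrices to be type-independent, equal to the matrices of $\G$ for every type pair (much as in the NP-hardness construction of \cite{CS08}). For completeness, if $(e_i,e_j)$ is a pure NE of $\G$, then the constant strategies $s_{\mathrm{row}}(\ta)\equiv i$ and $s_{\mathrm{col}}(\ta)\equiv j$ form a pure BNE of $\mathcal{B}$: at every type pair the induced play and payoffs coincide with those of the NE in $\G$, so no type has any incentive to deviate.

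For soundness, given a pure $(\tfrac14-\eta)$-BNE $(s_{\mathrm{row}},s_{\mathrm{col}})$ of $\mathcal{B}$, define mixed strategies of $\G$ by $x_a := |s_{\mathrm{row}}^{-1}(a)|/T$ and $y_b := |s_{\mathrm{col}}^{-1}(b)|/T$; these have support at most $T$. Because the types are independent and payoffs do not depend on them, conditioning on any $\ta_{\mathrm{row}}$ leaves the column player's action distributed exactly as $y$, so the pure $(\tfrac14-\eta)$-BNE deviation condition reads
\[
e_{s_{\mathrm{row}}(\ta)}^{\top} M_{\mathrm{row}}\, y \;\ge\; e_{s'}^{\top} M_{\mathrm{row}}\, y - (\tfrac14-\eta) \qquad \text{for all } \ta\in\Ta_{\mathrm{row}},\ s'\in S_{\mathrm{row}},
\]
and symmetrically for the column player. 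Averaging over $\ta$ under the uniform distribution yields $x^{\top} M_{\mathrm{row}}\, y \ge e_{s'}^{\top} M_{\mathrm{row}}\, y - (\tfrac14-\eta)$ for all $s'$, so $(x,y)$ is a $(\tfrac14-\eta)$-approximate Nash equilibrium of $\G$ with support at most $T=O(\log n)$. Theorem~\ref{thm:small-support-1/4} then reconstructs the planted clique.

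There is no serious obstacle: the entire content is the observation that, for type-independent payoffs and uniform independent types, a pure $\eps$-BNE on $T$ types is at least as restrictive as an $\eps$-Nash equilibrium supported on at most $T$ strategies (the per-type BNE condition is in fact \emph{stronger} than the averaged $\eps$-NE condition), after which Theorem~\ref{thm:small-support-1/4} finishes the job. The only parameter bookkeeping is choosing $T$ to match the support bound produced by that theorem.
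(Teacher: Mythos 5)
Your proposal is correct and takes essentially the same approach as the paper: build a Bayesian game with uniform independent types and type-independent payoffs equal to those of the small-support game from Theorem~\ref{thm:small-support-1/4}, then observe that a pure $\eps$-BNE induces a mixed strategy (the empirical distribution over actions played across types) whose per-type deviation bound is at least as strong as the averaged $\eps$-NE deviation bound. The only cosmetic difference is in the completeness argument, where the paper constructs the pure BNE via a one-to-one mapping from types to clique vertices while you use constant strategies from the pure NE guaranteed by Theorem~\ref{thm:small-support-1/4}; both are valid.
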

\begin{proof} This follows immediately from Theorem~\ref{thm:small-support-1/4}. Let $M_{\mathrm{row}}$ and $M_{\mathrm{col}}$ be as in Reduction~\ref{red:smallsupport} with parameters set in order to achieve the $\frac14-\eta$-hardness of Theorem~\ref{thm:small-support-1/4}. Consider the Bayesian game in which both players have exactly $C'\log n$ types,
\\
\comment{Eden}{Again, is this $\log n/2$?}
\\
 the distribution over types is uniform, and the payoff matrices for each player are always \emph{the same} $M_{\mathrm{row}}$ and $M_{\mathrm{col}}$ as above. Note that when there is a hidden clique, the following is an pure BNE: choose a subset $S$ of $C'\log n$ clique vertices, and let each player play according to a one-to-one mapping from their type set to $S$. On the other hand, it is easy to see that every pair of pure strategies for the Bayesian game corresponds to mixed strategies for the original game, where the incentive to deviate for, say, the row player, is the expected incentive to deviate in the Bayesian game (over all choices of types for the row player). In particular, if a player always has incentive at most $\frac14-\eta$ to deviate in the Bayesian game, their incentive will be at most $\frac14-\eta$ in the original game, and so by Theorem~\ref{thm:small-support-1/4}, we can recover the hidden clique.
\end{proof}

We also note that our result generalizes to the case when the distribution 
on types is a product distribution, i.e. when $\ta_{\mathrm{row}}$ is independent from $\ta_\mathrm{col}$. To simplify the presentation we  
further assume that the type space is of equal size for both players, i.e. $|\Ta_{\mathrm{row}}|=|\Ta_\mathrm{col}|=k$. 
We prove:

\begin{theorem}
\label{thm:BNEeasy} In a two-player Bayesian game,
suppose that the types are distributed uniformly on the space $\Ta_{\mathrm{row}}\times\Ta_\mathrm{col}$, and that $|\Ta_{\mathrm{row}}|=|\Ta_\mathrm{col}|=k$, and $|S_{\mathrm{row}}|=|S_\mathrm{col}|=n$. 
Assuming that a pure BNE exists, we can find a pure $\eps$-BNE in time $n^{O((\log n+\log k)/\eps^2)}$. 
\end{theorem}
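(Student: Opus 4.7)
My plan is to design a quasi-polynomial time algorithm based on sampling and enumeration, in the spirit of Lipton--Markakis--Mehta~\cite{LMM03}. Set $m = \Theta((\log n + \log k)/\eps^2)$. The algorithm first draws two independent uniformly random samples $T \subseteq \Ta_{\mathrm{col}}$ and $T' \subseteq \Ta_{\mathrm{row}}$, each of size $m$. It then enumerates over all $n^{2m} = n^{O((\log n+\log k)/\eps^2)}$ pairs of guessed restriction functions $f_{\mathrm{row}} : T' \to S_{\mathrm{row}}$ and $f_{\mathrm{col}} : T \to S_{\mathrm{col}}$. For each guess, it constructs candidate pure strategies $s_{\mathrm{row}}, s_{\mathrm{col}}$ by the empirical best-response rule: $s_{\mathrm{row}}(\ta_{\mathrm{row}})$ is the action $a$ maximizing $\frac{1}{m}\sum_{\ta_{\mathrm{col}} \in T} M_{\mathrm{row}}(\ta_{\mathrm{row}},\ta_{\mathrm{col}})_{a,f_{\mathrm{col}}(\ta_{\mathrm{col}})}$, and symmetrically for $s_{\mathrm{col}}$ using $(T', f_{\mathrm{row}})$. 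It then checks directly whether $(s_{\mathrm{row}}, s_{\mathrm{col}})$ is a pure $\eps$-BNE and outputs the first pair that passes. The total running time is $n^{O((\log n+\log k)/\eps^2)}$ as required.

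The uniform-distribution hypothesis enters through the Chernoff bound: every relevant conditional expected payoff $\E_{\ta_{\mathrm{col}}}[M_{\mathrm{row}}(\ta_{\mathrm{row}},\ta_{\mathrm{col}})_{a,s_{\mathrm{col}}(\ta_{\mathrm{col}})}]$ is a mean of $k$ equally-weighted values in $[0,1]$, so provided the underlying $s_{\mathrm{col}}$ is independent of $T$, Lemma~\ref{lemma:chernoff} together with a union bound over the $nk$ pairs $(\ta_{\mathrm{row}},a)$ guarantees that the empirical average lies within $\eps/4$ of the true expectation for every such pair simultaneously, with high probability over $T$. Drawing $T$ and $T'$ independently is what makes this independence available. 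Fixing any pure BNE $(s^*_{\mathrm{row}},s^*_{\mathrm{col}})$, the guess $f_{\mathrm{row}} = s^*_{\mathrm{row}}|_{T'}$ together with $f_{\mathrm{col}} = s_{\mathrm{col}}|_T$ (where $s_{\mathrm{col}}$ is the strategy the algorithm derives from $(T', f_{\mathrm{row}})$, which is one of the $n^m$ enumerated options) yields an $s_{\mathrm{col}}$ that depends only on $T'$; the Chernoff argument then upgrades $s_{\mathrm{row}}$ to an $\eps/2$-best response to the \emph{constructed} $s_{\mathrm{col}}$.

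The main obstacle will be the symmetric column condition: showing that $s_{\mathrm{col}}$ is an $\eps$-best response to the constructed $s_{\mathrm{row}}$, not merely to $s^*_{\mathrm{row}}$. A direct repetition of the above argument only yields that $s_{\mathrm{col}}$ is an $\eps/2$-best response to $s^*_{\mathrm{row}}$, and $s_{\mathrm{row}}$ can in principle disagree with $s^*_{\mathrm{row}}$ on enough types to change the column payoffs by more than $\eps$. I expect the cleanest resolution to be to restrict attention to \emph{jointly self-consistent} guesses, with $f_{\mathrm{row}} = s_{\mathrm{row}}|_{T'}$ as well, so that a symmetric Chernoff argument over $T'$ applies on the column side, and then to argue the existence of such a simultaneous fixed point of the two empirical best-response maps among the $n^{2m}$ enumerated pairs by combining the two independent concentrations. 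Alternatively, one may simply pair the constructed $s_{\mathrm{row}}$ with the exact polynomial-time best response to $s_{\mathrm{row}}$ as the column strategy, automatically satisfying the column deviation condition while preserving (up to a constant factor adjustment in $\eps$) the one-sided row guarantee from the previous paragraph. The extension to arbitrary product distributions follows the same template, with Chernoff applied relative to the appropriate marginals.
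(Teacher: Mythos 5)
Your overall architecture (sample a logarithmic number of opponent types, guess the equilibrium strategy restricted to that sample, apply a Chernoff/Hoeffding union bound to control empirical payoffs, enumerate over all $n^{O(m)}$ guesses) is the same one the paper uses. The critical difference is how you get from a correct guess to an actual pair of pure strategies, and that is exactly where your proposal has a genuine gap.

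You correctly diagnose the circularity problem: if $s_{\mathrm{row}}$ is the empirical best response to (the sampled restriction of) $s^*_{\mathrm{col}}$, and $s_{\mathrm{col}}$ is the empirical best response to (the sampled restriction of) $s^*_{\mathrm{row}}$, there is no reason $s_{\mathrm{col}}$ should be a good response to $s_{\mathrm{row}}$ rather than $s^*_{\mathrm{row}}$, and these can differ on a constant fraction of types whenever there are ties or near-ties in the payoffs. Neither of your two proposed fixes closes this gap. Fix (a), restricting to self-consistent guesses $f_{\mathrm{row}} = s_{\mathrm{row}}|_{T'}$, $f_{\mathrm{col}} = s_{\mathrm{col}}|_T$, has two problems: you never establish that such a joint fixed point of the two discrete empirical best-response maps exists among the enumerated pairs (empirical best response is not continuous, so no off-the-shelf fixed-point theorem applies), and even if it did exist, the self-consistency requirement makes $s_{\mathrm{col}}$ depend on $T$ (through $s_{\mathrm{row}}$), destroying the independence that the Chernoff bound needs. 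Fix (b), replacing $s_{\mathrm{col}}$ with the exact best response $\hat{s}_{\mathrm{col}}$ to $s_{\mathrm{row}}$, does not ``preserve the one-sided row guarantee up to constants'': your row guarantee is that $s_{\mathrm{row}}$ responds well to the \emph{constructed} $s_{\mathrm{col}}$, and the row payoffs against $\hat{s}_{\mathrm{col}}$ can differ arbitrarily from the row payoffs against $s_{\mathrm{col}}$. So fix (b) gives you a column deviation bound of $0$ and no bound at all for the row player.

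The paper's resolution is the idea your proposal is missing. Rather than committing to pure strategies by empirical best response, the paper uses the guessed approximate payoff tables $q^{\mathrm{row}}_{ij}$, $q^{\mathrm{col}}_{ij}$ (obtained exactly as you describe, by guessing the opponent's equilibrium action on the sampled types and averaging) to write a \emph{linear program} over mixed strategies $X^{\mathrm{row}}_{ij}, X^{\mathrm{col}}_{ij}$. The LP has two kinds of constraints: (i) a support constraint forcing $X^{\mathrm{row}}_{ij} = 0$ unless $q^{\mathrm{row}}_{ij}$ is within $\Theta(\eps)$ of $\max_{j'} q^{\mathrm{row}}_{ij'}$, and (ii) a consistency constraint forcing the payoffs induced by $X^{\mathrm{col}}$ to lie within $\Theta(\eps)$ of the guessed values $q^{\mathrm{row}}_{ij}$ (and symmetrically). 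The pure BNE $(s^*_{\mathrm{row}}, s^*_{\mathrm{col}})$ is a feasible point, so the LP is solvable, and any LP solution is automatically a well-supported $O(\eps)$-BNE precisely because the support constraint and the consistency constraint are enforced \emph{simultaneously} for both players. This sidesteps the self-reference problem entirely: feasibility is certified by the true BNE, so one never needs to argue that the constructed strategies are best responses to each other from scratch. The LP solution is then rounded to a pure strategy profile by independent sampling per type, with a final Hoeffding argument (and a brute-force fallback when $k$ is too small for concentration). To repair your proposal you would essentially need to reinvent this LP step or an equivalent device that breaks the best-response circularity.
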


\begin{remark}
  The assumption in Theorem~\ref{thm:BNEeasy} can be relaxed to a pure $(\eps/2)$-BNE equilibrium existing (instead of an 
  actual equilibrium).
\end{remark}

\begin{proof}
The proof is similar in spirit to the quasi-polynomial $\eps$-Nash algorithm of \cite{LMM03}, but some additional work 
is needed. We first (approximately) guess the payoffs and the allowed strategies for both players for all possible types. We then use linear programming to produce a (not necessarily pure) $(3\eps/4)$-BNE. We then sample from this approximate non-pure BNE to obtain a
pure $\eps$-BNE. 

For simplicity, denote $\Ta_{\mathrm{row}}=\Ta_\mathrm{col}=\{1,\ldots,k\}$ and $S_{\mathrm{row}}=S_\mathrm{col} = \{1,\ldots,n\}$. Let $s_{\mathrm{row}}:\Ta_{\mathrm{row}}\ra S_{\mathrm{row}}$, $s_\mathrm{col}:\Ta_\mathrm{col}\ra S_\mathrm{col}$ be 
a pair of pure equilibrium strategies (that we assume exist). Let $p^{\mathrm{row}}_{ij}$ and $p^\mathrm{col}_{ij}$ be the corresponding 
 payoff values. In other words, $p^{\mathrm{row}}_{ij}$ is the payoff the row player gets when his type is $i\in\{1,\ldots,k\}$ and he plays strategy $j\in\{1,\ldots,n\}$. The equilibrium assumption is that for all $i$,
 $$
 p^{\mathrm{row}}_{i,s_{\mathrm{row}}(i)} = \max_j p^{\mathrm{row}}_{ij},
 $$
 and a similar condition holds for $p^\mathrm{col}$. We first claim that we can recover all the values of $p^{\mathrm{row}}$ and $p^\mathrm{col}$ within 
 an error of $\eps/4$ with probability $>n^{-O((\log (nk))/\eps^2)}$.
\begin{claim}
\label{cl:BNE1}
There is a polynomial time algorithm that with probability $>n^{-O((\log (nk))/\eps^2)}$ outputs values $q^{\mathrm{row}}_{ij}$, $q^\mathrm{col}_{ij}$ such that for all $i,j$, 
$|q^{\mathrm{row}}_{ij}-p^{\mathrm{row}}_{ij}|<\eps/8$ and $|q^\mathrm{col}_{ij}-p^\mathrm{col}_{ij}|<\eps/8$.
\end{claim}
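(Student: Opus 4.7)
The plan is to follow the sampling approach of Lipton, Markakis, and Mehta~\cite{LMM03}, adapted to the pure-strategy Bayesian setting. Since the type distribution is the uniform product distribution, each equilibrium payoff can be written as
\[
p^{\mathrm{row}}_{ij} \;=\; \frac{1}{k}\sum_{\ell=1}^{k} u_{\mathrm{row}}(j, s_\mathrm{col}(\ell), i, \ell),
\]
that is, the expectation over a uniformly random column type $\ell$ of a bounded quantity that depends on the unknown pure equilibrium strategy $s_\mathrm{col}$ only through the single value $s_\mathrm{col}(\ell)$. The idea is to estimate this expectation by drawing a small random multiset of column types and randomly guessing the value of $s_\mathrm{col}$ at each sampled type; the row player's strategy is handled symmetrically.

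Concretely, set $m = c(\log(nk))/\eps^2$ for a sufficiently large constant $c$. The algorithm draws multisets $T_\mathrm{col}\subseteq\Ta_\mathrm{col}$ and $T_\mathrm{row}\subseteq\Ta_{\mathrm{row}}$ of $m$ uniform samples each (with replacement), and for every $\ell \in T_\mathrm{col}$ (resp.\ $T_\mathrm{row}$) independently picks a uniformly random $\hat s_\mathrm{col}(\ell) \in S_\mathrm{col}$ (resp.\ $\hat s_{\mathrm{row}}(\ell) \in S_{\mathrm{row}}$). It outputs the empirical averages
\[
q^{\mathrm{row}}_{ij} = \frac{1}{m}\sum_{\ell \in T_\mathrm{col}} u_{\mathrm{row}}(j, \hat s_\mathrm{col}(\ell), i, \ell), \qquad q^\mathrm{col}_{ij} = \frac{1}{m}\sum_{\ell \in T_\mathrm{row}} u_\mathrm{col}(\hat s_{\mathrm{row}}(\ell), j, \ell, i),
\]
which clearly takes polynomial time.

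For the analysis, let $E$ be the event that the guessed strategies agree with the fixed equilibrium strategies on the samples, i.e., $\hat s_\mathrm{col}(\ell) = s_\mathrm{col}(\ell)$ for all $\ell \in T_\mathrm{col}$ and analogously for the row player. Since the $\hat s$ values are chosen independently of $T_\mathrm{col}, T_\mathrm{row}$, we have $\Pr[E \mid T_\mathrm{col}, T_\mathrm{row}] = n^{-2m}$ no matter what the samples are; hence $\Pr[E] = n^{-2m} = n^{-O((\log(nk))/\eps^2)}$ and, crucially, conditioning on $E$ does not alter the distribution of $T_\mathrm{col}, T_\mathrm{row}$. Thus conditioned on $E$, each $q^{\mathrm{row}}_{ij}$ is the empirical mean of $m$ i.i.d.\ $[0,1]$-valued random variables with true mean $p^{\mathrm{row}}_{ij}$, and a Hoeffding concentration bound (the standard $[0,1]$-valued extension of Lemma~\ref{lemma:chernoff}) together with a union bound over the $2nk$ estimates gives a uniform $\eps/8$ error bound with probability at least $1/2$, for $c$ sufficiently large. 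Multiplying by $\Pr[E]$ gives an unconditional success probability of at least $\tfrac{1}{2} n^{-O((\log(nk))/\eps^2)}$, matching the claim after absorbing the constant factor into the hidden constant in the exponent. The one step that needs a moment of care is the independence argument showing that conditioning on $E$ leaves the sample distribution untouched; once this is established, the rest is a routine application of concentration and a union bound.
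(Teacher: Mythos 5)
Your proposal is correct and follows essentially the same sampling-and-guessing approach as the paper: draw $m = O(\log(nk)/\eps^2)$ random types of the other player, guess the equilibrium strategy on each, form the empirical average, and apply Hoeffding plus a union bound over the $nk$ pairs $(i,j)$. The only notable difference is that you explicitly justify why conditioning on the guesses being correct leaves the type samples uniformly distributed (so that the conditional estimator is indeed an empirical mean of i.i.d.\ variables), a point the paper treats as implicit.
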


\begin{proof}
We show how to approximate $p^{\mathrm{row}}$ well with probability $>n^{-O((\log (nk))/\eps^2)}$. The claim follows by approximating $p^{\mathrm{row}}$ and $p^\mathrm{col}$ independently.
Pick a subset of $m=\lceil (40\log (nk))/\eps^2\rceil$ of the column player's types $t_1,\ldots,t_m\in\{1,\ldots,k\}$. For each type $t_r$ guess the column player's strategy $s_r$ on this type. With probability $n^{-m}$ we correctly guess all strategies, i.e. $s_r = s_\mathrm{col}(t_r)$ for all $r$. Set 
$$
q^{\mathrm{row}}_{ij} = \frac{1}{m}\sum_{r=1}^{m} u_{\mathrm{row}}(j,s_r,i,t_r).
$$
In other words, we calculate an estimate of the expected payoff $p^{\mathrm{row}}$ using only the value at the types we've guessed. By 
Hoeffding's Inequality, for each $(i,j)$ the probability 
$$
{\mathbf P}[|q^{\mathrm{row}}_{ij}-p^{\mathrm{row}}_{ij}|\ge \eps/8] \le 2\cdot \exp(-m\eps^2/32) < 1/(4nk).
$$
The claim follows by union bound. 
\end{proof}

Allowing for a blow-up of $n^{O((\log (nk))/\eps^2)}$ in running time, we may assume from now on that the correct values 
of $q^{\mathrm{row}}$ and $q^\mathrm{col}$ had been computed (up to an error of $\eps/8$), since in the end we can check whether the set of pure strategies obtained 
is a $\eps$-BNE.

Next, we formulate a linear program that obtains a $(3\eps/4)$-BNE with payoff values close to $q^{\mathrm{row}},q^\mathrm{col}$. The variables 
of the program are $X^{\mathrm{row}}_{ij}$ and $X^\mathrm{col}_{ij}$, where $X^{\mathrm{row}}_{ij}$ corresponds to the probability that the row player plays strategy 
$j$ on type $i$. For each type $i$ let $$M^{\mathrm{row}}_i := \max_j q^{\mathrm{row}}_{ij}$$ be the maximum possible payoff the row player can attain 
on type $i$ under payoffs $q^{\mathrm{row}}_{ij}$. We only allow $X^{\mathrm{row}}_{ij}$ to be non-zero when the corresponding payoff 
$q^{\mathrm{row}}_{ij}> M^{\mathrm{row}}_i - \eps/8$. This guarantees that the solution we obtain is a $(3\eps/4)$-BNE. In addition, we enforce
the payoffs to be close to $q^{\mathrm{row}}_{ij}$. For each $i,j$ we have the constraint
$$
\left| q^{\mathrm{row}}_{ij} - \frac{1}{k}\cdot \sum_{y,z} X^\mathrm{col}_{yz} \cdot u_{\mathrm{row}}(j,z,i,y) \right| < \eps/4,
$$ 
and a similar condition on the $X^{\mathrm{row}}$'s. 
This linear program is feasible since the pure BNE we assumed exists is a solution to it. Denote the 
resulting expected payoffs by $v^{\mathrm{row}}_{ij}$ and $v^\mathrm{col}_{ij}$. 

The solution thus obtained is a $(3\eps/4)$-BNE. To see this, we actually observe that  the equilibrium is well supported: for each type $i$, 
each strategy $j$ with $X^{\mathrm{row}}_{ij}\neq 0$ the payoff $v^{\mathrm{row}}_{ij}\ge M_i^{\mathrm{row}}-3\eps/8 \ge \max_{j'} v^{\mathrm{row}}_{ij'}-5\eps/8$. 

To complete the proof we now obtain a set of pure strategies $s_{\mathrm{row}}':\Ta_{\mathrm{row}}\ra S_{\mathrm{row}}$, $s_\mathrm{col}':\Ta_\mathrm{col}\ra S_\mathrm{col}$ by sampling 
$j=s_{\mathrm{row}}'(i)$ according to the probability distribution $X^{\mathrm{row}}_{ij}$. Assuming $k=\Omega((\log (nk))/\eps^2)$, once
again by Hoeffding's Inequality the payoffs will be  $\eps/8$-close to the payoffs $v^{\mathrm{row}}_{ij}$, and thus 
the resulting game will be in an $\eps$-BNE. 

In the case when  $k=O((\log (nk))/\eps^2)$, i.e. the number of types is small, we can find the exact BNE by brute force, completing 
the proof of the theorem. 
\end{proof}

\bibliographystyle{alpha}
\bibliography{nashvalue}

\end{document}